\documentclass[11pt]{article}
\usepackage{bbm}
\usepackage{url}
\usepackage{pstricks}
\usepackage{yfonts}
\usepackage{stmaryrd}
\usepackage{graphicx}
\usepackage{comment}
\usepackage{tipa}
\usepackage{mathrsfs}
\usepackage{algorithmic}
\usepackage{savesym}
\usepackage{amsmath,amsthm,amsfonts,amssymb}
\savesymbol{iint}
\restoresymbol{TXF}{iint}
\usepackage[letterpaper,margin=1.2in]{geometry}

\theoremstyle{plain}
\newtheorem{theorem}{Theorem}
\newtheorem{corollary}{Corollary}
\newtheorem{lemma}{Lemma}
\newtheorem{proposition}{Proposition}
\theoremstyle{definition}
\newtheorem{defn}{Definition}

\newtheorem{claim}{Claim}

\theoremstyle{remark}

\newtheorem{remark}{Remark}
\newtheorem{example}{Example}
\newcommand{\m}[1]{\xi^{*}(#1,{L,U})}
\newcommand{\mlu}[3]{\xi^*(#1,#2,#3)}
\newcommand{\bv}{b}
\newcommand{\gsp}{group-stra\-te\-gy\-proof }
\newcommand{\bpv}{b'} % vector b prime
\newcommand{\A}{\mathcal{A}}
\newcommand{\ang}[1]{}
\newcommand{\man}[1]{}
\newcommand{\lcm}{Fence Monotonicity }
\newcommand{\lcmd}{Fence Monotonicity. }
\newcommand{\lcmc}{Fence Monotonicity, }
\usepackage{url}

\title{A complete characterization of group-strategy\-proof   mechanisms of cost-sharing}

\author{Emmanouil Pountourakis\thanks{University of Athens, Athens, Greece. Email: \texttt{e.pountourakis@di.uoa.gr}
} \and Angelina Vidali\thanks{Max-Planck-Institut f\"{u}r Informatik, Saarbr\"{u}cken, Germany, Email: \texttt{angelina@mpi-inf.mpg.de}}}
\begin{document}

\maketitle
\begin{abstract}
We study the problem of designing group-strategy\-proof cost-sharing mechanisms. The players report their bids for getting serviced and the mechanism decides which players are going to be serviced and how much each one of them is going to pay. 
We determine three conditions: \emph{Fence Monotonicity}, \emph{Stability} of the allocation and \emph{Validity} of the tie-breaking rule that are necessary and sufficient for group-strategy\-proofness, regardless of the cost function. Fence Monotonicity  puts restrictions only on the payments of the mechanism and stability only on the allocation. Consequently Fence Monotonicity characterizes group-strategy\-proof cost-sharing schemes.
Finally, we use our results to prove that there exist families of cost functions, where any group-strategy\-proof mechanism has unbounded approximation ratio.
\end{abstract}
\section{Introduction}
Algorithmic Mechanism Design~\cite{NR99} is a field of Game Theory, that tries to construct algorithms for allocating resources, that give to the players incentives to report their true interest in receiving a good, a service, or in participating in a given collective activity. The pivotal constraint when designing a mechanism for any problem is that it is truthful. Truthfulness also known as strategy-proofness or incentive compatibility requires that no player can strictly improve her utility by lying, when the values of the other players are fixed. In many settings this single requirement for an algorithm to be truthful restricts the repertoire of possible algorithms dramatically~\cite{Rob79}.

In settings where the repertoire of possible algorithms is not restricted too much by truthfulness, like for example in Cost-sharing problems it is desirable to construct mechanism that are also resistant to manipulation by groups of players. Group-stra\-te\-gy\-proofness naturally generalizes truthfulness by requiring that no group of players can improve their utility by lying, when the values of the other players are fixed. To be more precise there should not exist any group of players who can change their bids in a way that every member of the coalition is at least as happy as in the truthful scenario, and at least one person is happier, for fixed values of the players that do not belong to the coalition.

In this paper we study the following problem: We want to determine a set of $n$ customers/players, who are going to receive a service. Each player reports her willingness to pay for getting serviced and the mechanism decides which players are going to be serviced and the price that each one of them will pay, that is we consider direct revelation mechanisms. We want to characterize all possible mechanisms that satisfy group-strategy\-proofness, we want to find some necessary and sufficient conditions for a mechanism to be strategyproof and also to determine the corresponding payments.

We provide a complete characterization of \gsp mechanisms and cost-sharing schemes, closing a open question posed by Immorlica, Mahdian and Mirrokni~\cite{imm08,agt07} by extending the condition of Semi-cross-montonicity they identified in~\cite{imm08} to a new condition, which we call \lcm and by defining Fencing Mechanisms, a new general framework for designing \gsp mechanisms.

Our results are of special importance for a very important problem, the Cost-Sharing Problem, whose study was initiated in~\cite{m99}, where we additionally have a cost function $C$, such that for each subset of players $S$ the cost for providing service to all the players in $S$ is $C(S)$, however the strength of our results is that they apply for any cost function, since throughout our proof we do not make any assumptions at all about this cost function. We believe that our work here can be the starting point for constructing new interesting classes of mechanisms for specific cost-sharing problems.

Recently Mehta, Roughgarden and Sundararajan~\cite{mrs07} proposed the notion of weak group-strategy-proofness, that relaxes group-strategy\-proofness. It regards a formation of a coalition, as successful, when \emph{each} player who participates in the coalition \emph{strictly} increases her personal utility.  They also introduce acyclic mechanisms, a general framework for designing weakly group-strategy\-proof mechanisms, however the question of determining all possible weakly group-strategy\-proof mechanisms is an important question that remains open. Another alternative notion that is slightly stronger than weak-group-strategy\-proofness and weaker than group-strategy\-proofness was proposed by Bleischwitz, Monien and Schoppmann in~\cite{bms07}.

\section{Our results and related work}
%core
%wgsp
The design of group-stra\-te\-gy\-proof mechanisms for cost-sharing was first discussed by Moulin and Shenker~\cite{m99,ms01}. Moulin defined a condition on the payments called cross-monotonicity, which states that the payment of a serviced player should decrease as the set of serviced players grows. Any mechanism whose payments satisfy cross-monotonicity can be easily turned to a simple mechanism called after Moulin. A Moulin mechanism first checks if all players can be serviced with positive utility and gradually diminishes the set of players that are candidates to be serviced, by throwing away at each step a player that cannot pay to get serviced (and who because of cross-monotonicity also cannot pay in any smaller set of serviced players). In fact if the cost function is sub-modular and 1-budget balanced
then the only possible group-stra\-te\-gy\-proof mechanisms are Moulin mechanisms~\cite{ms01}. The great majority of cost-sharing mechanisms proposed are Moulin mechanisms~\cite{jv01, pt03, kls05, gklrs07}. However recent results showed that for several important cost-sharing games Moulin mechanisms can only achieve a very bad budget balance factor~\cite{imm08,bs07,rs06,klsz05}. Another direction proposed by Moulin and rediscovered in~\cite{bs09,j08} resulting to weakly-group-stra\-te\-gy\-proof mechanisms are Incremental mechanisms where after ordering the players appropriately you ask them one by one if their bid is greater than an appropriate cost-share. Some alternative, very interesting and much more complicated in their description mechanisms that are group-stra\-te\-gy\-proof but not Moulin have been proposed in~\cite{imm08,pv06}, however these do not exhaust the class of group-stra\-te\-gy\-proof mechanisms.

In this work we introduce \emph{Fencing Mechanisms}, a new general framework for designing group-stra\-te\-gy\-proof mechanisms, that generalizes Moulin mechanisms~\cite{m99}. Unfortunately for the general case we do not know if there exists a polynomial-time algorithm that implements these mechanisms.

Finding a complete characterization of the cost-sharing schemes that give rise to  a group-stra\-te\-gy\-pr\-oof  mechanism was a question that was posed in~\cite{imm08,agt07}. The same question was posed in~\cite{mrs07} for weak group-stra\-te\-gy\-proof cost-sharing schemes and still remains open. Many interesting results arose in the attempt to find such a characterization~\cite{pv06,j08}. In contrast to previous characterization attempts that characterized mechanisms satisfying some additional boundary constraints~\cite{imm08,j08} our characterization is complete and succinct. The only complete characterization that was known was for the case of two players~\cite{pv06,j08}. It remains open how can our characterization help for constructing new efficient mechanisms for specific cost-sharing problems or for obtaining lower bounds and we belive that it can significantly enrich the repertoire of mechanisms with good approximation guarantees for specific problems.

In the notion of group-stra\-te\-gy\-proofness it is important to understand that ties play a very important role. This is in contrast to mechanisms that are only required to satisfy strategyproofness, where ties can be in most cases broken arbitrarily (see for example~\cite{nr01}).  An intuitive way to understand this is that a mechanism designer of a group-stra\-te\-gy\-proof
mechanism expects a player to tell a lie in order to help the other players increase their utility, even when she would not gain any profit for herself. This player is at a tie but decides strategically if she should lie or not. Consequently a characterization that assumes a priori a tie-breaking rule, and thus greatly restricts the repertoire of possible mechanisms, like the one in~\cite{imm08,j09} might be useful for specific problems and easier in its statement, but can never capture the very notion group-stra\-te\-gy\-proofness. 

We determine three conditions: \emph{\lcmc} \emph{Stability} of the allocation and \emph{Validity} of the tie-breaking rule that are necessary and sufficient for group-strategy\-proof\-ness, regardless of the cost function and without any additional constraints (like tie-breaking rules used in~\cite{imm08,j08}). \lcm  concerns only the payments of the mechanism, while Stability and Validity of the tie-breaking rule, only the allocation. Consequently \lcm  characterizes cross-monotonic cost-sharing schemes. Having only the payments of a \gsp mechanism is however not enough to determine its allocation. The allocation of a mechanism based on a  cost-sharing scheme that satisfies \lcmc should additional satisfy a condition we call Stability.  Managing to separate the payments from the allocation part of the mechanism and avoiding to add any additional restrictions  in the characterization we propose are undoubtedly its great virtues.

Our proofs are involved and based on set-theoretic arguments and the repeated use of induction. The main difficulty of our work was to identify some necessary and sufficient conditions for group-stra\-te\-gy\-proof payments that are also succinct to describe and add to our understanding of the notion of group-stra\-te\-gy\-proofness. In proving that \lcm is a necessary condition for group-strategyproofness we first have to prove Lemmas that also reveal interesting properties of the allocation part of the mechanism. A novel tool that we introduce is the \emph{harm relation} that generalizes the notion of negative elements defined in~\cite{imm08}.  Proving that Fencing Mechanisms, i.e. mechanisms whose payments satisfy \lcm and whose allocation satisfies Stability and Validity of the tie-breaking rule, are group-stra\-te\-gy\-proof turns out to be rather complicated.

\section{Defining the model}
\subsection{The Mechanism}
Suppose that $\A=\{1,2,\ldots ,n\}$ is a set of players interested in receiving a service. Each of the players has a private type $v_i$, which is her valuation for receiving the service.
\begin{defn}
A cost sharing mechanism $(O,p)$ consists of a pair of functions, $O:\mathbb{R}^n\rightarrow 2^{A}$ that associates with each bid vector $\bv$ the set of serviced players and $p:\mathbb{R}^n\rightarrow \mathbb{R}^n$ that associates with each bid vector $\bv$ a vector $p(\bv)=p(b_1,\ldots,b_n)$, where the $i$-th coordinate is the payment of player $i$.
\end{defn}

Each player wants to maximize her utility, which assuming quasi-linear utilities is $v_i a_i-p_i(\bv)$ where $a_i=1$ if $i\in O(b)$ and $a_i=0$ if $i\notin O(b)$.

As is common in the literature and in order not to come up with useless mechanisms, we concentrate on mechanisms that satisfy the following very simple conditions~\cite{m99}:
\begin{itemize}
\item \emph{Voluntary Participation (VP)}: A player that is not serviced is not charged ($i\notin O(\bv)\Rightarrow p_i(\bv)=0$) and a serviced player is never charged more than his bid ($i\in O(\bv)\Rightarrow p_i(\bv)\leq b_i$).
\item \emph{No Positive Transfer (NPT)}: The payment of each player $i$ is positive ($p_i(\bv)\geq 0$ for all $i$).
\item \emph{Consumer Sovereignty (CS)}: For each player $i$ there exists a value $b_i^*\in \mathbb{R}$ such that if he bids $b_i^*$ then it is guaranteed that $i$ will receive the service no matter what the other players bid.
\end{itemize}

Obviously, VP implies that if an player is truthful, then her utility is lower bounded by zero. Moreover, VP and NPT imply that if a player announces a negative amount, then she will not be included in the outcome. While, negative bids are not realistic, the latter may be used to model, the denial of revealing
any information to the mechanism. Finally, notice that the crucial value $\bv_i^*$, in the definition of CS is independent of the bid vector. Thus, this value has to be greater or equal to any possible payment this player is charged, i.e. $b^*_i> p_i(\bv)$, for all $\bv\in\mathbb{R}^n$.

\begin{defn} We say that a cost-sharing mechanism is
\emph{group-strategyproof (GSP)} if and only if the following holds:
For every two valuation vectors $v, v'$ and every $S \subseteq A$, sat\-
isfying $v_i = v'_i$ for all $i \in S$, one of the following is true:

\emph{(a)} There is some $i \in S$, such that $v'_i a_i - p_i (v' ) < v_i a_i -
p_i (v)$, or

\emph{(b)} for all $i \in S$, it holds that $v_i a'_i - p_i (v' ) = v_i a_i - p_i (v)$.
\end{defn}

In other words, a GSP mechanism does not allow success\-ful coalitions of the players, i.e. that a group of a players announces a false value, instead of their true valuations and
moreover no ``liar" sacrifices her utility, while at least one
player (not necessarily a liar) strictly profits after the manipulation.

\begin{defn}
A cost-sharing scheme is a function $\xi :\A\times 2^{\A}\rightarrow \mathbb{R}^+\cup \{0\}$, such that, for every $S\subset \A$ and every $i\notin S$ we have $\xi(i,S)=0$.
\end{defn}
You can think of $\xi(i,S)$ as the payment of player $i$ if the serviced set is $S$. In fact it can be shown that in any group-stra\-te\-gy\-proof mechanism for our setting the payment of a player depends only on the allocation of the mechanism and not directly on the bids of the players. In this sense we do not restrict the mechanism in any way by assuming that the payments are given by a cost-sharing scheme $\xi$.

\subsection{The cost function and budget balance}
The cost of providing service service is given by a cost function $C:2^{\A}\rightarrow \mathbb{R}^+\cup \{0\}$, where $C(S)$ specifies the cost of providing service to all players in $S$.

A desirable property of cost-sharing mechanisms is budget balance. We say that
a mechanism is \emph{$\alpha$-budget balanced}, where $0\leq \alpha\leq 1$, if for all bid vectors $b$ it holds, that $ \alpha \cdot C(O(\bv))\leq \sum_{i\in S}\xi(i,S)\leq C(O(\bv))$.

We chose to define the cost function last in order to stress that our results are completely independent of the cost function and apply to any cost-sharing problem.

\subsection{Cross- and Semi-cross-mo\-no\-to\-ni\-ci\-ty}

We say that a cost-sharing scheme is \emph{cross-monotonic}~\cite{ms01} if $\xi(i,S)\geq \xi(i,T)$ for every $S\subset T\subseteq \A$ and every player $i\in S$.

In the attempt to provide a characterization of GSP cost-sharing schemes Immorlica Mahdian and Mirrokni~\cite{imm08} provided a partial characterization and identified semi-cross monotonicity an important condition that should be satisfied by any GSP cost-sharing scheme.
A cost sharing scheme $\xi$ is \emph{semi-cross monotonic} if for every $S\subseteq A$ an player $i\in S$ all $j\in S\setminus\{i\}$ : either
$\xi(j,S\setminus\{i\})\leq\xi(j,S)$ or $\xi(j,S\setminus\{i\})\geq\xi(j,S)$. Notice that every cross monotonic cost sharing scheme is also semi-cross monotonic, since the second or condition is always true, however the converse does not hold.

As we later show in Proposition~\ref{prop:neg-neu} (a) Semi-cross-mo\-no\-to\-ni\-ci\-ty can be almost directly derived from the condition of \lcm we define in this work and more specifically from part (a) of \lcmd 

\section{Our Characterization}
\subsection{\lcm }
\lcm  considers each time a restriction of the mechanism that can only output as the serviced set, subsets of $U$ that contain all players in $L$. To be more formal consider all possible subsets of the players $L,U$ such that $L\subseteq U\subseteq \mathcal{A}$. Fixing a pair $L\subseteq U$ \lcm considers only sets of players $S$ with $L\subseteq S\subseteq U$.

Let $\m{i}$ be the minimum payment of player $i$, for getting serviced when the output of the mechanism is between $L$ and $U$, i.e. $\m{i}:=\min_{\{L\subseteq S\subseteq U,i\in S\}}\xi(i,S)$.

\begin{defn}[Fence Monotonicity]
We will say that a cost-sharing scheme satisfies \emph{\lcm } if it satisfies the following three conditions:
\begin{itemize}
\item[(a)] There exists at least one  set  $S$ with $L\subseteq S\subseteq U$, such that for all $i\in S$ we have $\xi(i,S)=\m{i}$.

\item[(b)] For each player $i\in U\setminus L$ there exists at least one set $S_i$, with $L\subseteq S_i\subseteq U$, such that for all $j\in S_i\setminus L$, we have $\xi(j,S_i)=\m{j}$.

(Note that $i \in S_i\setminus L$ and thus $\xi(i,S_i)=\m{i}$. Also note that we might have $S_i\neq S_j$ for $i\neq j$.)

\item[(c)] If there exists a set $C\subset U$, such that for some player $i$ we have $i\in C$ and $\xi(i,C)<\m{i}$ (obviously $L\nsubseteq C$), then there exists at least one set $T\neq \emptyset$, with $T\subseteq L\setminus C$ such that for all $j\in T$, $\xi(j,C\cup T)=\m{j}$.
\end{itemize}
\end{defn}

The first condition says that if we necessarily have to service the players in $L$, there exists a superset $S$ of $L$, such that if all players in $S$ such that the serviced players achieve their lowest possible (non-zero) payment $\m{i}$. As we show in Proposition~\ref{prop:neg-neu}(a), this condition generalizes the condition of semi-cross-monotonicity, which was identified as necessary for group-strategyproofness in~\cite{imm08} and in this sense our work completes the partial characterization obtained in~\cite{imm08}.

The second condition says then for each player $i\in U\setminus L$ there exists an outcome $S_i$, such that $i\in S_i$, and such that all players in $S_i\setminus L$ are served with their lowest possible payment. Loosely speaking this gives a way to enlarge $L$ by adding to it more players in a way that is optimal for the players that we add. Note that the cost of the players already in $L$ might however increase, which in fact relaxes cross-monotonicity, in the sense that if we further restrict the second condition to hold for every $j \in S_i$, then the underlying cost sharing schemes are cross-monotonic.

The third condition compares the minimum possible non-zero payment of each player in this restriction of the mechanism, with his minimum possible non-zero payment when the outcome can be any subset of $U$ (i.e. the the output should not necessarily contain the players in $L$). If the first payment is bigger then this means that some of the players in $L$ are responsible for this higher payment and ``harm'' the player.   Very loosely speaking condition (c) says that
at least one non-empty subset of $L\setminus C$, does not get ``harmed"
by a coalition that restricts the outcome to be a subset of $L\cup C$ (we remove all the players in $U\setminus (L\cup C)$ from $U$) and contain every player in $C$ (we add the players in $C\setminus L$ to $L$). The intuition of the third condition will become more clear when we show some important allocation properties of GSP mechanisms.

\begin{theorem}\label{theo:gsp=>bsm}
A cost sharing scheme gives rise to a group-strategyproof mechanism if and only if it satisfies \lcmd
\end{theorem}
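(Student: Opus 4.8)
The plan is to prove the two directions separately, treating the payments $\xi$ as given and asking, for the backward direction, only that \emph{some} \gsp allocation be compatible with them. For necessity I would start from a \gsp mechanism $(O,p)$ and first invoke the fact (stated above and provable in this setting) that payments depend only on the serviced set, so $p_i(b)=\xi(i,O(b))$. Using Consumer Sovereignty I would fix, for any pair $L\subseteq U$, bids that force the output into the window $L\subseteq O(b)\subseteq U$: players of $L$ bid their guaranteeing values $b_i^*$ and players outside $U$ bid negatively. I would then introduce the \emph{harm relation}, recording when the presence of one served player pushes another's payment above its in-window minimum $\m{i}$, and prove a few allocation lemmas about which players get dropped when a coalition lowers its bids. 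Condition (a) would follow by setting each $i\in U\setminus L$ to bid just above $\m{i}$: Voluntary Participation caps each served payment at the bid while $\xi(i,S)\ge\m{i}$ holds by definition, so letting the slack tend to zero forces a set realizing all minima at once. Condition (b) is the same construction with player $i$'s bid inflated to guarantee inclusion, producing $S_i$. Condition (c) is where the harm relation does the work: if some $C\subset U$ lets a player pay below $\m{i}$, a coalition that restricts the output to $C\cup T$ and compares payments must, by group-strategyproofness, leave at least one nonempty $T\subseteq L\setminus C$ served at its minimum, as otherwise that coalition would strictly profit.

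For sufficiency I would exhibit one \emph{Fencing Mechanism}: given a bid vector, take the window between the players who certainly cannot afford service and those who certainly can, use condition (a) to pick an output in which every served player pays exactly its in-window minimum, and use condition (b) to admit any further player who can still afford its minimal share. After checking that the allocation is well defined, I would verify the \gsp property directly. Given a coalition $S$ deviating from $v$ to $v'$, I would compare $O(v)$ and $O(v')$ and show, using \lcmc any increase in a coalition member's payment forced by the deviation is accompanied by an allocation change that cannot strictly help one liar without strictly hurting another; condition (c) and the harm relation govern the delicate case in which dropping a player reroutes costs onto those who remain.

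The main obstacle is this last verification. Bounding how payments rise for the survivors when a coalition forces players out, and ruling out any chain of mutually beneficial deviations, is exactly what \lcm was designed to control; but assembling these local guarantees into a global \gsp argument requires a careful induction on the size of the serviced set and on the structure of the harm relation, and this is the technically heaviest part of the argument.
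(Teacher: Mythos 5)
Your overall architecture matches the paper's (necessity via coalition and allocation arguments built on a harm relation; sufficiency by exhibiting a Fencing-type mechanism), but two of your steps have genuine gaps. First, on necessity, your argument for condition (a) does not work as stated: Voluntary Participation only caps the payments of the players who bid near their minima, i.e.\ those in $U\setminus L$, so your ``let the slack tend to zero'' argument yields a set $S$ in which every \emph{serviced player of $U\setminus L$} pays $\m{i}$ --- but condition (a) quantifies over \emph{all} $i\in S$, including the players in $L$, whose bids $b_i^*$ are far above their minima and are therefore not constrained by VP at all. The paper closes exactly this gap with Lemma~\ref{lem:a-milder} and Lemma~\ref{lem:a-alloc}: if some $j\in L$ were charged above $\m{j}$, she colludes with the indifferent players of $U\setminus L$ to enforce a set $S_j$ realizing her minimum. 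Crucially, Lemma~\ref{lem:a-milder} (and the very fact that the harm relation is a strict partial order, Claim~\ref{cla:harm}) is proved from the hypothesis that \lcm already holds at the pairs $L\cup\{k\},U$; that is, the entire necessity proof is an induction on $|U\setminus L|$, a structure your sketch never sets up. The same issue recurs for condition (c), where the paper needs allocation properties over two whole families of bid vectors (Lemmas~\ref{lem:char1} and~\ref{lem:char2}, proved by induction on the number of altered coordinates), not the single coalition comparison you describe.

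Second, on sufficiency, the phrase ``after checking that the allocation is well defined'' conceals what is in fact the hardest step of the whole theorem. A Fencing Mechanism needs a \emph{stable pair} at every input, and stability includes a third condition on sets $R\subseteq \A\setminus U$ that your ``window between those who can and cannot afford service'' omits --- yet it is precisely this condition that the GSP verification consumes (via Lemmas~\ref{lem:uncov2} and~\ref{lem:cov3}). More importantly, existence of a stable pair is not a routine check: the paper proves it (Lemma~\ref{lem:exists}) by a bootstrap, first establishing that the mechanism is GSP on those inputs that happen to possess a stable pair, and then using that GSP property inside an induction on the number of coordinates below $b_i^*$ (with Claims~\ref{claim:inclusion*}--\ref{claim:notgsp}) to show that \emph{every} input has one. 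Without this two-stage argument your mechanism may simply be undefined on some bid vectors, and the sufficiency direction collapses. Note also that your plan misallocates the difficulty: the direct GSP verification at inputs with stable pairs is comparatively short and uses no induction on the serviced set; the induction lives entirely in the existence proof.
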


\subsubsection{Examples of Mechanisms that violate just one part of \lcm and are not GSP}

We will give three representative examples to illustrate, why a cost sharing scheme, which does not satisfy \lcmc cannot give rise to a group-stra\-te\-gy\-proof mechanism. We chose our examples in a way that  only one condition of \lcm is violated and only at a specific pair $L,U$. (in fact in condition (c), the violation is present at two pairs, however it can be shown that this is unavoidable.)

\begin{example}[a]\label{ex:a}

Let $\A=\{1,2,3,4\}$. We construct a cost sharing scheme, such that condition (a) of \lcm is not satisfied at $L=\{1,2\}$ and $U=\{1,2,3,4\}$, as follows.

$$\begin{array}{|c||c|c|c|c|}
\hline
\xi & 1 & 2& 3 & 4 \\\hline\hline
\mathbf{\{1,2,3,4\}} & \mathbf{30}& \mathbf{30} & \mathbf{30} & \mathbf{30}\\\hline
\mathbf{\{1,2,3\}}&\mathbf{20}&\mathbf{30}& \mathbf{30} &\mathbf{-}\\\hline
 \mathbf{\{1,2,4\}}&\mathbf{30}&\mathbf{20}&\mathbf{-}&\mathbf{30}\\\hline
\{1,3,4\}& 30 & -& 20 &30 \\\hline
\{2,3,4\}&- & 30&30& 20\\\hline
\end{array}
\hspace{5pt}
\begin{array}{|c||c|c|c|c|}
\hline
\xi & 1 & 2& 3 & 4 \\\hline\hline
 \mathbf{\{1,2\}}& \mathbf{30} & \mathbf{30} &\mathbf{-}&\mathbf{-}\\\hline
\{1,3\}& 20 & - &30&-\\\hline
\{1,4\} & 30& 30 & - & -\\\hline
\{2,3\}&- & 30&30&-\\\hline
\{2,4\}&- & 20&-&30\\\hline
\end{array}
\hspace{5pt}
\begin{array}{|c||c|c|c|c|}
\hline
\xi & 1 & 2& 3 & 4 \\\hline\hline
\{3,4\}& - & -&30&30\\\hline
\{1\}& 30 & -& -&-\\\hline
\{2\}& - & 30 &-&-\\\hline
\{3\} & -& - & 30 & -\\\hline
\{4\}& - & -& - &30 \\\hline
\end{array}
\hspace{5pt}
$$

Consider the bid vector $b: = (b^*1, b^*_2, 30, 30).$ Notice that players 3 and 4 are indifferent to being serviced or not, as the single value they may be charged as payment equals their bid. Moreover, notice that either player 1 or player 2 (or both) must pay 30 strictly over their minimum payment 20 under
this restriction. Without loss of generality assume that $\xi(i,O(b)) = 30$. Consider the bid vector $b' := (b^*_1 , b^*_2 , b^*_3, -1)$. By VP and CS it
holds that $O(b') = \{1, 2, 3\}$ and thus $\xi(1,O(b')) < \xi(1,O(b))$.
Notice that the utilities of players $3$ and $4$ remain zero, and
consequently $\{1, 3, 4\}$ form a successful coalition.
In a similar manner we prove the  existence of successful coalition when $\xi(2,O(b)) = 30$.

\end{example}

\begin{example}[b]\label{ex:b}

Let $\A=\{1,2,3,4\}$. We construct a cost sharing scheme, such that condition (b) of \lcm is not satisfied at $L=\{1,2\}$ and $U=\{1,2,3,4\}$ for player $3$, as follows.

$$\begin{array}{|c||c|c|c|c|}
\hline
\xi & 1 & 2& 3 & 4 \\\hline\hline
\mathbf{\{1,2,3,4\}} &  \mathbf{30}&  \mathbf{30} &  \mathbf{30} &  \mathbf{30}\\\hline
 \mathbf{\{1,2,3\}}& \mathbf{30}&  \mathbf{30}&  \mathbf{40} &-\\\hline
 \mathbf{\{1,2,4\}}& \mathbf{30}& \mathbf{30}&-& \mathbf{20}\\\hline
\{1,3,4\}& 30 & -& 30 &30 \\\hline
\{2,3,4\}&- & 30&30& 30\\\hline
\end{array}
\hspace{5pt}
\begin{array}{|c||c|c|c|c|}
\hline
\xi & 1 & 2& 3 & 4 \\\hline\hline
 \mathbf{\{1,2\}}&  \mathbf{30} &  \mathbf{30} &-&-\\\hline
\{1,3\}& 30 & - &30&-\\\hline
\{1,4\} & 30& - & - & 30\\\hline
\{2,3\}&- & 30&30&-\\\hline
\{2,4\}&- & 30&-&30\\\hline
\end{array}
\hspace{5pt}
\begin{array}{|c||c|c|c|c|}
\hline
\xi & 1 & 2& 3 & 4 \\\hline\hline
\{3,4\}& - & -&30&30\\\hline
\{1\}& 30 & -& -&-\\\hline
\{2\}& - & 30 &-&-\\\hline
\{3\} & -& - & 30 & -\\\hline
\{4\}& - & -& - &30 \\\hline
\end{array}
$$

Consider the bid vector $b^3 := (b^*_1, b^*_2, 35, b^*_4 )$. Strategy\-proof\-ness implies that $3 \in O(b^3)$, since otherwise if she is not
serviced (zero utility), she can misreport $b_3^*$
changing the outcome to $\{1, 2, 3, 4\}$ and increasing her util\-ity to $35 -30>0$.

Next, consider the bid vector $b^4 :=
(b^*_1 ; b^*_2, 30, 25)$.
Assume that $4 \in O(b^4)$. Moreover, notice that by VP
it is impossible that $3 \in O(b^4)$. Thus, $\{3, 4\}$ can form a
successful coalition bidding $b' = (b^*_1, b^*_2,-1,b^*_4 )$, changing
the outcome to $\{1, 2, 4\}$ increasing the utility of player $4$ to
$25-20>0$, while keeping the utility of player $3$ at zero.

Finally, consider the bid vector $b^{3,4} := (b^*_1, b^*_2 , 35, 25)$.
Notice that the $b^{3,4}$ differs with $b^3$ and $b^4$ in the coordinates
that correspond to players $4$ and $3$ respectively. Like in the
case of $b^3$ the only possible outcomes by VP and CS at $b^{3,4}$
are $\{1, 2, 4\}$ and $\{1, 2\}$.

Assume that player 4 is serviced at $b^{3,4}$, which implies
that $\xi(4,O(b^{3,4})) < \xi(4,O(b^3))$. This contradicts strategy\-proofness, since when the true values
are $b^3$, player $4$ can bid according to $b^{3,4}$ in order to decrease
her payment and still being serviced.

Now, assume that player $4$ is not serviced at $b^{3,4}$. Then
$\{3, 4\}$ can form a successful coalition when true values are
$b^{3,4}$ bidding $b^4$, increasing the utility of player $4$ to $25-20>0$, while keeping the utility of player 3 at zero.

\end{example}

\begin{example}[c]

Let $\A=\{1,2,3,4\}$. This time we construct a cost sharing scheme, such that part (c) is not satisfied for $L=\{1,2\}$ (or $\{1,2,3\}$)
and $U=\{1,2,3,4\}$ and specifically for $C=\{3,4\}$ and $j=3$.

$$\begin{array}{|c||c|c|c|c|}
\hline
\xi & 1 & 2& 3 & 4 \\\hline\hline
\mathbf{\{1,2,3,4\}} & \mathbf{30}& \mathbf{30} & \mathbf{30} & \mathbf{30}\\\hline
\mathbf{\{1,2,3\}}&\mathbf{20}&\mathbf{20}& \mathbf{30} &\mathbf{-}\\\hline
 \mathbf{\{1,2,4\}}&\mathbf{30}&\mathbf{30}&\mathbf{-}&\mathbf{30}\\\hline
\{1,3,4\}& 30 & -& 30 &30 \\\hline
\{2,3,4\}&- & 30&30& 30\\\hline
\end{array}
\hspace{5pt}
\begin{array}{|c||c|c|c|c|}
\hline
\xi & 1 & 2& 3 & 4 \\\hline\hline
 \mathbf{\{1,2\}}& \mathbf{20} & \mathbf{20} &\mathbf{-}&\mathbf{-}\\\hline
\{1,3\}& 20 & - &20&-\\\hline
\{1,4\} & 30& 30 & - & -\\\hline
\{2,3\}&- & 20&20&-\\\hline
\{2,4\}&- & 30&-&30\\\hline
\end{array}
\hspace{5pt}
\begin{array}{|c||c|c|c|c|}
\hline
\xi & 1 & 2& 3 & 4 \\\hline\hline
\{3,4\}& - & -&\mathit{20}&30\\\hline
\{1\}& 30 & -& -&-\\\hline
\{2\}& - & 30 &-&-\\\hline
\{3\} & -& - & 30 & -\\\hline
\{4\}& - & -& - &30 \\\hline
\end{array}
\hspace{5pt}
$$
Suppose that the values are $b := (25, 25, b^*_3 ,30)$. The
only feasible by VP outcomes are  $\{1, 2, 3\}, \{1, 3\}$, $\{2,3\}$, $\{3, 4\}$ and $\{3\}$. Notice that player $4$ has zero utility regardless of
the outcome.

Assume that $O(b) \neq \{1, 2, 3\}$ and w.l.o.g that $1 \notin  O(b)$.
Then $\{1, 2, 4\}$ can form a successful coalition bidding $b' :=
(b^*_1, b^*_2, b^*_3, -1)$ increasing the utility of player $1$ to $25>0$ without
 decreasing the utility of player 2 (either $O(b) = \{2, 3\}$ and $\xi(2,O(b')) = \xi(2,O(b))$ and her utility remains the same utility or
$2\notin O(b)$ and her utility increases to $25>20$, like in the case of player $1$) and keeping the utility of player 4 at zero.

Finally, assume that $O(b) = \{1, 2, 3\}$. Then $\{3, 4\}$ can
form a successful coalition bidding $b'':= (25, 25, b^*_3, b^*_4 )$.
Obviously $\{3, 4\} \subseteq  O(b'')$. Notice that  VP excludes each of the following outcomes: $\{1, 3, 4\},$ $\{2, 3, 4\}$ and $\{1, 2, 3, 4\}$. As a
result, $O(b'') = \{3, 4\}$ implying that the utility of player 3
increases, as $\xi(3,O(b)) > \xi(3,O(b''))$, while player 4 keeps
her utility at zero.
\end{example}

If  you are given a cross-monotone cost sharing scheme it is rather straightforward how to construct a Moulin mechanism.
However if you are given a cost sharing scheme that satisfies \lcm it is not straightforward how to construct a GSP mechanism. \lcm should be coupled with an allocation rule that satisfies a simple property, which we call stability and a valid tie-breaking rule in order for the mechanisms to be GSP.

\subsection{Fencing Mechanisms}
Given a cost sharing scheme $\xi$ that satisfies \lcm we construct a mechanism, that uses $\xi$ as payment function and satisfies group-stra\-te\-gy\-proofness.

The mechanism takes as input the bids of the players and determines a pair of sets $L,U$ where $L\subseteq U$, where $L$ is the set of players that are going to be serviced by the mechanism with strictly positive utility. On the other hand $U\setminus L$ is the set of players that are indifferent between getting serviced or not, because their bid equals their payment, and we use a tie-breaking policy to determine which of these players will get serviced. The existence of a tie-breaking policy that does not violate group-stra\-te\-gy\-proofness is guaranteed by part (a) of \lcmd The intuition behind the tie-breaking rule is that it is optimal for the players in $L$, in the sense that from all subsets of $U$ we choose to serve the one where the players in $L$ achieve their minimum payments provided that they all get serviced.

The mechanisms we design can be put in the following general framework: Given a bid vector as input, we search for a certain pair of sets $L,U$, where $L\subseteq U\subseteq A$ that meet the criteria of stability we define below and then we choose one of the allocations that service the players in $L$ according to a valid tie-breaking rule. If the search is exhaustive the resulting algorithm is exponential time and we do not know any polynomial-time algorithm. However if we restrict our attention to payments that satisfy certain conditions like for example cross-monotonicity we can come up with a polynomial-time algorithm for finding a stable pair.
\begin{defn}[Stability]
A pair $L,U$ is \emph{stable} at $\bv$ if the following conditions are true:
\begin{enumerate}
\item For all $i\in L$, $b_i>\m{i}$,
\item for all $i\in U\setminus L$, $b_i=\m{i}$ and
\item for all $R\subseteq A\setminus U$, there is some $i\in R$, such that
$b_i<\mlu{i}{L}{U\cup R}$.
\end{enumerate}
\end{defn}

\begin{remark}
Assume that $\xi$ is Cross-monotonic and let $S$ be the output of Moulin mechanism for some bid vector $\bv$. Then, the pair $L,U$, where $L=\{i\in S\mid b_i>\xi(i,S)\}$ and $U=S$,
is the unique stable pair at $\bv$.
\end{remark}

After identifying a stable pair these mechanisms output a set $S$, where $L\subseteq S\subseteq U$ given by a tie-breaking function.

\begin{defn}
The mapping $\sigma:2^\A\times 2^\A \times \mathbb{R}^n \rightarrow 2^{\A}$ is a \emph{valid} tie-breaking rule for $\xi$,
if for all $b$ and $L\subseteq U\subseteq A$, such that $L,U$ is stable at $b$, for the  set $S=\sigma(L,U,b)$ it holds that $L\subseteq S\subseteq U$ and for all $i\in S$, $\xi(i,S)=\m{i}$. Part (a) of \lcm guarantees that there exists at least one output $S$ with this property and each different such output gives a different tie-breaking rule.
\end{defn}

\begin{defn}
We will say that a mechanism is a \emph{Fencing Mechanism} if at input $\bv$ it finds a stable pair $L,U$ at $b$, outputs $\sigma(L,U,b)$, where $\sigma$ is a valid tie-breaking rule and charges each player $i$,  the value $\xi(i,\sigma(L,U,b))$.
\end{defn}

It is easy to verify that every Fencing Mechanism satisfies VP, because the players that get serviced belong to the set $U$ and as the mechanism satisfies stability these players have non-negative utility, and CS, because if a player bids higher than any of his payments, then again by stability he belongs to the set $L$ and gets serviced.

\begin{remark}
Assume that for two distinct bid vectors $b$ and $b'$, the pair $L,U$ is stable. Then for a Valid tie-breaking rule it may hold that   $\sigma(L,U,b)\neq\sigma(L,U,b')$, which implies that the outcome of a Fencing mechanism may change, though the utilities of the players remain unchanged. If we are not interested in considering the whole class of GSP allocations, that arise from cost sharing scheme that satisfies \lcmc we can assume that the tie-breaking rule depends only on the sets $L$ and $U$.
\end{remark}

\begin{remark}
Moulin mechanisms are GSP and conseque\-nt\-ly they can be viewed as special case of the general framework of Fencing Mechanisms. In Moulin mechanisms $\xi$ is cross-monotonic and consequently the bigger the set of serviced players, the lower is the cost for each one of them. Thus it holds that for all $L\subseteq U\subseteq \A$, that for all $i\in U$, $\xi(i,U)=\m{i}$ and the mapping $\sigma(L,U,b)=U$ for all $L\subseteq U\subseteq \A$ and all $b$ such that $L,U$ is stable, is a valid tie-breaking rule. This simplifies the algorithm substantially, as we just have to find a set $U$ of players that can be serviced with utility greater or equal to zero. Moreover as the payment of each player increases when the serviced players becomes smaller we need the set $U$ with maximal cardinality.
\end{remark}

\begin{theorem}\label{theo:gsp<=>bs}
A mechanism is group-strategyproof if and only if it is a Fencing Mechanism.
\end{theorem}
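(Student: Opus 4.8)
The plan is to prove the equivalence by separating the payment part, which is already pinned down by Theorem~\ref{theo:gsp=>bsm}, from the allocation part, which carries the real content of Theorem~\ref{theo:gsp<=>bs}. In any GSP mechanism the charge of a player depends only on the serviced set, so the payments are described by a cost-sharing scheme $\xi$, and Theorem~\ref{theo:gsp=>bsm} characterizes exactly the admissible schemes. Thus beyond Theorem~\ref{theo:gsp=>bsm} what remains is to show that a GSP allocation must, at every bid vector, pick a minimum-payment set sandwiched by a stable pair, and conversely that any such selection is immune to coalitions. I would prove the easier direction (GSP $\Rightarrow$ Fencing) first and then the hard direction (Fencing $\Rightarrow$ GSP).

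For GSP $\Rightarrow$ Fencing, I would start from a GSP mechanism $(O,p)$ and show that $p_i(b)$ depends only on $O(b)$: if $O(b)=O(b')=S$ with $p_i(b)\neq p_i(b')$, a single player could profit by reporting the value inducing the cheaper charge while remaining serviced (using VP and CS), contradicting strategyproofness. This defines $\xi(i,S):=p_i(b)$, and since $(O,\xi)$ is GSP, Theorem~\ref{theo:gsp=>bsm} yields that $\xi$ satisfies \lcmd Fixing a bid vector $b$, I set $S:=O(b)$ and $L:=\{i\in S: b_i>\xi(i,S)\}$, the players served with strictly positive utility, and I define $U$ as a maximal superset of $L$ for which no outside player can be appended; existence follows by a saturation/fixpoint argument, using that enlarging $U$ only decreases each relevant minimum $\m{i}$, though the simultaneous equality requirement of condition~(2) needs care. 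I would then verify the three stability conditions: condition~(1) holds because $\m{i}\leq\xi(i,S)<b_i$ for $i\in L$; condition~(3)---no $R\subseteq A\setminus U$ is addable---is precisely where GSP is invoked, since an addable $R$ would let its members join $O(b)$ and form a successful coalition. Finally I would show $L\subseteq S\subseteq U$ and that $\xi(i,S)=\m{i}$ for all $i\in S$: otherwise the members of $S$ could deviate, via the minimum-payment set guaranteed by part~(a) of \lcmc to strictly help an overcharged player, contradicting GSP. This last step simultaneously establishes condition~(2) on $S\setminus L$ and that $S=\sigma(L,U,b)$ for a valid tie-breaking rule, so $(O,p)$ is a Fencing Mechanism.

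For Fencing $\Rightarrow$ GSP, the hard direction, I would argue by contradiction: suppose a Fencing Mechanism with scheme $\xi$ satisfying \lcm admits a successful coalition $S$ deviating from the truthful profile $v$ to $v'$, where $v,v'$ agree off $S$, every $i\in S$ is weakly better off at $v'$, and some $j\in S$ strictly so. Let $(L,U)$ and $(L',U')$ be stable pairs at $v$ and $v'$ with outcomes $O=\sigma(L,U,v)$ and $O'=\sigma(L',U',v')$. First observations: any strictly improving player is served at $v'$ with positive utility, hence lies in $L'$; and players outside $S$ keep their bids, constraining how $O'$ can differ from $O$. I would then take a suitable witness set $C$ built from $O$ and $O'$ and apply condition~(c) of \lcmc via the harm relation, to locate a nonempty set $T\subseteq L\setminus C$ of ``unharmed'' players whose minimum payments cannot have dropped; combined with parts~(a) and~(b), which pin down the minimum-payment structure of $O$ and $O'$, this should contradict the requirement that no coalition member lost utility while one strictly gained.

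The main obstacle is exactly this final argument. The difficulty is twofold: organizing the case analysis of how the serviced set and each individual payment can move between the two stable pairs, and choosing the witness $C$ so that the set $T$ delivered by condition~(c) actually intersects the coalition (or forces a stability violation at $v'$), since GSP protects only the members of $S$ while off-coalition players may be harmed freely. I expect to need a layered induction---on the size of the symmetric difference $O\triangle O'$, or on the number of coalition members who change service status---with the stability conditions at both $v$ and $v'$ ruling out the intermediate configurations. This mirrors the role cross-monotonicity plays in the correctness of Moulin mechanisms, but now routed through the weaker guarantees of \lcm and the harm relation.
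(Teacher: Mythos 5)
Your first direction (GSP $\Rightarrow$ Fencing) follows the paper's outline in spirit, but the hard direction is where your proposal has a genuine gap, and you acknowledge it yourself: the ``final argument'' you defer is exactly the content of the theorem, and the route you sketch for it points the wrong way. You propose to choose a witness set $C$ so that the set $T$ delivered by condition (c) of \lcm ``intersects the coalition,'' and you anticipate a layered induction on $|O\triangle O'|$. The paper's argument is direct and has the opposite orientation. Assume a successful coalition with true values $b$ reporting $b'$, and let $L,U$ be the stable pair at $b$. Every player served at $b'$ can afford her payment (coalition members because they weakly improve, outsiders by VP), and any strictly profiting player $j$ is shown to pay strictly below $\m{j}$. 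Lemma~\ref{lem:uncov2} (which packages condition (c) together with stability) then produces a nonempty $T\subseteq L\setminus O(b')$ with $b_i>\xi(i,O(b')\cup T)$ for all $i\in T$. The punch line is that $T$ must consist of \emph{truth-tellers}: by stability condition 1 its members are served with strictly positive utility in the truthful outcome, so joining the coalition would cost them utility, hence $b_i=b'_i$ on $T$. Therefore $T$ witnesses, at the \emph{reported} profile $b'$, a violation of the stability of $L',U'$ via Lemma~\ref{lem:cov3}. So the set $T$ must avoid the liars rather than meet them, and no induction on symmetric differences is needed; as proposed, your case analysis has no clear way to close.

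A second, structural omission: your proposal never establishes existence or uniqueness of stable pairs, and this is not a formality. A Fencing Mechanism only produces an output where a stable pair exists, and your coalition argument silently uses outcomes at both $v$ and $v'$; moreover your first direction needs a stable pair at every input of the given GSP mechanism. The paper must bootstrap: it first proves GSP \emph{restricted} to inputs admitting a stable pair (the argument above, which also needs uniqueness, Lemma~\ref{lem:unique}), and then proves existence at every input (Lemma~\ref{lem:exists}) by induction on the number of coordinates below $b_i^*$, where the induction step invokes the conditional GSP property just established (Claim~\ref{claim:notgsp}). This interleaving is essential, and it is also what closes the hole you flag in your own construction of $U$: a saturation argument gives condition 3 and $b_i\geq\m{i}$ on $U\setminus L$, but stability condition 2 demands \emph{equality}, and upgrading one to the other is precisely the hard induction of Lemma~\ref{lem:exists}, not a detail that merely ``needs care.'' Finally, a smaller error: a strictly improving coalition member need not lie in $L'$, since $L'$ is determined by the reported bids $v'$ and a liar can be served at $v'$ while bidding exactly her payment.
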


\section{Every GSP Mechanism is a Fencing Mechanism} \label{nec}

%\begin{theorem}
%The cost sharing scheme of  any GSP mechanism that  satisfies VP and CS,  %satisfies \lcm.
%\end{theorem}

%\begin{proof}
\subsection{Necessity of \lcm}
In this section we prove that the payment function of every GSP mechanism satisfies \lcmd  A natural approach would assume that one condition of \lcm  is violated at some $L,U$, and prove  that  it is impossible that the corresponding cost sharing scheme gives rise to a GSP mechanism. It turns out that our lack of knowledge of the payment function renders this approach  unlikely to be fruitful.

Therefore, we will follow an alternative method. We  select an arbitrary GSP mechanism and consider some $U\subseteq \A$. We show that for every $L\subseteq U$ the cost sharing scheme  satisfies each one of the three conditions of \lcm  using induction on $|U \setminus L|$.  When proving the induction step we also reveal several important allocation properties for GSP mechanisms.

\textbf{Base:} For $|U \setminus L|=0$, i.e $L=U$ every part of \lcm  is trivially  satisfied as follows.

Condition (a): It holds that $\mlu{i}{U}{U}=\xi(i,U)$, since the minimum in the definition of $\xi$ is taken over the single possible outcome $U$.

Condition (b): This condition holds trivially as $U\setminus L=\emptyset$.

Condition (c):
Regardless of whether the if condition of this part is true, it holds that for all $C\subset U$ we can set
$T=U \setminus C$ since for all $j\in T$, $\xi(j, C\cup T)=\xi(j, U)=\mlu{j}{L}{U}$. \

\textbf{ Induction Step:}  Proving the induction step requires some definitions that allows the effective use of the induction hypothesis in order to identify a successful coalition if some part is violated.   We first define the notion of a harm relation and we prove that it is a strict partial order.

\subsubsection*{Harm relation}

\begin{lemma}\label{lem:xicompare}
 If  $U\subseteq U_1$ and $L_1\subseteq L$, then for all $i \in U$,  $\m{i}\geq\mlu{i}{L_1}{U_1}$.
\end{lemma}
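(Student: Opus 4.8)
The plan is to reduce the inequality to a statement about nested feasible sets in the two minimization problems that define the quantities $\m{i}$ and $\mlu{i}{L_1}{U_1}$. Recall that $\m{i}=\min_{\{L\subseteq S\subseteq U,\,i\in S\}}\xi(i,S)$, so its value is the minimum of $\xi(i,\cdot)$ over the family $\mathcal{F}=\{S:L\subseteq S\subseteq U,\ i\in S\}$, while $\mlu{i}{L_1}{U_1}$ is the minimum of the same function over $\mathcal{F}_1=\{S:L_1\subseteq S\subseteq U_1,\ i\in S\}$. The entire argument will hinge on comparing these two families.

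First I would check that both minima are well-defined, i.e.\ that the families are nonempty. Since $i\in U$ and $L\subseteq U$, the set $L\cup\{i\}$ satisfies $L\subseteq L\cup\{i\}\subseteq U$ and contains $i$, so $\mathcal{F}\neq\emptyset$; the inclusion established next then shows $\mathcal{F}_1\neq\emptyset$ as well. The key step is the inclusion $\mathcal{F}\subseteq\mathcal{F}_1$: given any $S\in\mathcal{F}$, the hypotheses $L_1\subseteq L$ and $U\subseteq U_1$ give $L_1\subseteq L\subseteq S$ and $S\subseteq U\subseteq U_1$, and $i\in S$ is retained, so $S\in\mathcal{F}_1$. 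Because minimizing a function over a larger feasible family can only decrease or preserve the minimum, the inclusion yields $\mlu{i}{L_1}{U_1}=\min_{S\in\mathcal{F}_1}\xi(i,S)\le\min_{S\in\mathcal{F}}\xi(i,S)=\m{i}$, which is exactly the desired conclusion.

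There is no genuine obstacle here: the statement is a purely set-theoretic monotonicity property of the operator $\xi^{*}$, expressing that relaxing the upper constraint (from $U$ to $U_1\supseteq U$) and tightening the lower constraint (from $L$ to $L_1\subseteq L$) only enlarges the collection of serviced sets over which $\xi(i,\cdot)$ is minimized. The only points deserving a line of care are the nonemptiness of the feasible families, guaranteed by $i\in U$ together with $L\subseteq U$, so that both minima are attained and the comparison is meaningful.
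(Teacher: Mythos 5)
Your proof is correct and follows essentially the same route as the paper's: the paper's one-line argument ("the minimum can only decrease as the set of outcomes over which it is taken becomes larger") is exactly your inclusion $\mathcal{F}\subseteq\mathcal{F}_1$ of feasible families, which you simply spell out in full together with the (harmless but welcome) nonemptiness check.
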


\begin{defn}[Harm]
 We say  that $i$ harms $j$, where $i,j\in U$  if and only if  $\mlu{j}{L}{U}<\mlu{j}{L\cup\{i\}}{U}$
\end{defn}

Consequently, for all distinct $i,j\in U$, $i$ either \emph{harms} $j$ or
otherwise it holds that $\m{j}=\mlu{j}{L\cup \{i\}}{U}$ (from Lemma \ref{lem:xicompare}). Trivially every $i \in L$ \emph{does not harm} any other player $j \in U$.

\begin{claim}\label{cla:harm}
The harm relation  satisfies anti-symmetry and transitivity and consequently it is strict partial order.
\end{claim}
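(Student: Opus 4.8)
The plan is to verify the three defining properties of a strict partial order one at a time, reducing each to group-strategyproofness. Irreflexivity is immediate and I would dispose of it first: the family of sets over which $\mlu{i}{L}{U}$ is minimized is $\{S:L\subseteq S\subseteq U,\ i\in S\}$, and imposing $i\in S$ already forces $L\cup\{i\}\subseteq S$, so $\mlu{i}{L\cup\{i\}}{U}=\mlu{i}{L}{U}$ and $i$ can never harm itself. The genuinely useful preliminary step is \emph{witness extraction}: if $i$ harms $j$ and $S_j$ is \emph{any} set attaining $\mlu{j}{L}{U}$, then $i\notin S_j$, for otherwise $S_j$ would be feasible in the minimization defining $\mlu{j}{L\cup\{i\}}{U}$, giving $\mlu{j}{L\cup\{i\}}{U}\le\xi(j,S_j)=\mlu{j}{L}{U}$ and contradicting the harm inequality. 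Symmetrically, whenever $j$ harms $i$, every minimizer of $\mlu{i}{L}{U}$ omits $j$.

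It is worth emphasizing up front that Lemma~\ref{lem:xicompare} alone cannot deliver either property: for a completely arbitrary scheme $\xi$ one can simultaneously force $\mlu{j}{L}{U}<\mlu{j}{L\cup\{i\}}{U}$ and $\mlu{i}{L}{U}<\mlu{i}{L\cup\{j\}}{U}$, so antisymmetry and transitivity of harm must be consequences of group-strategyproofness rather than of the order structure of the minima. The core of both remaining arguments is therefore to convert an assumed \emph{cyclic} pattern of harm into a successful coalition.

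For antisymmetry I would assume for contradiction that $i$ harms $j$ and $j$ harms $i$, fix witnesses $S_i\ni i$ with $j\notin S_i$ and $S_j\ni j$ with $i\notin S_j$ attaining the respective minima, and construct bid vectors realizing these allocations: the players of $L$ bid their guaranteeing value (served by CS), the players outside $U$ bid negatively (excluded by VP and NPT), and $i,j$ bid exactly their minimum costs $\mlu{i}{L}{U}$ and $\mlu{j}{L}{U}$, placing each at a tie. These ties lie strictly below the raised values $\mlu{i}{L\cup\{j\}}{U}$ and $\mlu{j}{L\cup\{i\}}{U}$, so by VP no single outcome can serve both $i$ and $j$ at their cheap prices. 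Comparing the two bid vectors that realize $S_i$ and $S_j$, I would then locate a player whose payment strictly drops between the two allocations and argue that $\{i,j\}$, acting as indifferent (tied) liars who sacrifice nothing, can steer the mechanism to the cheaper allocation, making that player strictly better off — a successful coalition contradicting GSP.

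Transitivity follows the same template: assuming $i$ harms $j$, $j$ harms $k$, but $i$ does not harm $k$, I would take a set $S_k$ attaining $\mlu{k}{L}{U}$ with $i\in S_k$ (available precisely because $i$ does \emph{not} harm $k$) and note $j\notin S_k$ by the witness-extraction step applied to ``$j$ harms $k$''; pairing $S_k$ with a minimizer witnessing ``$i$ harms $j$'' again produces two allocations on which a coalition built from $\{i,j\}$ and the tied players can profitably deviate. The main obstacle in both cases is exactly the one flagged in the text: since the payment function is unknown, one cannot directly control the mechanism's behavior on the players of $U\setminus(L\cup\{i,j,k\})$, and the delicate part is to fix their bids — and to appeal to the induction hypothesis, namely \lcm already established at the strictly larger lower sets $L\cup\{i\}$ and $L\cup\{j\}$ — so that the required minimum-cost allocations are genuinely attained as outcomes and the deviating coalition, together with its indifferent pivot players, can be exhibited explicitly.
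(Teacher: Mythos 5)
Your irreflexivity remark and the witness-extraction step (every set attaining $\m{j}$ must exclude a player who harms $j$) are correct, and they coincide with the opening move of the paper's own proof. The gap is in everything after that. Your plan rests on ``bid vectors realizing'' the minimizing sets $S_i$, $S_j$ and on ``locating a player whose payment strictly drops'' so that $\{i,j\}$ can steer the mechanism there. But a bid vector does not realize an allocation: VP and CS only confine the outcome between $L$ and $U$, and at this point of the induction nothing whatsoever is known about which set the mechanism actually picks or what the serviced players pay. Worse, the natural way to force the outcome $S_j$ (members of $S_j\setminus L$ bid $b^*_i$, everyone else in $U\setminus L$ bids $-1$) is not a legitimate coalition: for an arbitrary minimizer $S_j$ of $j$'s payment you have no control over $\xi(l,S_j)$ for the other members $l\in S_j\setminus(L\cup\{j\})$, so your ``indifferent pivot players'' may be charged strictly above their true values and hence sacrifice utility, which disqualifies the coalition under the paper's definition of GSP. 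The allocation lemmas that would repair exactly this (Lemma~\ref{lem:a-alloc}, Lemma~\ref{lem:b-alloc}) appear later in the paper and themselves depend on the present claim --- the sink/DAG structure of the harm relation enters Lemma~\ref{lem:a-milder} through Claim~\ref{claim:sinkandL} --- so appealing to anything of that sort here would be circular.

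The idea you are missing is that, inside the induction, no fresh appeal to GSP is needed at all: the claim is a purely combinatorial consequence of condition (c) of \lcm at the augmented pairs, which is precisely the induction hypothesis (you mention this hypothesis only as a patch in your last sentence). Concretely, if $i$ harms $j$, take a minimizer $S_j$ with $\xi(j,S_j)=\m{j}$; then $i\notin S_j$, hence $S_j$ is a proper subset of $U$, and condition (c) at the pair $L\cup\{i\},U$ applied with $C=S_j$ forces $T=\{i\}$ (the only nonempty subset of $(L\cup\{i\})\setminus S_j$) and $\xi(i,S_j\cup\{i\})=\mlu{i}{L\cup\{i\}}{U}=\m{i}$; since $L\cup\{j\}\subseteq S_j\cup\{i\}\subseteq U$, Lemma~\ref{lem:xicompare} upgrades this to $\mlu{i}{L\cup\{j\}}{U}=\m{i}$, i.e.\ $j$ does not harm $i$. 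Transitivity is the same computation: with $i$ harming $j$, $j$ harming $k$, and $i$ not harming $k$, pick a minimizer $S_k\supseteq L\cup\{i\}$ of $\m{k}$, note $j\notin S_k$, apply condition (c) at $L\cup\{j\},U$ with $C=S_k$ to get $\xi(j,S_k\cup\{j\})=\m{j}$, and since $S_k\cup\{j\}\supseteq L\cup\{i\}$ this contradicts the assumption that $i$ harms $j$. So your assertion that antisymmetry and transitivity ``must be consequences of group-strategyproofness rather than of the order structure of the minima'' is misleading in context: GSP enters only through the already-established induction hypothesis, and the claim itself is a set-theoretic statement about $\xi$, not about the mechanism.
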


\begin{corollary}\label{cor:harmdag}
The induced sub-graph $G[U \setminus L]$ is a directed acyclic graph.
\end{corollary}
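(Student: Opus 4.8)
The plan is to establish the two non-trivial properties of the harm relation---asymmetry and transitivity---from which the strict partial order structure, and hence Corollary~\ref{cor:harmdag}, follows at once. The starting point is a combinatorial reformulation of harm. By Lemma~\ref{lem:xicompare} we always have $\mlu{j}{L}{U}\le\mlu{j}{L\cup\{i\}}{U}$, and the right-hand minimum ranges exactly over those sets counted on the left that in addition contain $i$. Hence $i$ harms $j$ if and only if no set $S$ with $L\subseteq S\subseteq U$ and $j\in S$ attaining $\xi(j,S)=\m{j}$ contains $i$; equivalently, $i$ does \emph{not} harm $j$ precisely when some minimizer of $j$'s payment over the sets between $L$ and $U$ already includes $i$. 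I would record this equivalence first, since both remaining arguments are phrased most cleanly through it. Irreflexivity is then immediate (and in any case the definition ranges over distinct $i,j$): once $i\in S$ is imposed, the constraint sets defining $\m{i}$ and $\mlu{i}{L\cup\{i\}}{U}$ coincide.

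Asymmetry is where GSP is genuinely needed, since the reformulation shows that for an arbitrary cost-sharing scheme one can have $i$ harm $j$ and $j$ harm $i$ simultaneously; the argument must therefore invoke the mechanism's allocation behaviour rather than the scheme alone. I would argue by contradiction: assume distinct $i,j\in U\setminus L$ with $i$ harming $j$ and $j$ harming $i$. Using Consumer Sovereignty I force the players of $L$ to be serviced (each bidding $b^*_i$) and every player outside $U$ to stay out, and I set the bids of $i$ and $j$ at their minima $\m{i},\m{j}$; the reformulation then says that $i$ can be serviced at its minimum only in outcomes excluding $j$, and symmetrically for $j$. Whatever set the mechanism outputs, one of $i,j$ is excluded at a tie while the included one pays strictly above its minimum, and I then exhibit a deviation---the excluded player bidding its Consumer Sovereignty value, or the pair deviating to the complementary profile---that strictly helps one player without hurting any liar, contradicting GSP exactly as in Examples~(a)--(c). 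The delicate point is handling the remaining players of $U\setminus\{i,j\}$, whose presence enters the definition of the minima; I would pin their bids so that the feasible outcomes are confined to the range where the minimizers of $i$ and $j$ live, and here the induction hypothesis---\lcm at the strictly smaller instances $(L\cup\{i\},U)$ and $(L\cup\{j\},U)$---guarantees that the witnessing outcomes actually exist.

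For transitivity I would assume $i$ harms $j$ and $j$ harms $k$ and show $i$ harms $k$, again through the reformulation. Suppose towards a contradiction that $i$ does not harm $k$: then there is a minimizer $S$ of $k$'s payment with $L\subseteq S\subseteq U$ and $k,i\in S$, and necessarily $j\notin S$ because $j$ harms $k$. I would run the same forcing and coalition construction over the triple $i,j,k$: since $i\in S$ witnesses service of $k$ at its minimum without $j$, while $i$ harms $j$, a profitable deviation can be routed through $k$ down to $j$, producing a successful coalition. As with asymmetry, the induction hypothesis giving \lcm at $(L\cup\{i\},U)$, where $|U\setminus(L\cup\{i\})|$ is strictly smaller, supplies the outcomes in which the relevant players simultaneously attain their minima, which is what keeps the constructed coalition silent for its liars.

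Having shown the harm relation is irreflexive, asymmetric and transitive, it is a strict partial order, which is Claim~\ref{cla:harm}. Corollary~\ref{cor:harmdag} is then immediate: orient an edge $i\to j$ precisely when $i$ harms $j$; a directed cycle in $G[U\setminus L]$ would, by transitivity, force some vertex $x$ on it to harm $x$, contradicting irreflexivity, so $G[U\setminus L]$ is acyclic. I expect the asymmetry step to be the main obstacle, both because it is the one place where the scheme alone does not determine the outcome and because correctly confining the feasible outcomes---so that the constructed coalition is genuinely successful and leaves every liar's utility unchanged---is where the bookkeeping with Consumer Sovereignty, Voluntary Participation and the induction hypothesis is most demanding.
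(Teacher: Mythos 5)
Your skeleton is the right one---irreflexivity, asymmetry and transitivity make the harm relation a strict partial order (Claim~\ref{cla:harm}), and the corollary is then immediate---and your reformulation of harm via minimizers ($i$ does not harm $j$ iff some set attaining $\m{j}$ contains $i$) is exactly the observation the paper exploits. The gap is in how you establish asymmetry and transitivity, which is where all the content lies. The paper needs no bid vectors and no coalitions at this point: both properties follow from condition (c) of \lcm applied at the pairs $L\cup\{i\},U$ and $L\cup\{j\},U$, which satisfy \lcm by the induction hypothesis, since $|U\setminus(L\cup\{i\})|=|U\setminus L|-1$. Concretely, for asymmetry: if $i$ harms $j$, pick $S_j$ with $L\subseteq S_j\subseteq U$, $j\in S_j$ and $\xi(j,S_j)=\m{j}$; then $i\notin S_j$, so $S_j\subset U$, and condition (c) at $L\cup\{i\},U$ with $C=S_j$ (the only admissible nonempty $T\subseteq(L\cup\{i\})\setminus S_j$ being $\{i\}$) yields $\xi(i,S_j\cup\{i\})=\mlu{i}{L\cup\{i\}}{U}=\m{i}$; since $L\cup\{j\}\subseteq S_j\cup\{i\}\subseteq U$, Lemma~\ref{lem:xicompare} upgrades this to $\mlu{i}{L\cup\{j\}}{U}=\m{i}$, i.e.\ $j$ does not harm $i$. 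Transitivity is the same argument run once more. So your premise that asymmetry ``must invoke the mechanism's allocation behaviour rather than the scheme alone'' is false in context: the induction hypothesis is itself a scheme-level property, and it suffices.

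Moreover, the direct GSP argument you sketch for asymmetry does not go through as stated. At your profile ($L$ at $b^*$, $\A\setminus U$ at $-1$, and $i,j$ bidding $\m{i},\m{j}$), CS and VP force $L\subseteq O(b)\subseteq U$, hence any serviced player among $i,j$ is charged at least her minimum and, by VP, at most her bid, i.e.\ \emph{exactly} her minimum; your claim that ``the included one pays strictly above its minimum'' would itself violate VP, so the tension with GSP you rely on never arises---every player of $U\setminus L$ has zero utility whatever the mechanism outputs. Likewise, the excluded player raising her bid to her CS value can only end up with non-positive utility (her true value is her minimum, and her new payment is at least that), so the deviations you name do not assemble into a successful coalition without substantial further construction; and the bids of the players in $U\setminus(L\cup\{i,j\})$, which you leave unspecified, are precisely where such a construction would have to do its work. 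The transitivity sketch (``a profitable deviation can be routed through $k$ down to $j$'') has the same character. As written, the two steps that carry the corollary are unproven, and the clean route---condition (c) of \lcm at the smaller pairs---is never invoked as the engine of the argument.
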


\subsubsection*{Condition (a) of \lcm }

The core idea that we will use for the proof of conditions (a) and (b) is to construct bid vectors, where VP and CS restrict the possible outcomes to be subsets of $U$ that contain every player in $L$, while for condition (c) we will restrict the possible outcomes to be subsets of $U$. Therefore, we assume that every player in $L$ has bidden a very high value and and every player in $\A\setminus U$ has bidden a negative value.

 For the proof of condition (a), we also want the players in $U\setminus L$ to be indifferent between being serviced and getting excluded from the outcome, i.e. they have zero utility.  Thus, we assume that every player in $U \setminus L$ has bidden exactly her minimum payment $\xi^*$ at $L,U$.

We first use the induction hypothesis and the properties of the harm relation to prove that the following Lemma, which is a condition somewhat milder than condition (a) of \lcmd
\begin{lemma}\label{lem:a-milder}
For every $j \in L$, there is some set $S_j$, where $L\subseteq S_j \subseteq  U$ such that for all $ i \in S_j \setminus L$, $\xi(i , S_j)=\m{i}$ and $\xi(j, S_j)= \m{j}$.
\end{lemma}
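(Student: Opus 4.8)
The plan is to run the same induction on $|U\setminus L|$ that drives the whole section, and to split the argument according to whether $j$ can be ``protected'' while enlarging $L$. Fix $j\in L$ and introduce the set of players that do not raise $j$'s minimum,
$$B_j=\{\, i\in U\setminus L : \mlu{j}{L\cup\{i\}}{U}=\m{j}\,\},$$
i.e. the players of $U\setminus L$ that do not harm $j$. I would then treat $B_j=\emptyset$ and $B_j\neq\emptyset$ separately; the whole content of the lemma sits in reconciling ``$j$ pays $\m{j}$'' with ``everyone in $S_j\setminus L$ pays their $(L,U)$-minimum.''

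If $B_j=\emptyset$, so that \emph{every} $i\in U\setminus L$ harms $j$, I claim $S_j=L$ already works. Indeed, for any $S$ with $L\subsetneq S\subseteq U$ pick some $i\in S\setminus L$; since $j\in L\subseteq L\cup\{i\}\subseteq S$, the definition of $\xi^{*}$ gives $\xi(j,S)\ge \mlu{j}{L\cup\{i\}}{U}>\m{j}$. Hence the minimum defining $\m{j}$ can only be attained at $S=L$, i.e. $\xi(j,L)=\m{j}$, and the requirement on $S_j\setminus L=\emptyset$ is vacuous.

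The substantive case is $B_j\neq\emptyset$, and here the key structural step is to find inside $B_j$ a \emph{sink} $i^{\ast}$ of the harm DAG $G[U\setminus L]$ of Corollary~\ref{cor:harmdag}, that is, a player harming nobody in $U\setminus L$. This is possible because $B_j$ is downward closed for the harm order: if $i\in B_j$ and $i$ harms some $i'\in U\setminus L$, then transitivity (Claim~\ref{cla:harm}) rules out $i'$ harming $j$ (else $i$ would harm $j$), so $i'\in B_j$. Starting from any element of $B_j$ and following harm-edges inside $U\setminus L$ therefore stays in $B_j$ and, by acyclicity and finiteness, terminates at a sink $i^{\ast}\in B_j$. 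Setting $L'=L\cup\{i^{\ast}\}$ we have $|U\setminus L'|<|U\setminus L|$, so the induction hypothesis (condition (a) at $(L',U)$) yields a set $S'$ with $L'\subseteq S'\subseteq U$ and $\xi(k,S')=\mlu{k}{L'}{U}$ for every $k\in S'$. Since $i^{\ast}$ harms no player of $U\setminus L$, the dichotomy following the definition of harm gives $\mlu{k}{L'}{U}=\m{k}$ for each $k\in U\setminus L$ (trivially also for $k=i^{\ast}$), and in particular for each $k\in S'\setminus L\subseteq U\setminus L$; moreover $i^{\ast}\in B_j$ gives $\xi(j,S')=\mlu{j}{L'}{U}=\m{j}$, using $j\in L\subseteq L'\subseteq S'$. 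Thus $S_j:=S'$ has all the required properties.

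The main obstacle is exactly the tension resolved above. The natural move of adding one player to $L$ and invoking the inductive condition (a) only produces payments equal to the minimums of the \emph{enlarged} fence $(L',U)$, and by Lemma~\ref{lem:xicompare} these can strictly exceed the $(L,U)$-minimums as soon as the added player harms somebody. Keeping all of $S_j\setminus L$ at their $(L,U)$-minimums forces me to add a player that harms no one in $U\setminus L$ (a sink), whereas keeping $j$ at $\m{j}$ forces me to add a player that does not harm $j$; it is the downward-closedness of $B_j$ in the harm DAG that lets me satisfy both demands at once by choosing a sink that lies in $B_j$, and the degenerate case $B_j=\emptyset$ is precisely the situation where no such enlargement exists and one must fall back on $S_j=L$.
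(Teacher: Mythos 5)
Your proof is correct and takes essentially the same route as the paper's: the same case split (either every player of $U\setminus L$ harms $j$, in which case $S_j=L$, or not), and in the second case finding a sink of $G[U\setminus L]$ that does not harm $j$ and invoking the inductive condition (a) at the pair $L\cup\{i^{\ast}\},U$. Your downward-closedness argument for $B_j$ is just a rephrasing of the paper's path-following-plus-transitivity argument (its Claim on sinks and $L$), so there is nothing substantive to distinguish the two.
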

Then we use the preceding Lemma to show that if the cost-sharing scheme does not satisfy condition (a) of \lcm then there exists a successful coalition.
\begin{lemma}\label{lem:a-alloc}
At the bid vector $b$, where for all $i \in L$, $b_i=b^*_i$, for all $i \in U \setminus L$, $b^*_i= \m{i}$ and for all $i \notin U$, $b_i =-1$, it holds that $L\subseteq O(b)\subseteq U$ and that for all $i \in O(b)$, $\xi(j, O(b))=\m{j}$.
Setting $S=O(b)$, condition (a) of \lcm  is satisfied at $L,U$.
\end{lemma}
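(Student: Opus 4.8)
The plan is to pin down the outcome $O(b)$ directly and then simply read off condition (a) of \lcm from it. First I would establish the two set inclusions. Since every $i\in L$ bids $b^*_i$, Consumer Sovereignty forces $i\in O(b)$, so $L\subseteq O(b)$; since every $i\notin U$ bids $-1<0$, VP together with NPT excludes such players from the outcome, so $O(b)\subseteq U$. With $L\subseteq O(b)\subseteq U$ in hand, the payments of the players in $U\setminus L$ are immediate: if $i\in O(b)\cap(U\setminus L)$ then $b_i=\m{i}$, so VP gives $\xi(i,O(b))\le b_i=\m{i}$, while the definition of the minimum over sets between $L$ and $U$ gives $\xi(i,O(b))\ge\m{i}$; hence $\xi(i,O(b))=\m{i}$. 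This reduces the entire lemma to showing that every player of $L$ also pays exactly her minimum at $O(b)$.

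For the remaining (and only substantive) part I would argue by contradiction: suppose some $j\in L$ has $\xi(j,O(b))>\m{j}$. This is where Lemma~\ref{lem:a-milder} enters — it supplies a set $S_j$ with $L\subseteq S_j\subseteq U$, $\xi(j,S_j)=\m{j}$, and $\xi(i,S_j)=\m{i}$ for every $i\in S_j\setminus L$. The idea is to use $S_j$ as a target allocation that a coalition can enforce. Define a deviation $b'$ by letting every player of $S_j$ bid her guaranteeing value $b^*_i$ and every player outside $S_j$ bid $-1$. Then CS forces $S_j\subseteq O(b')$ while VP and NPT force $O(b')\subseteq S_j$, so $O(b')=S_j$ exactly, and in particular $\xi(j,O(b'))=\m{j}<\xi(j,O(b))$.

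It then remains to check that the move from $b$ to $b'$ is a successful manipulation, contradicting GSP. The deviating coalition is exactly $U\setminus L$: a member $i\in S_j\setminus L$ switches from $\m{i}$ to $b^*_i$ and, by the defining property of $S_j$, is serviced at payment $\xi(i,S_j)=\m{i}$, so with true value $\m{i}$ her utility stays $0$; a member $i\in U\setminus S_j$ switches to $-1$, is not serviced, and again keeps utility $0$. Meanwhile the non-deviator $j$ (whose bid $b^*_j$ is unchanged) strictly raises her utility from $b^*_j-\xi(j,O(b))$ to $b^*_j-\m{j}$. No liar is harmed and one player strictly profits, which is precisely a successful coalition — the desired contradiction. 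Hence every $j\in L$ pays $\m{j}$ at $O(b)$, and combined with the first paragraph we obtain $\xi(i,O(b))=\m{i}$ for all $i\in O(b)$; taking $S=O(b)$ is exactly condition (a) of \lcm at $L,U$.

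The step I expect to be the crux is the passage from the per-player guarantee of Lemma~\ref{lem:a-milder} to a single global outcome: the lemma only promises, separately for each $j$, a set where $j$ and the newly added players are optimal, whereas condition (a) demands one set that is simultaneously optimal for everyone it serves. The resolution is to let $O(b)$ itself play that role and let GSP do the work — the target $S_j$ is reachable as an exact outcome by bidding $b^*$ on $S_j$ and $-1$ elsewhere, and the essential reason the coalition is not self-defeating is precisely the clause in Lemma~\ref{lem:a-milder} asserting that the added players $S_j\setminus L$ pay their minimum (so they remain indifferent); without it the members in $S_j\setminus L$ could be harmed and the argument would collapse.
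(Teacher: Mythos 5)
Your proposal is correct and follows essentially the same route as the paper's own proof: CS and VP pin down $L\subseteq O(b)\subseteq U$ and the zero-utility payments of $U\setminus L$, and then a supposed overcharged $j\in L$ is refuted by having the players of $U\setminus L$ enforce the set $S_j$ from Lemma~\ref{lem:a-milder} (bidding $b^*_i$ inside $S_j$, $-1$ outside), yielding a successful coalition. The only differences are cosmetic: you invoke NPT explicitly alongside VP and spell out $O(b')=S_j$, which the paper leaves implicit.
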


\subsubsection*{Condition (b) of \lcm }
We consider now the players in $U \setminus L$. Using the induced sub-graph $G[U\setminus L]$  of the harm relation, we can discriminate them by whether a player is \emph{sink} of this graph or not.  First we consider the sinks, as the  satisfaction of the second condition of  \lcm for a sink  is an immediate consequence of the induction hypothesis.

\begin{claim}\label{cla:sinkandb}
For every sink $k$ of  $G[U\setminus L]$ condition (b) of \lcm  is satisfied at $L,U$.
\end{claim}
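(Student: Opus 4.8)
The plan is to obtain condition (b) for the sink $k$ directly from the induction hypothesis applied to the strictly smaller pair $L':=L\cup\{k\}$, $U':=U$, for which $|U\setminus L'|=|U\setminus L|-1$. The only fact about $k$ that I need is the sink property itself: having no outgoing edges in $G[U\setminus L]$ means $k$ harms no player of $U\setminus L$, so by the definition of harm (and the dichotomy that follows from Lemma~\ref{lem:xicompare}), $\mlu{j}{L}{U}=\mlu{j}{L\cup\{k\}}{U}$ for every $j\in U\setminus L$ with $j\ne k$. In words, adjoining $k$ to the lower fence raises nobody's minimum payment.

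Next I would invoke the induction hypothesis at $(L\cup\{k\},U)$. Part (a) of \lcm, which holds there, supplies a set $S$ with $L\cup\{k\}\subseteq S\subseteq U$ such that $\xi(i,S)=\mlu{i}{L\cup\{k\}}{U}$ for all $i\in S$. I claim $S_k:=S$ witnesses part (b) for $k$ at $(L,U)$: clearly $L\subseteq S_k\subseteq U$ and $k\in S_k$, and it remains only to check that $\xi(j,S_k)=\m{j}$ for every $j\in S_k\setminus L$. For $j\ne k$ this is immediate, combining $\xi(j,S_k)=\mlu{j}{L\cup\{k\}}{U}$ with the sink equality above. For $j=k$ I would note that the minimum defining $\mlu{k}{L\cup\{k\}}{U}$ ranges over exactly the sets with $L\subseteq S\subseteq U$ and $k\in S$, which is the very same family that defines $\m{k}$; hence $\mlu{k}{L\cup\{k\}}{U}=\m{k}$ and thus $\xi(k,S_k)=\m{k}$.

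There is no genuine obstacle here; the content is bookkeeping, which is precisely why the sinks are the easy case. The one point to handle with care is the identity $\mlu{k}{L\cup\{k\}}{U}=\m{k}$: one must observe that once the constraint ``$k\in S$'' is imposed, it is immaterial whether $k$ sits inside the lower fence $L$ or not, so the two minima are taken over the same collection of sets. Everything else is read off directly from the induction hypothesis, so condition (b) for a sink is inherited wholesale from condition (a) at the enlarged pair $(L\cup\{k\},U)$.
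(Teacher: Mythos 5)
Your proof is correct and takes essentially the same route as the paper: invoke the induction hypothesis, specifically part (a) of \lcmc at the smaller pair $(L\cup\{k\},U)$, obtain the set $S$ there, and use the sink property of $k$ to convert each $\mlu{i}{L\cup\{k\}}{U}$ into $\m{i}$, so that $S_k:=S$ witnesses condition (b). Your explicit check that $\mlu{k}{L\cup\{k\}}{U}=\m{k}$ (because both minima range over the same family of sets once $k\in S$ is imposed) is a detail the paper leaves implicit, but the argument is the same.
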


We continue with the rest players in $U \setminus L$.

\begin{claim}\label{claim:harmandU-L}
For every $j \in U \setminus L$ one of the following holds:  either $j$ is a sink of the sub-graph $G[U \setminus L]$,  or there is a sink $k$ such that $j$ harms $k$.
\end{claim}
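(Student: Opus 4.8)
The plan is to treat this as a purely order-theoretic statement about the finite directed acyclic graph $G[U\setminus L]$ furnished by Corollary~\ref{cor:harmdag}, exploiting transitivity of the harm relation from Claim~\ref{cla:harm}. Recall that we orient the arcs so that there is an edge from $i$ to $j$ exactly when $i$ harms $j$; with this convention a \emph{sink} of $G[U\setminus L]$ is a vertex that harms no other vertex of $U\setminus L$ (the same notion of sink used in Claim~\ref{cla:sinkandb}). The statement then reduces to the elementary fact that every vertex of a finite DAG either is a sink or, after passing to the transitive closure, has an arc into a sink.

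First I would dispose of the trivial case: if $j$ is itself a sink of $G[U\setminus L]$, the first alternative holds and there is nothing to prove. So assume $j$ is not a sink; then by definition $j$ has at least one outgoing arc inside the subgraph, i.e. $j$ harms some vertex $k_1\in U\setminus L$.

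Next I would follow a maximal directed path in $G[U\setminus L]$ starting at $j$: set $k_0=j$ and, as long as the current vertex $k_t$ is not a sink, let $k_{t+1}$ be any vertex of $U\setminus L$ that $k_t$ harms. Two facts make this construction terminate at a sink. Since $G[U\setminus L]$ is acyclic (Corollary~\ref{cor:harmdag}), the vertices $k_0,k_1,k_2,\dots$ are pairwise distinct, so the path cannot revisit a vertex; and since $U\setminus L$ is finite, the process must stop, which can only happen at a vertex $k$ that has no outgoing arc, i.e. a sink. Because every step stays inside the vertex set $U\setminus L$, the terminal vertex $k$ is genuinely a sink of the induced subgraph.

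Finally I would invoke transitivity of the harm relation (Claim~\ref{cla:harm}) along the path $j=k_0,k_1,\dots,k$: since each consecutive pair is a harm relation, $j$ harms $k$, and $k$ is a sink, which is exactly the second alternative. I do not expect a genuine obstacle here; the only points requiring care are fixing the orientation convention so that ``sink'' means ``harms nobody in $U\setminus L$'' (consistent with its use in Claim~\ref{cla:sinkandb}), and checking that the maximal path never leaves $U\setminus L$, so that its endpoint is a sink of the induced subgraph rather than a spurious dead end.
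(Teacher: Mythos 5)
Your proof is correct and takes essentially the same route as the paper's: follow a maximal directed path from $j$ in the acyclic graph $G[U\setminus L]$, which must terminate at a sink $k$, and then invoke transitivity of the harm relation (Claim~\ref{cla:harm}) to conclude that $j$ harms $k$. The paper's proof is simply a terser statement of this argument; your version spells out why the path terminates at a sink (acyclicity plus finiteness) and why it stays inside $U\setminus L$.
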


Now consider an player $j$ in $U\setminus L$, that is \emph{not a  sink} of $G[U\setminus L]$ and let $k$ be one of its \emph{sinks} such that $j$ \emph{harms} $k$. In order to prove that the second condition is  satisfied for $j$, we will involve group-stra\-te\-gy\-proofness at certain bid vectors (trying to generalize Example \ref{ex:b} where $j$ takes the role of player $3$ and $k$ the role of player $4$). Prior to defining these inputs, we prove another allocation property.
 
\begin{lemma}\label{lem:b-alloc}
Consider some $L'\subseteq U'\subseteq \A$. Assume that the set $S_j$, as in the definition of condition (b) of \lcm   exists for some  $j\in U'\setminus L'$.  At any bid vector $b^j$, such that for all $i \in L'$, $b^j_i=b^*_i$, $b^j_j> \mlu{j}{L'}{U'}$ for all $i \in U' \setminus (L'\cup\{j\})$, $b^j_i= \mlu{i}{L'}{U'}$ and for all $i \notin U$, $b^j_i =-1$, it holds that $j \in O(b^j)$ and  $\xi(j, O(b^j))=\mlu{j}{L'}{U'}$.
\end{lemma}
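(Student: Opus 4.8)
The plan is to reduce the statement to a single application of Lemma~\ref{lem:a-alloc} on a strictly smaller subproblem, followed by one coalitional deviation. First I would record the elementary consequences of the shape of $b^j$. By CS every player of $L'$ is serviced, and by VP and NPT every player bidding $-1$ (i.e.\ every $i\notin U'$) is excluded, so $L'\subseteq O(b^j)\subseteq U'$. For any $i\in U'\setminus(L'\cup\{j\})$ that is serviced, VP gives $\xi(i,O(b^j))\le b^j_i=\mlu{i}{L'}{U'}$, while the definition of the minimum gives $\xi(i,O(b^j))\ge\mlu{i}{L'}{U'}$; hence such a player pays exactly $\mlu{i}{L'}{U'}$ and has zero utility. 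The same VP-plus-minimum sandwich shows that a serviced $j$ pays at least $\mlu{j}{L'}{U'}$, so the only way the claim can fail is that $j$ is excluded, or $j$ is serviced but pays strictly more than $\mlu{j}{L'}{U'}$.

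Next I would build a reference instance on the pair $(L'\cup\{j\},S_j)$. Since $\xi(i,S_j)=\mlu{i}{L'}{U'}$ for every $i\in S_j\setminus L'$ and $L'\cup\{j\}\subseteq S_j\subseteq U'$, Lemma~\ref{lem:xicompare} gives $\mlu{i}{L'\cup\{j\}}{S_j}\ge\mlu{i}{L'}{U'}$, while $S_j$ is itself an admissible set for the pair $(L'\cup\{j\},S_j)$, so $\mlu{i}{L'\cup\{j\}}{S_j}\le\xi(i,S_j)=\mlu{i}{L'}{U'}$; the two bounds force $\mlu{i}{L'\cup\{j\}}{S_j}=\mlu{i}{L'}{U'}$ for all $i\in S_j\setminus L'$, in particular for $j$. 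Because $|S_j\setminus(L'\cup\{j\})|<|U'\setminus L'|$, condition (a) of \lcm holds at $(L'\cup\{j\},S_j)$ by the induction hypothesis, so Lemma~\ref{lem:a-alloc} applies to this pair. Let $\tilde b$ be its bid vector: $\tilde b_i=b^*_i$ for $i\in L'\cup\{j\}$, $\tilde b_i=\mlu{i}{L'\cup\{j\}}{S_j}=\mlu{i}{L'}{U'}$ for $i\in S_j\setminus(L'\cup\{j\})$, and $\tilde b_i=-1$ otherwise. The lemma yields $j\in O(\tilde b)$ with $\xi(j,O(\tilde b))=\mlu{j}{L'\cup\{j\}}{S_j}=\mlu{j}{L'}{U'}$, so, since $b^j_j>\mlu{j}{L'}{U'}$, player $j$ enjoys strictly positive utility $b^j_j-\mlu{j}{L'}{U'}$ at $\tilde b$.

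Finally I would compare $b^j$ and $\tilde b$. They agree off the set $T:=\{j\}\cup(U'\setminus S_j)$: on $L'$ both equal $b^*$, on $S_j\setminus(L'\cup\{j\})$ both equal $\mlu{i}{L'}{U'}$, and outside $U'$ both equal $-1$. Treating $b^j$ as the truthful profile and $\tilde b$ as the deviation of the coalition $T$, each $i\in U'\setminus S_j$ has zero utility under both profiles (under $b^j$ it is either unserviced or, if serviced, pays its bid $\mlu{i}{L'}{U'}$; under $\tilde b$ it bids $-1$ and is excluded), so no coalition member is worse off. Moreover $j$, whose true value is $b^j_j$, weakly improves, since its utility at $b^j$ is at most $b^j_j-\mlu{j}{L'}{U'}$ (by the preliminary bound), which is exactly its utility at $\tilde b$. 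If the claim failed, this inequality would be strict and $T$ would be a successful coalition, contradicting \gsp; hence $j\in O(b^j)$ and $\xi(j,O(b^j))=\mlu{j}{L'}{U'}$.

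I expect the main obstacle to be the middle step: recognizing that shrinking the ceiling from $U'$ down to $S_j$ (by having the players of $U'\setminus S_j$ bid $-1$) is costless for the deviating coalition precisely because those players had zero utility to begin with, and verifying through Lemma~\ref{lem:xicompare} that the minimum payments of the surviving players are unchanged, so that Lemma~\ref{lem:a-alloc} can legitimately be invoked on the smaller pair. Once this equality of minima is in place and the induction hypothesis supplies condition (a) at $(L'\cup\{j\},S_j)$, the coalition argument is routine; care is only needed to confirm that the coordinates outside $T$ genuinely coincide in $b^j$ and $\tilde b$, which is what makes $T$ an admissible deviation.
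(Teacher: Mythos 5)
Your proof is correct in substance, but it takes a genuinely different route from the paper's. The paper's proof is a direct coalition argument that never touches the induction hypothesis: assuming $j$ is excluded or overcharged at $b^j$, the coalition $U'\setminus L'$ deviates so that the players of $S_j\setminus(L'\cup\{j\})$ bid $b^*_i$, those of $U'\setminus S_j$ bid $-1$, and $j$ bids $b^*_j$; CS and VP then force the outcome to be exactly $S_j$, whose payments are already pinned down by the defining property of $S_j$ (every $i\in S_j\setminus L'$ pays $\mlu{i}{L'}{U'}$), so the non-$j$ deviators keep zero utility while $j$ strictly gains, contradicting group-strategyproofness. You instead leave the players of $S_j\setminus(L'\cup\{j\})$ at their original bids and use the smaller coalition $\{j\}\cup(U'\setminus S_j)$, but to control the resulting outcome you must invoke Lemma~\ref{lem:a-alloc} at the derived pair $(L'\cup\{j\},S_j)$, justified ``by the induction hypothesis.'' Two caveats attach to that step. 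First, $(L'\cup\{j\},S_j)$ has a different upper set than $U'$, so the appeal is licensed only if the induction of Section~5 is organized as strong induction on $|U\setminus L|$ over all pairs simultaneously; this is a valid reorganization (the base case $L=U$ is proved for every $U$, and each step consumes only pairs of smaller difference), but it is not the per-$U$ induction the paper literally sets up, whose own appeals to the hypothesis never change the upper set. Second, your argument establishes the lemma only conditionally on that hypothesis, whereas the paper's proof is self-contained given the mere existence of $S_j$ --- which is exactly what lets the paper state the lemma with primed sets and reuse it both at $(L,U)$ and at $(L\cup\{k\},U)$; since at both points of application the derived pair has strictly smaller difference, your conditional version still suffices for the overall scheme. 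Your key preliminary computation, $\mlu{i}{L'\cup\{j\}}{S_j}=\mlu{i}{L'}{U'}$ for all $i\in S_j\setminus L'$ via Lemma~\ref{lem:xicompare} and admissibility of $S_j$, is correct and is what makes $b^j$ and $\tilde b$ agree off your coalition (and note that condition (a) at the smaller pair feeds Lemma~\ref{lem:a-alloc} through Lemma~\ref{lem:a-milder}, whose conclusion it implies). In short: your route buys a smaller coalition at the cost of heavier machinery; the paper's larger coalition buys an elementary, hypothesis-free proof.
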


Let $\epsilon$ be a very small positive number, smaller than any positive payment difference. We construct two bid vectors, at which we can characterize the allocation of the mechanism. First, consider the bid vector $b^k$, where for all $i \in L$, $b^k_i=b^*_i$, $b^k_k=\m{k}+\epsilon$, for all $i \in U \setminus (L\cup\{k\})$, $b^k_i=\m{k}$ and for all $i\notin U$,
$b^k_i=-1$.

\begin{claim} \label{claim:bk-alloc}
At the bid vector $b^k$ the following hold

(a) Player $k$ is serviced  and charged $\m{k}$.

(b) Player $j$ is not serviced.
\end{claim}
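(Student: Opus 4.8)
The plan is to first pin down the rough location of $O(b^k)$ from the boundary conditions and then treat the two parts in turn, with part (b) leaning on part (a). Because every player $i \notin U$ bids $b^k_i = -1 < 0$, VP together with NPT forces $i \notin O(b^k)$, so $O(b^k) \subseteq U$; and because every player in $L$ bids $b^*_i$, CS guarantees $L \subseteq O(b^k)$. Thus $L \subseteq O(b^k) \subseteq U$, which makes all the minimizations invoked below legitimate.

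For part (a) I would invoke Lemma~\ref{lem:b-alloc} with $L' = L$, $U' = U$ and $j = k$. The bid vector $b^k$ is exactly of the form required there: the players of $L$ bid $b^*_i$, player $k$ bids $\m{k} + \epsilon > \m{k}$, every other player of $U \setminus L$ bids her own minimum payment $\m{i}$, and the players outside $U$ bid $-1$. The sole remaining hypothesis is the existence of the set $S_k$ of condition (b); since $k$ is a sink of $G[U\setminus L]$, Claim~\ref{cla:sinkandb} supplies it. Lemma~\ref{lem:b-alloc} then returns $k \in O(b^k)$ and $\xi(k, O(b^k)) = \m{k}$, which is part (a).

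For part (b) I would argue by contradiction, assuming $j \in O(b^k)$. Together with $L \subseteq O(b^k) \subseteq U$ and $k \in O(b^k)$ from part (a), this means $O(b^k)$ satisfies $L \cup \{j\} \subseteq O(b^k) \subseteq U$ and contains $k$, so it is one of the sets competing in the minimum defining $\mlu{k}{L\cup\{j\}}{U}$; hence $\xi(k, O(b^k)) \geq \mlu{k}{L\cup\{j\}}{U}$. But $j$ harms $k$, so by the definition of the harm relation $\m{k} < \mlu{k}{L\cup\{j\}}{U}$, giving $\xi(k, O(b^k)) > \m{k}$ and contradicting part (a). Therefore $j \notin O(b^k)$.

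The only load-bearing step is part (a), and within it the verification that Lemma~\ref{lem:b-alloc} applies — i.e.\ that $S_k$ exists — which is precisely where choosing $k$ to be a sink is used. Once part (a) holds, part (b) is a two-line consequence of the defining strict inequality of the harm relation together with the fact that $O(b^k)$ lies in the feasible range of the relevant minimum. I expect no genuine difficulty beyond the bookkeeping of matching $b^k$ to the template of Lemma~\ref{lem:b-alloc} (the non-$k$ players of $U\setminus L$ must be bidding their own $\m{i}$), and noting that $\epsilon > 0$ is exactly what yields the strict inequality $b^k_k > \m{k}$ the lemma requires.
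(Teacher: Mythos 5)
Your proposal is correct and follows essentially the same route as the paper: part (a) is obtained by applying Lemma~\ref{lem:b-alloc} with $L'=L$, $U'=U$ to player $k$ (the existence of $S_k$ being guaranteed because $k$ is a sink, via Claim~\ref{cla:sinkandb}), and part (b) is the same contradiction argument, lower-bounding $\xi(k,O(b^k))$ by $\mlu{k}{L\cup\{j\}}{U}$ and invoking the strict inequality $\m{k}<\mlu{k}{L\cup\{j\}}{U}$ from the definition of the harm relation. Your explicit bookkeeping (establishing $L\subseteq O(b^k)\subseteq U$ and citing Claim~\ref{cla:sinkandb} for $S_k$) only spells out details the paper leaves implicit.
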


Second, consider the bid vector $b^j$, where for all $i \in L\cup\{k\}$, $b^k_i=b^*_i$, $b^j_j=\m{j}+\epsilon$, for all $i \in U \setminus (L\cup\{i,j\})$, $b^j_i=\m{i}$ and for all $i\notin U$,
$b^j_i=-1$.

\begin{claim}\label{claim:bj-alloc}
At the bid vector $b^j$ the following hold

(a) For all $i \in U\setminus (L\cup\{j,k\})$, $b_i=\mlu{i}{L\cup\{k\}}{U}$ and $b_j=\mlu{j}{L\cup\{k\}}{U}+\epsilon$.

(b) Player $j$ is serviced  and charged $\m{j}$.

(c) Player $k$ is  serviced and charged  more than $\m{k}$.

\end{claim}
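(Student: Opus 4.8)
To prove Claim~\ref{claim:bj-alloc} I would treat the three parts in order, since each feeds the next, and I would lean throughout on the single structural fact that $k$ is a \emph{sink} of $G[U\setminus L]$, so that $k$ harms no player. By the dichotomy recorded just after the definition of Harm (together with Lemma~\ref{lem:xicompare}), ``$k$ does not harm $i$'' is exactly the statement $\m{i}=\mlu{i}{L\cup\{k\}}{U}$. The plan is to use this to recast $b^j$ as a ``condition-(b)-type'' input for the pair $L\cup\{k\},U$, to which the induction hypothesis applies because $|U\setminus(L\cup\{k\})|=|U\setminus L|-1$.

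For part (a), I would simply invoke the above sink property for each $i\in U\setminus L$. Since $k$ harms none of them, $\m{i}=\mlu{i}{L\cup\{k\}}{U}$ for every such $i$, and in particular $\m{j}=\mlu{j}{L\cup\{k\}}{U}$. Substituting the defining values of $b^j$ (namely $b^j_i=\m{i}$ for $i\in U\setminus(L\cup\{j,k\})$ and $b^j_j=\m{j}+\epsilon$) then rewrites these bids in terms of $\mlu{\cdot}{L\cup\{k\}}{U}$, which is precisely the assertion of part (a).

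For part (b), part (a) shows that $b^j$ matches verbatim the hypothesis of Lemma~\ref{lem:b-alloc} taken with $L'=L\cup\{k\}$ and $U'=U$: players of $L\cup\{k\}$ bid $b^*$, player $j$ bids strictly above $\mlu{j}{L\cup\{k\}}{U}$, the remaining players of $U$ bid their minimum under $L\cup\{k\},U$, and outsiders bid $-1$. The set $S_j$ required by that lemma is delivered by the induction hypothesis, which gives condition (b) of \lcm for the smaller pair $L\cup\{k\},U$. Lemma~\ref{lem:b-alloc} then yields $j\in O(b^j)$ and $\xi(j,O(b^j))=\mlu{j}{L\cup\{k\}}{U}=\m{j}$, as claimed.

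For part (c), which is the conceptual heart, I would argue as follows. Player $k$ bids $b^*_k$, so CS forces $k\in O(b^j)$; VP and CS confine the outcome to $L\subseteq O(b^j)\subseteq U$, and part (b) places $j$ inside it. Writing $S:=O(b^j)$, we thus have $L\cup\{j\}\subseteq S\subseteq U$ with $k\in S$, so $S$ is an admissible set in the minimization defining $\mlu{k}{L\cup\{j\}}{U}$, whence $\xi(k,S)\geq\mlu{k}{L\cup\{j\}}{U}$. Finally, because $j$ harms $k$ we have $\mlu{k}{L\cup\{j\}}{U}>\mlu{k}{L}{U}=\m{k}$, giving $\xi(k,O(b^j))>\m{k}$. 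I expect this last step to be the real point of the claim: parts (a) and (b) are bookkeeping that aligns $b^j$ with the induction framework, whereas part (c) is where the harm relation is used to extract the strict payment gap for $k$, and it is precisely this strict inequality (contrasted with Claim~\ref{claim:bk-alloc}, where $k$ paid exactly $\m{k}$) that will later be turned into a successful coalition and thereby force condition (b) of \lcmd
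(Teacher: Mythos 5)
Your proposal is correct and follows essentially the same route as the paper: part (a) from the sink property of $k$ (no player is harmed by $k$, so the $\xi^*$ values at $L,U$ and $L\cup\{k\},U$ coincide), part (b) by applying Lemma~\ref{lem:b-alloc} with $L'=L\cup\{k\}$, $U'=U$ using the induction hypothesis for condition (b) of \lcm at that smaller pair, and part (c) by noting $L\cup\{j\}\subseteq O(b^j)\subseteq U$ so that $k$'s payment is bounded below by $\mlu{k}{L\cup\{j\}}{U}>\m{k}$ since $j$ harms $k$. Your write-up is in fact slightly more explicit than the paper's (e.g.\ in justifying via CS that $k$ is serviced at $b^j$), but the decomposition and the key steps are identical.
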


Finally, we construct the  intermediate bid vector $b^{j,k}$, where  player $j$ bids according to $b^j$ ($\m{j}+\epsilon$) player $k$ bids according to $b^k$ ($\m{k}+\epsilon$), and every other player bids the same value as in both bid vectors.

\begin{claim} \label{claim:bjk-alloc}
At the bid vector $b^{j,k}$ the following hold

(a) Player $k$ is not serviced at $b^{j,k}$.

(b) Player $j$ is serviced at $b^{j,k}$.

(c) $L\subseteq O(b^{j,k})\subseteq U$  and every player $i \in O(b^{j,k})\setminus L$, is charged $\m{i}$.
\end{claim}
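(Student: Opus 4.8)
The plan is to establish the three parts in the order (a), (b), (c), exploiting that $b^{j,k}$ is obtained from $b^j$ by changing only the bid of $k$ (lowering it from $b^*_k$ to $\m{k}+\epsilon$), and from $b^k$ by changing only the bid of $j$ (raising it from $\m{j}$ to $\m{j}+\epsilon$). Throughout, VP and CS give the containment $L\subseteq O(b^{j,k})\subseteq U$ immediately, since every player in $L$ bids $b^*_i$ and every player outside $U$ bids $-1$; this already yields the first half of (c).

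For part (a), I would argue by contradiction using strategyproofness of player $k$ across the pair $b^j,b^{j,k}$, which differ only in $k$'s coordinate. Suppose $k$ is serviced at $b^{j,k}$; by VP its payment there is at most its bid $\m{k}+\epsilon$. By Claim~\ref{claim:bj-alloc}(c), $k$ is serviced at $b^j$ and charged strictly more than $\m{k}$, and since $\epsilon$ is smaller than any positive payment difference and both quantities are values $\xi(k,\cdot)$, this payment in fact exceeds $\m{k}+\epsilon$. Hence a player $k$ whose true value is $b^*_k$ (as in $b^j$) could instead report $\m{k}+\epsilon$, reaching $b^{j,k}$, remain serviced, and strictly lower her payment, contradicting strategyproofness. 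Therefore $k\notin O(b^{j,k})$.

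For part (b), I would use a coalition argument on $\{j,k\}$ across the pair $b^{j,k},b^k$, which differ only in $j$'s coordinate. Suppose $j\notin O(b^{j,k})$; by part (a) we also have $k\notin O(b^{j,k})$, so under their true values $\m{j}+\epsilon$ and $\m{k}+\epsilon$ both players have utility $0$ at $b^{j,k}$. Let the coalition $\{j,k\}$ deviate to $b^k$, where only $j$ lies (lowering her bid to $\m{j}$) while $k$ keeps her bid. By Claim~\ref{claim:bk-alloc}, at $b^k$ player $j$ is still not serviced, so the liar $j$ stays at utility $0$, while player $k$ is serviced and charged $\m{k}$, strictly improving to utility $\epsilon>0$. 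This is a successful coalition, contradicting GSP, so $j\in O(b^{j,k})$. The crucial point, and the main obstacle, is orienting the deviation correctly: the move must go toward $b^k$ and not toward $b^j$, because the symmetric move toward $b^j$ would force the lying player $k$ to become serviced at a payment above her true value, sacrificing her utility and spoiling the coalition.

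For the payment half of part (c), consider any serviced $i\in O(b^{j,k})\setminus L$; by (a) we have $i\neq k$. Since $L\subseteq O(b^{j,k})\subseteq U$ and $i$ is serviced, the definition of the minimum gives $\xi(i,O(b^{j,k}))\geq \m{i}$, while VP gives $\xi(i,O(b^{j,k}))\leq b^{j,k}_i$. For $i\neq j$ the bid is exactly $\m{i}$, forcing equality at once; for $i=j$ the bid is $\m{j}+\epsilon$, so the payment lies in the interval $[\m{j},\m{j}+\epsilon]$, and the $\epsilon$-smallness argument again collapses this interval to the single value $\m{j}$. This gives $\xi(i,O(b^{j,k}))=\m{i}$ for every $i\in O(b^{j,k})\setminus L$, completing (c); setting $S_j=O(b^{j,k})$ then delivers condition (b) of \lcm for the non-sink player $j$.
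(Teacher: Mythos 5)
Your proof is correct and follows essentially the same route as the paper: part (a) via strategyproofness of player $k$ across the pair $b^j,b^{j,k}$ (which differ only in $k$'s coordinate), part (b) via the coalition $\{j,k\}$ with true values $b^{j,k}$ deviating to $b^k$ using Claim~\ref{claim:bk-alloc}, and part (c) by sandwiching each payment between $\xi^*$ and the bid via VP, CS and the smallness of $\epsilon$. Note that you state the coalition's direction correctly (truth $b^{j,k}$, misreport $b^k$), which is clearly what the paper intends even though its own phrasing swaps the two vectors.
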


As a result setting $S_j=O(b^k)$, we conclude that the second condition is satisfied for any $j \in U\setminus L$ that is not a sink of $G[U\setminus L]$ as well.

\subsubsection*{Condition (c) of \lcm }

To show that the cost-sharing scheme satisfies the third property of \lcmc at $L,U$, we need the induction hypothesis only for showing (as we have already done) that condition (a) of \lcm  is  satisfied  at this pair and specifically the allocation properties of Lemma~\ref{lem:a-alloc}.

  The main idea is to define two families of bid vectors, that both contain the previous special case  and moreover  the first family of inputs is a subset of second.

\begin{center}
\begin{tabular}{|c|c|c|c|}
\hline
& $L$ & $U\setminus L$ & $\A\setminus U$ \\\hline\hline
Special Case & $b^*_i$ & $\m{i}$ & $-1$ \\\hline
First Family & $>\m{i}$ & $\m{i}$ & $-1$\\\hline
Second Family & $>\m{i}$ & $\in\mathbb{R} $& $-1$\\\hline
\end{tabular}
\end{center}

The use of induction was one of the basic techniques in~\cite{imm08}, however here we need to use induction in a more powerful way. In~\cite{imm08} the authors first fix an ordering of the players and then apply induction, while here we start from a bid vector that satisfies a certain property (induction base) and use induction on the number of coordinates at which the new bid vector differs from the bid vector used in the induction base. This allows us to use the induction hypothesis more effectively, as we can alter the coordinates of our choice first, instead of selecting an ordering and then formulating the induction statement. While  the proof of the allocation properties about the first family does not  require the advantage of this technique, it is the essence of our proof of the corresponding allocation properties about the second family.

 Notice that the special input differs from the bid vectors of the first family  only in the bid coordinates that correspond to players in $L$.  We apply our technique and extend the implications of Lemma~\ref{lem:a-alloc} for every bid vector in the first family.

\begin{lemma}\label{lem:char1}
 For every bid vector $b$, where for all $i \in L$, $b_i>\xi(i, S)$, for all $i \in U \setminus L$,
$b_i =\m{i}$ and for all $i \notin U$, $b_i=-1$, it holds that   $L\subseteq O(b)\subseteq U$ and for all $i \in O(b)$, $\xi(i, O(b))=\m{i}$.
\end{lemma}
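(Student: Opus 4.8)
The plan is to prove the statement by induction, taking as the induction base the special bid vector of Lemma~\ref{lem:a-alloc}, in which every player of $L$ bids her consumer-sovereignty value $b^*_i$, every player of $U\setminus L$ bids $\m{i}$, and every player outside $U$ bids $-1$; call this vector $b^{sp}$. Since an arbitrary vector $b$ of the first family agrees with $b^{sp}$ on $U\setminus L$ and on $\A\setminus U$, the two vectors can differ only in the coordinates of $L$. I would therefore induct on the number $d$ of coordinates $i\in L$ at which $b_i\neq b^{sp}_i$. For $d=0$ we have $b=b^{sp}$ and the claim is exactly Lemma~\ref{lem:a-alloc}.

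For the induction step, fix a vector $b$ differing from $b^{sp}$ in $d+1$ coordinates, choose one such coordinate $\ell\in L$, and let $b'$ be the vector agreeing with $b$ everywhere except that $b'_\ell=b^*_\ell$. Then $b'$ lies in the first family and differs from $b^{sp}$ in only $d$ coordinates, so by the induction hypothesis $L\subseteq O(b')\subseteq U$ and $\xi(i,O(b'))=\m{i}$ for every $i\in O(b')$; in particular $\ell$ is serviced at $b'$ and pays $\m{\ell}$. Two easy observations get the step started. First, $O(b)\subseteq U$: every player outside $U$ bids $-1$, so servicing her would charge her a negative amount, contradicting NPT. Second, applying strategyproofness to the single player $\ell$ (the only coordinate in which $b$ and $b'$ differ) pins down $\ell$: if $\ell\notin O(b)$ she could deviate from $b$ to $b'_\ell$ and obtain utility $b_\ell-\m{\ell}>0$, so $\ell\in O(b)$; and the two strategyproofness inequalities, obtained by taking the true value to be $b_\ell$ and then $b^*_\ell$, sandwich $\xi(\ell,O(b))$ between $\m{\ell}$ and $\m{\ell}$, giving $\xi(\ell,O(b))=\m{\ell}$.

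The substance of the step is to upgrade this single-player conclusion to the full structure $L\subseteq O(b)$ together with $\xi(i,O(b))=\m{i}$ for every serviced player. Single-player strategyproofness between $b$ and $b'$ says nothing about players other than $\ell$, so here I would switch to a \gsp argument comparing $b$ directly with the base vector $b^{sp}$, using the whole set $L$ as the deviating coalition. Taking the true values to be $b$ and letting $L$ deviate to $b^{sp}$, every member of $L$ is serviced at $b^{sp}$ with payment $\m{i}$ by Lemma~\ref{lem:a-alloc} and hence obtains utility $b_i-\m{i}>0$; group-strategyproofness forbids this from being a successful manipulation, so either every $i\in L$ already enjoys exactly this utility at $b$ (which is precisely the desired conclusion $i\in O(b)$ and $\xi(i,O(b))=\m{i}$), or some coalition member is strictly worse off at $b$, i.e.\ there is an $i\in L\cap O(b)$ with $\xi(i,O(b))<\m{i}$ (the unserviced case is impossible here, since an unserviced member of $L$ would be strictly \emph{better} under the deviation).

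The main obstacle is to rule out this second alternative. By Lemma~\ref{lem:xicompare} the inequality $\xi(i,O(b))<\m{i}$ for a player $i\in L$ can hold only when $O(b)\not\supseteq L$, so it describes precisely a ``harm''-type configuration in which the remaining players of $L$ drive $i$ below her restricted minimum; this is why the harm relation and its acyclicity (Corollary~\ref{cor:harmdag}) enter. I would exclude it by playing the coalition argument against the induction hypothesis at $b'$: since only the coordinate $\ell$ was lowered and at $b'$ the whole of $L$ is already serviced at its restricted minimum, any newly created $L$-player served below $\m{i}$ would, through the reverse deviation together with Lemma~\ref{lem:xicompare}, force a cycle in the harm relation on $U\setminus L$, contradicting Corollary~\ref{cor:harmdag}. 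Once $L\subseteq O(b)$ is established we have $O(b)\in[L,U]$, so $\xi(i,O(b))\geq\m{i}$ for every serviced player by the definition of $\m{\cdot}$; for $i\in U\setminus L$ this combines with VP and the bid $b_i=\m{i}$ to force $\xi(i,O(b))=\m{i}$, while for $i\in L$ the equality is exactly the first alternative of the coalition argument, which completes the induction.
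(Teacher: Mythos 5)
Your induction skeleton is the same as the paper's: induction on the number of coordinates of $L$ at which $b$ differs from the special vector of Lemma~\ref{lem:a-alloc}, with that Lemma as the base, and your single-coordinate strategyproofness argument pinning down $\ell\in O(b)$ with $\xi(\ell,O(b))=\m{\ell}$ is exactly the paper's treatment of a player $j$ with $b_j\neq b^*_j$. The problem is how you finish the step. Your whole-$L$ coalition from $b$ to $b^{sp}$ correctly reduces everything to excluding your second alternative, namely some $i\in L\cap O(b)$ with $\xi(i,O(b))<\m{i}$; but your plan for excluding it --- that such a configuration would ``force a cycle in the harm relation on $U\setminus L$, contradicting Corollary~\ref{cor:harmdag}'' --- is not an argument, and I do not see how it could become one. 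The harm relation at $L,U$ compares $\m{j}$ with $\mlu{j}{L\cup\{i\}}{U}$: it is about forcing an \emph{extra} player of $U\setminus L$ into the lower set, and the graph $G[U\setminus L]$ has vertex set $U\setminus L$. Your bad configuration is of an entirely different kind: an outcome $O(b)\subseteq U$ that \emph{omits} some players of $L$ and thereby charges a member of $L$ below $\m{i}$. Players of $L\setminus O(b)$ are not even vertices of $G[U\setminus L]$. That phenomenon is precisely what condition (c) of \lcm at the pair $L,U$ speaks about, and at this point of the outer induction condition (c) at $L,U$ is not yet available --- Lemma~\ref{lem:char1} is itself a step toward proving it (via Lemma~\ref{lem:char2} and Claim~\ref{claim:bc-alloc}) --- so anything equivalent to it would be circular.

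What actually closes the step, and what the paper does, is to exploit in \emph{both directions} the indifference of $\ell$ that you have already established. Take true values $b'=(b^*_\ell,b_{-\ell})$ and let $\ell$ unilaterally report $b_\ell$: her utility is $b^*_\ell-\m{\ell}$ either way (induction hypothesis at $b'$, and your own conclusion that $\ell\in O(b)$ pays $\m{\ell}$), so group-strategyproofness forbids any other player from strictly profiting at $b$ relative to $b'$. A player $i$ of your second alternative would profit strictly, since by the induction hypothesis she is serviced at $b'$ paying $\m{i}$, while at $b$ she is serviced paying $\xi(i,O(b))<\m{i}$; contradiction. (The deviation in the opposite direction, from true values $b$, forbids any $i\in L$ from being unserviced or charged above $\m{i}$ at $b$, which is what your whole-$L$ coalition also yields.) Together the two directions give $L\subseteq O(b)$ and $\xi(i,O(b))=\m{i}$ for all $i\in L$, after which your final paragraph (the definition of $\m{i}$ plus VP for players of $O(b)\setminus L$) finishes the proof. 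So the fix is available entirely within your own setup; it is the harm-relation detour that has to be discarded.
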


Next we provide a weaker allocation property about the inputs of the second family. We arbitrarily select a vector that belongs to the first family and then apply our technique for proving this property for every input that is ``reachable" by our initial vector by altering the coordinates that correspond to the players in $U\setminus L$. The arbitrary selection of the initial vector ensures that our statement holds for every input that belongs to the second family.

\begin{lemma}\label{lem:char2}
 For every bid vector $b$, where for all $i \in L$, $b_i>\xi(i, S)$, for all $i \in U \setminus L$, $b_i \in \mathbb{R}$ and for all $i \notin U$, $b_i=-1$, it holds that:  for all $i \in O(b)$, $\xi(i, O(b))\geq\m{i}$.
\end{lemma}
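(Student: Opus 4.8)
The plan is to derive the second family from the first by a reachability argument, exactly as the text announces. Given a target vector $b$ of the second family, I would first fix a base vector $b^{0}$ that agrees with $b$ on $L$ and on $\A\setminus U$ and sets $b^{0}_{i}=\m{i}$ for every $i\in U\setminus L$. Then $b^{0}$ belongs to the first family, so Lemma~\ref{lem:char1} already yields the (stronger) conclusion there. I would then join $b^{0}$ to $b$ by a chain $b^{0},b^{1},\dots,b^{d}=b$ in which consecutive vectors differ in exactly one coordinate $m\in U\setminus L$, with that coordinate moving from $\m{m}$ to its target value $b_{m}$, and induct on the number $d$ of coordinates of $U\setminus L$ at which $b$ differs from $\m{\cdot}$. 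The base $d=0$ is Lemma~\ref{lem:char1}, and the whole statement reduces to a single one-coordinate step.

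For the step, let $c,c'$ be adjacent vectors agreeing everywhere except $c_{m}=\m{m}$ and $c'_{m}=b_{m}$, and assume as induction hypothesis that every player served at $c$ pays at least its $\m{\cdot}$. Since every player outside $U$ bids $-1$, VP and NPT give $O(c),O(c')\subseteq U$. First I would settle player $m$ with single-player strategyproofness between $c$ and $c'$: the hypothesis together with VP forces $m$, if served at $c$, to pay exactly $\m{m}$, so $m$ has utility $0$ at $c$; feeding this value into the strategyproofness inequality comparing the two reports $\m{m}$ and $b_{m}$ shows that whenever $m$ is served at $c'$ it pays at least $\m{m}$ (and in the subcase $b_{m}<\m{m}$ this is impossible by VP, so $m$ is simply not served at $c'$). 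Consequently any violation at $c'$ is witnessed by some $i^{*}\in O(c')\setminus\{m\}$ with $\xi(i^{*},O(c'))<\m{i^{*}}$.

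The heart of the step is to rule out such an $i^{*}$. Because $O(c')\subseteq U$ while $\xi(i^{*},O(c'))$ lies strictly below the minimum over all sets sandwiched between $L$ and $U$, the set $O(c')$ cannot be sandwiched, so at least one player $\ell\in L$ is dropped from the outcome although it submitted a high bid and therefore sits at utility $0$. I would convert this anomaly into a successful coalition contradicting GSP: the only coordinate on which $c$ and $c'$ disagree is $m$, so $m$ is the sole admissible deviator, and I would use the reversion of $m$'s bid, together with players that are at a tie because they are charged exactly their minimum, as the coalition, with the dropped high bidder $\ell$ (or $i^{*}$ itself) as the strictly profiting member; the induction hypothesis guarantees that the reverted outcome $O(c)$ charges everyone at least the minimum, so that no coalition member is harmed.

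Making this last construction precise is the main obstacle, because changing the single bid $b_{m}$ can alter the served set in complicated ways, dropping or adding several players at once. To control exactly which players may enter or leave as $m$'s bid moves, and to locate a beneficiary who is genuinely better off, I would lean on the harm relation, its acyclicity (Claim~\ref{cla:harm} and Corollary~\ref{cor:harmdag}), and the allocation structure already extracted in Lemma~\ref{lem:a-alloc}, in the same spirit as the analysis of condition (b). Finally, the arbitrariness of the target $b$, equivalently the freedom in choosing the matching base $b^{0}$, transfers the conclusion to every vector of the second family.
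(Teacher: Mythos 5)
Your proposal sets up the right induction parameter (the number of coordinates of $U\setminus L$ at which $b$ differs from the first-family base $b^0$), but the reduction to ``a single one-coordinate step'' is exactly where it breaks, and you acknowledge the missing piece yourself. Between two adjacent chain vectors $c,c'$ the only player whose report changes is $m$, so every coalition argument you can run has $m$ as the sole liar. That is too weak: to rule out a player $i^*\in O(c')$ with $\xi(i^*,O(c'))<\m{i^*}$ one needs coalitions in which \emph{several} players of $U\setminus L$ lie simultaneously, and which players must lie is determined a posteriori by the outcome at the manipulated profile. This is precisely how the paper's proof works: it never walks along a chain, but defines $R:=\{i\in T\mid i\in O(b)\text{ and }\xi(i,O(b))>b^0_i\}$ (the liars who sacrificed utility), and then applies the induction hypothesis to \emph{hybrid} vectors such as $b'$ (coordinates in $R$ taken from $b$, the rest from $b^0$), using the coalition $(T\setminus R)\cup\{j\}$ when $R\subsetneq T$, and an even more delicate construction (a second induction over subsets $S\subseteq R$, Claim~\ref{claim:bdp}) when $R=T$. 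The paper explicitly flags this freedom to alter an arbitrary, outcome-dependent set of coordinates--rather than following a fixed ordering--as ``the essence'' of the proof of this Lemma; your chain decomposition forfeits it.

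There is a second, independent defect in the step you sketch. Your proposed reversion coalition needs a strict beneficiary, and you claim the induction hypothesis at $c$ protects the coalition; but the conclusion of Lemma~\ref{lem:char2} (unlike Lemma~\ref{lem:char1}) is only a lower bound $\xi(i,O(c))\geq\m{i}$ on payments of \emph{served} players--it says nothing about who is served at $c$, nor that anyone pays exactly $\m{i}$. So neither the dropped player $\ell\in L$ nor $i^*$ can be certified to gain by reverting to $c$: an intermediate chain vector $c$ is not in the first family, so the strong allocation property ($L\subseteq O(\cdot)$, everyone charged exactly their minimum) is unavailable there. The paper avoids this by anchoring all its coalitions either at the genuine first-family vector $b^0$ (where Lemma~\ref{lem:char1} applies in full strength) or at hybrids covered by the induction hypothesis. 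Finally, your appeal to the harm relation and Lemma~\ref{lem:a-alloc} is off target: those tools serve conditions (a) and (b) of the outer induction on $|U\setminus L|$, and the paper's proof of this Lemma does not use them at all.
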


This Lemma may shed some light on the understanding of the GSP mechanisms and  can be interpreted as follows: Assume that the players in $\A \setminus U$ are uninterested to participate. Now if all the bids of players in a set  $L\subseteq U$ have surpassed their respective minimum payments at $L,U$, then a GSP mechanism never excludes a group of players  in $L$ from the outcome in order to charge a serviced player less than the restriction of their presence. Loosely speaking the players in $L$ ``fence" any outcome $C\subset U$ such that there is some $j \in C$ such that $\xi(j,C)<\m{j}$. Since this ``fencing" phenomenon must be true for arbitrary bids of the players in  $U\setminus L$, it follows that every GSP cost-sharing scheme must satisfy condition (c) of \lcmc as shown below.

\begin{claim}\label{claim:bc-alloc}
We construct the bid vector $b^C$ as follows: For all $i \in C$, $b^C_i=b^*_i$, for all $i \in L\setminus C$, $b^C_i=\m{i}+\epsilon$ and for all $i  \notin C\cup L$, $b^C_i=-1$.

For the bid vector $b^C$ it holds that:

(a) For all $i\in O(b^C)$ it holds that $\xi(i,O(b^C))\geq\m{i}$.

(b) $C\subset O(b^C) \subseteq L\cup C$.

(c) For all $i \in O(b^C) \setminus C$, it holds that
 $\xi(i,O(b^C))=\m{i}$.
\end{claim}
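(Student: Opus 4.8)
The plan is to establish the three parts in the order (a), (b), (c), using Lemma~\ref{lem:char2} as the single substantial ingredient and deriving everything else from VP, NPT, CS and the choice of $\epsilon$. The first thing I would do is observe that $b^C$ is a bid vector of the \emph{second family} relative to the pair $L,U$: every player in $L$ bids strictly above her minimum payment (those in $L\cap C$ bid $b^*_i$, and since the consumer-sovereignty value exceeds every possible payment we have $b^*_i>\m{i}$; those in $L\setminus C$ bid $\m{i}+\epsilon>\m{i}$), every player in $U\setminus L$ bids some real number (namely $b^*_i$ if the player is in $C$ and $-1$ otherwise), and every player in $\A\setminus U$ bids $-1$. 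Part (a) is then immediate from Lemma~\ref{lem:char2}, whose conclusion is exactly that $\xi(i,O(b^C))\geq\m{i}$ for all serviced $i$.

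For part (b), the inclusion $O(b^C)\subseteq L\cup C$ holds because every player outside $L\cup C$ bids $-1$, and VP together with NPT exclude negative bidders from the outcome; the inclusion $C\subseteq O(b^C)$ follows directly from CS, since each player in $C$ bids her guaranteeing value $b^*_i$. The only point that genuinely invokes the hypothesis of condition (c) is the strictness $C\subsetneq O(b^C)$: if we had $O(b^C)=C$, then the distinguished player $i\in C$ with $\xi(i,C)<\m{i}$ would be serviced and charged $\xi(i,O(b^C))=\xi(i,C)<\m{i}$, contradicting part (a). Hence $O(b^C)\neq C$ and $C\subsetneq O(b^C)\subseteq L\cup C$.

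Finally, for part (c), I would take any $i\in O(b^C)\setminus C$. Since $O(b^C)\subseteq L\cup C$, such an $i$ must lie in $L\setminus C$ and therefore bids $\m{i}+\epsilon$. Part (a) already gives $\xi(i,O(b^C))\geq\m{i}$, so it remains only to rule out strict inequality: if $\xi(i,O(b^C))>\m{i}$, then $\xi(i,O(b^C))-\m{i}$ is a positive payment difference and thus exceeds $\epsilon$, giving $\xi(i,O(b^C))>\m{i}+\epsilon=b^C_i$, which violates VP (a serviced player is never charged more than her bid). Hence $\xi(i,O(b^C))=\m{i}$, which is part (c). Setting $T=O(b^C)\setminus C$, parts (b) and (c) then read precisely as $\emptyset\neq T\subseteq L\setminus C$ with $\xi(j,C\cup T)=\m{j}$ for every $j\in T$, i.e.\ condition (c) of \lcm at $L,U$.

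The main obstacle is conceptual rather than computational: one must recognize that, despite the very high bids $b^*_i$ placed on the players in $C$, the vector $b^C$ still meets the hypotheses of Lemma~\ref{lem:char2}, so that its ``fencing'' conclusion $\xi(i,O(b^C))\geq\m{i}$ transfers verbatim. Once that identification is made, the only real use of the condition-(c) hypothesis $\xi(i,C)<\m{i}$ is the one-line contradiction forcing the strict inclusion in part (b), and parts (a) and (c) reduce to routine applications of Lemma~\ref{lem:char2} and VP together with the separating property of $\epsilon$.
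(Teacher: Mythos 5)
Your proof is correct and takes essentially the same route as the paper's: part (a) by checking that $b^C$ satisfies the hypotheses of Lemma~\ref{lem:char2}, part (b) from CS and VP for the inclusions plus the contradiction between part (a) and the hypothesis $\xi(j,C)<\m{j}$ for strictness, and part (c) by combining part (a) with VP and the choice of $\epsilon$. The only difference is that you spell out the routine verifications the paper leaves implicit.
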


Setting $T=O(b^C)\setminus C$ we complete the proof of the third condition.
%\end{proof}
\subsection{Necessity of Stability and Valid tie-breaking}
In the last part, we show that the allocation of every GSP mechanism satisfies Stability and uses a Valid tie-breaking rule.  We already know that the payment function satisfies \lcm and thus we can prove the following generalization of Lemma \ref{lem:char1}.

\begin{lemma}\label{lem:char3}
Let $L\subseteq U\subseteq \A$. For every bid vector $b$, such that for all $i \in L$, $b_i>\m{i}$, for all $i \in U\setminus L$, $b_i=\m{i}$ and for all $R\subseteq \A\setminus U$, there is some $i \in R$ such that $b_i<\mlu{i}{L}{U\cup R}$, it holds that $L\subseteq O(b)\subseteq U$ and for all $i \in O(b)$, $\xi(i,O(b))=\m{i}$.
\end{lemma}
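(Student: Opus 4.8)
The three hypotheses on $b$ are precisely the conditions that make the pair $L,U$ \emph{stable} at $b$, and the statement is the common generalization of Lemma~\ref{lem:char1}, which is the special case in which every player outside $U$ bids $-1$. The plan is to reduce everything to the single claim that no player outside $U$ is serviced, i.e. $O(b)\subseteq U$; call this $(\star)$. Granting $(\star)$, the rest of the proof is short and purely GSP-based, and $(\star)$ is the only place where the third stability condition enters.

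Assume $(\star)$. Then every $i\in\A\setminus U$ is unserviced, so by VP and NPT has zero utility at $b$. Let $\bpv$ agree with $b$ on $U$ and set $\bpv_i=-1$ for $i\notin U$; Lemma~\ref{lem:char1} gives $L\subseteq O(\bpv)\subseteq U$ with every serviced player paying her minimum. To obtain $L\subseteq O(b)$, suppose some $l\in L$ is unserviced at $b$. Reverting the bids of $\A\setminus U$ from $b$ to $-1$ keeps every non-coalition player (all in $U$) fixed and leaves each player of $\A\setminus U$ at zero utility (unserviced under both $b$ and $\bpv$, by $(\star)$ and Lemma~\ref{lem:char1}), while $l$ jumps from $0$ to $b_l-\m{l}>0$ since $l\in O(\bpv)$; thus $\{l\}\cup(\A\setminus U)$ is a successful coalition, contradicting GSP. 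The same coalition, applied to any $i_0\in L$ with $\xi(i_0,O(b))>\m{i_0}$, again yields a strict gain after reverting (the payment drops to $\m{i_0}$), so every $i\in L$ pays exactly $\m{i}$; and for $i\in O(b)\setminus L\subseteq U\setminus L$ the second stability condition gives $b_i=\m{i}$, whence VP forces $\xi(i,O(b))=\m{i}$ as well. This proves the lemma from $(\star)$.

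It remains to establish $(\star)$. I would interpolate between $\bpv$ and $b$ by raising the bids of $\A\setminus U$ from $-1$ to their values in $b$, one coordinate at a time, and induct on the number of raised coordinates, the base case being $\bpv$ via Lemma~\ref{lem:char1}. The key bookkeeping observation is that \emph{every} intermediate vector is again stable for $L,U$: the first two conditions involve only the unchanged coordinates in $U$, while for the third, any $R\subseteq\A\setminus U$ still containing a not-yet-raised player is witnessed by that player (who bids $-1$, below the nonnegative quantity $\mlu{i}{L}{U\cup R}$), and any $R$ consisting solely of raised players inherits its witness from the third stability condition of $b$. In particular the singleton case yields $b_r<\mlu{r}{L}{U\cup\{r\}}$ for every $r\in\A\setminus U$, so by VP no single outside player can ever pay her minimum inside any set of $[L,U\cup\{r\}]$.

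In the inductive step I pass from the clean vector $b^{(t-1)}$ (for which $O(b^{(t-1)})\subseteq U$, $L\subseteq O(b^{(t-1)})$ and minimum payments hold, by the reduction above applied to the inductive hypothesis) to $b^{(t)}$, which differs only in the coordinate of the newly raised player $r_t$. If $r_t$ is unserviced at $b^{(t)}$, or serviced at exactly her bid, then reverting $r_t$ to $-1$ lets any unserviced $l\in L$ form with $r_t$ a successful coalition exactly as in the reduction, forcing $L\subseteq O(b^{(t)})$; the full third stability condition applied to $R=O(b^{(t)})\setminus U$ together with VP then forces $R=\emptyset$. The genuine difficulty, and the main obstacle of the whole argument, is the remaining subcase in which $r_t$ is serviced with \emph{strictly positive} utility while some player of $L$ is left out: now reverting $r_t$ would harm $r_t$, the simple coalition is unavailable, and the circular dependence between ``$O(b)\subseteq U$'' and ``$L\subseteq O(b)$'' must be broken directly. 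I expect to dispatch it as in Lemma~\ref{lem:char2}: lowering every unserviced outside player to $-1$ turns $b^{(t)}$ into a valid input of the fencing Lemma~\ref{lem:char2} for the pair $L,U\cup R$, whose conclusion $\xi(r_t,O(b^{(t)}))\geq\mlu{r_t}{L}{U\cup R}$ contradicts $b_{r_t}<\mlu{r_t}{L}{U\cup R}$ (stability) via VP. Verifying that lowering these losing bids does not disturb the outcome is the technical heart of the step, and is precisely what the coordinate-by-coordinate induction is designed to supply.
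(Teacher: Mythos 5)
Your reduction to $(\star)$ and the easy case of your induction are sound (modulo the small point that, for $i\in O(b)\setminus L$, VP only gives $\xi(i,O(b))\leq\m{i}$; the reverse inequality needs $L\subseteq O(b)\subseteq U$ and the definition of $\xi^*$ as a minimum). The genuine gap is exactly where you place the ``technical heart'': the subcase where $r_t$ is serviced with strictly positive utility. Two things go wrong there. First, the third stability condition applied to $R=O(b^{(t)})\setminus U$ only yields \emph{some} witness $i^*\in R$ with $b_{i^*}<\mlu{i^*}{L}{U\cup R}$; it does not yield this inequality for $r_t$, which is what you invoke. For $r_t$ itself you only have the singleton instance $b_{r_t}<\mlu{r_t}{L}{U\cup\{r_t\}}$, and since $\mlu{r_t}{L}{U\cup\{r_t\}}\geq\mlu{r_t}{L}{U\cup R}$ (Lemma~\ref{lem:xicompare}), the bound from Lemma~\ref{lem:char2} is perfectly compatible with it: $\mlu{r_t}{L}{U\cup R}\leq\xi(r_t,O(b^{(t)}))\leq b_{r_t}<\mlu{r_t}{L}{U\cup\{r_t\}}$ is not a contradiction unless $R=\{r_t\}$, and nothing in your induction prevents previously raised players from re-entering the outcome when $r_t$'s bid is raised. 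Second, ``lowering the unserviced outside players does not disturb the outcome'' is not something GSP gives you: the two-sided coalition arguments with the lowered (unserviced, hence zero-utility at both profiles) players preserve every player's \emph{utility}, not the outcome, and a player serviced at a payment equal to her bid may silently drop out of $O(\tilde b)$. This matters precisely because the actual stability witness $i^*$ may be such a zero-utility serviced player; she can then vanish from $O(\tilde b)$ and escape the bound of Lemma~\ref{lem:char2} altogether. Your hope that the coordinate-by-coordinate induction ``supplies'' the missing outcome-preservation is unfounded: the inductive hypothesis speaks about $b^{(t-1)}$, whereas $\tilde b$ alters a different set of coordinates and in general does not lie on your interpolation chain at all.

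The paper avoids this dead end by inducting on the other side of the fence: it varies the coordinates of players in $L$, on $m=|\{i\in L\mid b_i\neq b^*_i\}|$, exactly as in Lemma~\ref{lem:char1}, and never touches the bids of $\A\setminus U$. In the base case CS gives $L\subseteq O(b)$ outright; in the step, SP/GSP comparisons with $(b^*_j,b_{-j})$ propagate $L\subseteq O(b)$ and the minimum payments for $L$; and once $L\subseteq O(b)$ is in hand, the set inclusion $L\subseteq O(b)\subseteq U\cup R$ turns the stability witness for $R=O(b)\setminus U$ into an immediate VP contradiction, $\xi(i^*,O(b))\geq\mlu{i^*}{L}{U\cup R}>b_{i^*}\geq\xi(i^*,O(b))$, which eliminates all outside players --- positive- and zero-utility alike --- in one stroke, with no bid-lowering and no appeal to Lemma~\ref{lem:char2}. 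The moral is that $L\subseteq O(b)$ must be secured \emph{before} attacking $\A\setminus U$; your ordering, which tries to control $\A\setminus U$ first, is what manufactures the subcase you cannot close.
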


Lemma \ref{lem:exists}  implies  that there is a stable pair at every input. Thus, given a bid vector, we may apply Lemma \ref{lem:char3} with $L,U$  being the corresponding stable pair at this input, and get that $L\subseteq O(b)\subseteq U$ (Stability) and for all $i \in O(b)$, $\xi(i,O(b))=\m{i}$ (Validity of the tie-breaking rule).

\section{The classes of GSP and Fencing Mechanisms coincide}

In this section, we complete our characterization by proving that  Fencing Mechanisms are GSP.

\subsection{Properties of Stable pairs}
\begin{lemma}\label{lem:maxu}
For every bid vector $b$ and set $L$ with $L\subseteq \A$, there exists a unique maximal set $U$, $U \supseteq L$, such that for all $i \in U \setminus L$ we have $b_i \geq \m{i}$ and any other set with the same property is a subset of $U$.
Moreover, the pair $L,U$  satisfies condition 3. of Stability
\end{lemma}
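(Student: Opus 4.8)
The plan is to exploit the monotonicity of $\xi^*$ in its upper argument, which is exactly what Lemma~\ref{lem:xicompare} supplies: fixing $L$ and taking $U\subseteq U_1$ gives $\mlu{i}{L}{U}\geq\mlu{i}{L}{U_1}$ for every $i\in U$. The subtle point to keep in mind throughout is that $\m{i}=\mlu{i}{L}{U}$ itself depends on $U$, so enlarging $U$ simultaneously introduces new members to $U\setminus L$ and (weakly) lowers every threshold $\mlu{i}{L}{U}$; the monotonicity guarantees these two effects never conflict.

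Call a set $V$ with $L\subseteq V\subseteq\A$ \emph{good} if $b_i\geq\mlu{i}{L}{V}$ for all $i\in V\setminus L$. The set $L$ itself is good (vacuously), so the family of good sets is nonempty. First I would show this family is closed under union. Given good $V_1,V_2$, put $V=V_1\cup V_2$ and take any $i\in V\setminus L$; then $i\in V_1\setminus L$ or $i\in V_2\setminus L$, say the former, so $b_i\geq\mlu{i}{L}{V_1}$. Since $V_1\subseteq V$, Lemma~\ref{lem:xicompare} yields $\mlu{i}{L}{V_1}\geq\mlu{i}{L}{V}$, hence $b_i\geq\mlu{i}{L}{V}$ and $V$ is good. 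Taking $U$ to be the union of all good sets (a finite union, since $\A$ is finite) therefore produces a good set containing every good set; this is the desired unique maximal $U$, and any other set with the property is by construction a subset of it.

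For the ``moreover'' part I would argue by contradiction against maximality. Suppose condition~3 of Stability fails for the pair $L,U$: there is some $R\subseteq\A\setminus U$ (necessarily nonempty, since a genuine violation requires a witness set with no qualifying index) with $b_i\geq\mlu{i}{L}{U\cup R}$ for every $i\in R$. I claim $U'=U\cup R$ is good, contradicting the maximality of $U$ because $R$ is nonempty and disjoint from $U$. Indeed, since $L\subseteq U$ we have $R\cap L=\emptyset$, so $U'\setminus L=(U\setminus L)\cup R$. For $i\in R$ the assumption gives $b_i\geq\mlu{i}{L}{U'}$ directly. For $i\in U\setminus L$ goodness of $U$ gives $b_i\geq\mlu{i}{L}{U}$, and $U\subseteq U'$ together with Lemma~\ref{lem:xicompare} gives $\mlu{i}{L}{U}\geq\mlu{i}{L}{U'}$, so again $b_i\geq\mlu{i}{L}{U'}$. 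Hence $U'$ is good, the contradiction we sought, and condition~3 holds.

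The only genuinely delicate step is the union-closure argument, and the difficulty there is conceptual rather than computational: one must notice that the thresholds $\mlu{i}{L}{\cdot}$ shrink as the upper set grows, so that membership of $i$ in a smaller good set $V_1$ already certifies $b_i\geq\mlu{i}{L}{V}$ for the larger set $V$. Once this monotonicity is in hand, both the existence and uniqueness of the maximal $U$ and the verification of Stability condition~3 follow from the same one-line comparison via Lemma~\ref{lem:xicompare}.
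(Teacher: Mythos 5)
Your proof is correct and takes essentially the same approach as the paper: both arguments use the monotonicity of $\xi^*$ from Lemma~\ref{lem:xicompare} to show the family of qualifying sets is closed under union (hence a unique maximal $U$ exists), and both obtain condition~3 of Stability by observing that a violating set $R$ would make $U\cup R$ qualify, contradicting maximality. Your explicit construction of $U$ as the union of all good sets is, if anything, a slightly cleaner packaging of the paper's argument about two incomparable sets.
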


Consider all possible outcomes of the mechanism, that contain every player in $L$. Now we remove all the sets $S$, where at least one player $i \in S\setminus L$, has bidden strictly less than her payment in $S$. The set $U$ is the union of all the sets that remain after this filtering. Notice, that this is always  true   only if the underlying cost sharing scheme satisfies condition (b) of \lcmd

Furthermore, notice that we haven't yet shown that that there is  a stable pair at every input. The next two properties will be used together in our proofs as a criterion whether a pair $L,U$ is stable at a given bid vector $b$. 

\begin{lemma}\label{lem:uncov2}
Suppose that $L,U$ is a stable pair at the bid vector $\bv$ and that $S$ is set with the property that for all $i\in S$ we have that $b_i-\xi(i,S)\geq 0$. (If the mechanism would output $S$ then all players are would have been served with non-negative utility.)  If

(a) $S\not\subseteq U$, or

(b) if $S\subseteq U$ and for some $i\in S$ we have $\m{i}>\xi(i,S)$,
\\
there exists some non-empty set $T\subseteq L\setminus S$, such that for all $j\in T$ we have $\xi(j,S\cup T)<b_j$.
\end{lemma}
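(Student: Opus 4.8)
The plan is to unify the two cases by passing to the enlarged universe $U':=U\cup(S\setminus U)$ and applying part (c) of \lcm to the pair $L,U'$. Recall that in this direction we assume the scheme $\xi$ satisfies \lcm at \emph{every} pair, in particular at $L,U'$. Note that $S\subseteq U'$ and $L\subseteq U\subseteq U'$ always hold. The first step is to produce a player $r\in S$ that is ``harmed'' in the sense required by condition~(c), namely $\xi(r,S)<\mlu{r}{L}{U'}$. In case (a), where $R:=S\setminus U\neq\emptyset$, I would apply condition~3 of Stability to the set $R\subseteq\A\setminus U$ to obtain some $r\in R$ with $b_r<\mlu{r}{L}{U\cup R}=\mlu{r}{L}{U'}$; combining this with the hypothesis $b_r\ge\xi(r,S)$ yields $\xi(r,S)<\mlu{r}{L}{U'}$. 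In case (b) we have $R=\emptyset$, hence $U'=U$, and the assumed player $i\in S$ with $\m{i}>\xi(i,S)$ serves directly as $r$.

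Next I would observe that this strict inequality forces $L\not\subseteq S$: were $L\subseteq S$, then $S$ would be a feasible set in the minimization defining $\mlu{r}{L}{U'}$ (since $L\subseteq S\subseteq U'$ and $r\in S$), giving $\mlu{r}{L}{U'}\le\xi(r,S)$, a contradiction. Thus $L\setminus S\neq\emptyset$ and $S$ is a proper subset of $U'$, so the hypotheses of \lcm part (c) for the pair $L,U'$ are met with $C=S$ and harmed player $r$. This yields a non-empty set $T\subseteq L\setminus S$ with $\xi(j,S\cup T)=\mlu{j}{L}{U'}$ for every $j\in T$, which is exactly the $T$ claimed by the Lemma; it remains only to verify the utility inequality. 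For each $j\in T\subseteq L$, Lemma~\ref{lem:xicompare} (with the inclusions $L\subseteq L$ and $U\subseteq U'$) gives $\mlu{j}{L}{U'}\le\m{j}$, while condition~1 of Stability gives $b_j>\m{j}$. Chaining these, $\xi(j,S\cup T)=\mlu{j}{L}{U'}\le\m{j}<b_j$, which is the desired $\xi(j,S\cup T)<b_j$.

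The conceptual heart of the argument, and the step I expect to be the main obstacle, is recognizing that the two hypotheses (a) and (b) are two faces of the same phenomenon once the universe is widened to $U'$: condition~3 of Stability is precisely the device that manufactures, out of the players of $S$ lying outside $U$, a player that is harmed relative to $L,U'$, thereby reducing case (a) to the same application of \lcm part (c) that handles case (b). Everything after the extraction of $r$ is routine monotonicity bookkeeping through Lemma~\ref{lem:xicompare} and condition~1 of Stability.
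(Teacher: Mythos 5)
Your proposal is correct and follows essentially the same route as the paper's own proof: both use condition 3 of Stability to extract, in case (a), a player $r\in S\setminus U$ with $\xi(r,S)<\mlu{r}{L}{U\cup S}$ (case (b) supplying such a player directly), then apply part (c) of \lcm at the pair $L,\,U\cup S$ with $C=S$, and finish via the monotonicity of $\xi^*$ (Lemma~\ref{lem:xicompare}) together with condition 1 of Stability. The only difference is cosmetic: you spell out explicitly why $L\not\subseteq S$ (hence $S\subset U\cup S$), a point the paper treats as immediate.
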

This lemma is an immediate consequence of part (c) of \lcm  and the definition of stability.

\begin{lemma}\label{lem:cov3}
Suppose that $L\subseteq S\subseteq U$ and that there exists a non-empty $T\subseteq \A\setminus S$ such that for all $i\in T$ we have $b_i\geq \xi(i,S\cup T)$ and that for at least one player from $T$ the inequality is strict.
Then $L,U$ is not a stable set at the bid vector $\bv$.
\end{lemma}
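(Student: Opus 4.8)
The plan is to argue by contradiction: assume that $L,U$ is stable at $b$ while the stated hypotheses on $S$ and $T$ hold, and derive a violation of one of the three stability conditions. The first observation is that $T$ is disjoint from $L$, since $T\subseteq\A\setminus S$ and $L\subseteq S$ give $T\cap L\subseteq T\cap S=\emptyset$; hence every player of $T$ lies either in $U\setminus L$ or in $\A\setminus U$. Accordingly I would split $T$ into $T_1=T\cap U$ and $T_2=T\setminus U$, where $T_1\subseteq U\setminus L$ and $T_2\subseteq\A\setminus U$.

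The key inequality is that $S\cup T$ is an admissible set for the relevant minimum. Since $L\subseteq S$ and $S\cup T=S\cup T_1\cup T_2\subseteq U\cup T_2$ (using $S\subseteq U$ and $T_1\subseteq U$), the set $S\cup T$ satisfies $L\subseteq S\cup T\subseteq U\cup T_2$ and contains every player of $T$. Therefore, by the definition of $\xi^*$ as a minimum over such sets, $\xi(i,S\cup T)\geq\mlu{i}{L}{U\cup T_2}$ for each $i\in T$; combined with the hypothesis $b_i\geq\xi(i,S\cup T)$ this yields $b_i\geq\mlu{i}{L}{U\cup T_2}$ for every $i\in T$, with the inequality strict for the distinguished player $i^{*}\in T$.

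Next I would branch on whether $T$ escapes $U$. If $T_2\neq\emptyset$, then taking $R=T_2\subseteq\A\setminus U$ in stability condition 3 would force some $i\in T_2$ with $b_i<\mlu{i}{L}{U\cup T_2}$, directly contradicting the weak inequality just derived (here only the non-strict bound, which holds for every $i\in T_2$, is needed). If instead $T_2=\emptyset$, then $T\subseteq U\setminus L$, so $U\cup T_2=U$ and stability condition 2 forces $b_i=\m{i}$ for every $i\in T$; but for $i^{*}$ the strict version of the key inequality gives $b_{i^{*}}>\mlu{i^{*}}{L}{U}=\m{i^{*}}$, again a contradiction. In either case stability fails, proving the lemma.

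The proof is short, and I do not anticipate a deep obstacle; the only point requiring care is the bookkeeping of which players of $T$ lie inside versus outside $U$, since that determines whether the offending bid is caught by condition 2 or by condition 3. In particular one must notice that in the case $T_2\neq\emptyset$ the weak inequalities alone suffice, because condition 3 merely asks for the existence of a single violated player, whereas the strict bid of $i^{*}$ is exactly what breaks the equality demanded by condition 2 in the case $T\subseteq U$.
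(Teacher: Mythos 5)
Your proof is correct and follows essentially the same route as the paper's: a contradiction argument with the same case split on whether $T$ has elements outside $U$, using the bound $\xi(i,S\cup T)\geq\mlu{i}{L}{U\cup T}$ to contradict stability condition 3 (via $R=T\setminus U$) in one case and condition 2 (via the strictly profiting player) in the other. If anything, your derivation of the key inequality is slightly more careful than the paper's, which states the intermediate bound $\mlu{i}{L}{U}\leq\xi(i,S\cup T)$ in a form that is only valid in the case $T\subseteq U$, whereas you go directly to $\mlu{i}{L}{U\cup T_2}$, which is correct in both cases.
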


\subsection{Uniqueness and group-stra\-te\-gy\-proofness}
\begin{lemma}\label{lem:unique}
If for some bid vector $\bv$ there exists a stable pair then it is unique.
\end{lemma}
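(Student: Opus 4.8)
The plan is to prove uniqueness by contradiction: suppose two distinct stable pairs $(L,U)$ and $(L',U')$ exist at the same bid vector $\bv$. Since both pairs are stable, Lemma~\ref{lem:char3} applies to each, so the output set $O(b)$ simultaneously satisfies $L\subseteq O(b)\subseteq U$ and $L'\subseteq O(b)\subseteq U'$, with $\xi(i,O(b))=\m{i}$ for the first pair and $\xi(i,O(b))=\mlu{i}{L'}{U'}$ for the second (where I abbreviate $\xi^*$ relative to the appropriate pair). Actually, since stability is defined purely in terms of $\bv$ and $\xi$ and does not reference $O$, the cleaner route is to show directly that the defining conditions of stability pin down $L$ and $U$ uniquely, rather than routing through the allocation.

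First I would establish uniqueness of $U$ given $L$. By Lemma~\ref{lem:maxu}, for a fixed $L$ there is a \emph{unique} maximal set $U\supseteq L$ with $b_i\geq\m{i}$ for all $i\in U\setminus L$, and that maximal pair satisfies condition 3 of stability. Conditions 1 and 2 of stability force $b_i>\m{i}$ on $L$ and $b_i=\m{i}$ on $U\setminus L$, so any stable $U$ consists exactly of $L$ together with players satisfying $b_i=\m{i}$; condition 3 then forbids shrinking $U$, because any player $i\in U\setminus L$ dropped from $U$ would form (or belong to) a set $R\subseteq\A\setminus U'$ violating the ``no uncovered profitable addition'' requirement. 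So once $L$ is fixed, the second and third stability conditions jointly determine $U$.

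The crux is therefore to show that $L$ itself is forced. Suppose $(L,U)$ and $(L',U')$ are both stable with $L\neq L'$. Without loss of generality take a player $j\in L\setminus L'$. Then $b_j>\m{j}$ by condition 1 for the first pair. For the second pair, $j$ is either outside $U'$, in $U'\setminus L'$, or the pairs overlap in a way I must rule out. I would feed the set $S=O(b)$ (which by Lemma~\ref{lem:char3} satisfies $L\cup L'\subseteq$-type containments) into Lemmas~\ref{lem:uncov2} and~\ref{lem:cov3}, which the excerpt explicitly advertises as ``a criterion whether a pair $L,U$ is stable.'' Concretely, if $S$ is the valid output relative to $(L',U')$ but $j\in L\setminus L'$ has $b_j>\m[][L]{j}$ while being excludable relative to $(L',U')$, then Lemma~\ref{lem:cov3} (with $T$ containing $j$, since $b_j\geq\xi(j,S\cup T)$ strictly) yields that $(L',U')$ is \emph{not} stable, the desired contradiction. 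Symmetrically any player in $L'\setminus L$ contradicts stability of $(L,U)$.

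The main obstacle I anticipate is verifying the containment and payment hypotheses needed to invoke Lemmas~\ref{lem:uncov2} and~\ref{lem:cov3} correctly — specifically, confirming that the witness set $T\subseteq L\setminus S$ from Lemma~\ref{lem:uncov2} genuinely has strictly positive utility for its members ($\xi(j,S\cup T)<b_j$) so that Lemma~\ref{lem:cov3} can then be applied to derive non-stability of the \emph{other} pair. The bookkeeping of which $\xi^*$ (relative to which $L,U$) appears where, and ensuring the strict inequalities survive, is the delicate part; the set-theoretic skeleton is otherwise a direct application of the two stability-criterion lemmas combined with the maximality of $U$ from Lemma~\ref{lem:maxu}.
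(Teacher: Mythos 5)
Your high-level skeleton --- Lemma \ref{lem:maxu} to settle $U$ once $L$ is fixed, and a Lemma \ref{lem:uncov2}/Lemma \ref{lem:cov3} pipeline to force equality of the lower sets --- is the paper's skeleton, and your second paragraph (uniqueness of $U$ given $L$) matches the paper's second step. But the crux, equality of $L$ and $L'$, has a genuine gap: you take $S=O(b)$ and justify its properties by Lemma \ref{lem:char3}. That is circular. Lemma \ref{lem:char3} is proved in Section 5 for mechanisms \emph{assumed} to be GSP, whereas Lemma \ref{lem:unique} belongs to Section 6, where the entire goal is to prove that Fencing Mechanisms are GSP; worse, the output of a Fencing Mechanism is only well defined once uniqueness of the stable pair is known, so $O(b)$ cannot legitimately appear in this proof at all. (You even note at the outset that the cleaner route avoids the allocation, but your crux paragraph reverts to it.) The non-circular substitute is \lcm part (a): for the stable pair $(L_2,U_2)$ there exists a set $S_2$ with $L_2\subseteq S_2\subseteq U_2$ and $\xi(i,S_2)=\mlu{i}{L_2}{U_2}$ for all $i\in S_2$. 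This set has exactly the properties you wanted from $O(b)$, with no reference to any mechanism --- Lemma \ref{lem:unique} is a statement purely about $\xi$ and $b$ and its proof must stay at that level.

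The second problem is that your role assignment is inverted, and this is precisely where the ``delicate bookkeeping'' you flag breaks down. You attach $S$ to the pair $(L',U')$ \emph{not} containing $j$ and try to place $j$ inside the set $T$ fed to Lemma \ref{lem:cov3}, asserting $b_j\geq\xi(j,S\cup T)$ strictly. Nothing supports that inequality: stability gives only the lower bound $b_j>\mlu{j}{L}{U}$, and upper bounds on actual payments $\xi(j,\cdot)$ come only from \lcm (packaged in Lemma \ref{lem:uncov2}), never from the definition of $\xi^*$. And if you instead try to extract $T$ from Lemma \ref{lem:uncov2} applied to $(L,U)$ with your $S$, its premise (b) is unverifiable: for $j\in S$ the two stability conditions yield $\xi(j,S)=b_j>\mlu{j}{L}{U}$, the \emph{opposite} of the required $\mlu{j}{L}{U}>\xi(j,S)$. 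The paper's assignment is the reverse of yours: take $j\in L_2\setminus L_1$ and let $S_2$ be the \lcm (a) witness of the pair $(L_2,U_2)$ whose lower set \emph{contains} $j$. Then $j\in S_2$, and if $S_2\subseteq U_1$, condition 2 of stability of $(L_1,U_1)$ gives $b_j=\mlu{j}{L_1}{U_1}$ while condition 1 of stability of $(L_2,U_2)$ gives $b_j>\xi(j,S_2)$, so $\mlu{j}{L_1}{U_1}>\xi(j,S_2)$ --- exactly premise (b) of Lemma \ref{lem:uncov2} for the pair $(L_1,U_1)$. That lemma then produces a non-empty $T\subseteq L_1\setminus S_2$ (note $j\notin T$, contrary to your sketch) with $b_i>\xi(i,S_2\cup T)$ for all $i\in T$, and Lemma \ref{lem:cov3} applied to $(L_2,U_2)$ with this $S_2$ and $T$ contradicts its stability.
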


A crucial point of our proof is that for the following step we assume that if there exists a stable pair the mechanism produces an outcome, otherwise it terminates without providing an answer. We will first show that the mechanism is GSP, wherever there exists a stable pair. Then we will use the fact that the mechanism satisfies group-stra\-te\-gy\-proofness for inputs that have a stable pair, to prove the existence of a stable pair for every input.

\begin{lemma}
For the inputs where there exists a stable pair, every Fencing Mechanism is group-strategyproof.
\end{lemma}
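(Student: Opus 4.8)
I need to prove that no coalition can profitably manipulate, given that the mechanism finds a stable pair $L,U$ and outputs $S = \sigma(L,U,b)$ with payments $\xi(i,S) = \m{i}$.

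Let me set up the framework. Fix a truthful bid vector $v$ and a deviated vector $v'$ agreeing on coordinates outside the coalition $\mathcal{C} = A \setminus S_{\text{fixed}}$... wait, let me re-read the GSP definition. We have two vectors $v, v'$ agreeing on all $i \in S$ (the non-coalition players, confusingly), and we want to rule out a successful coalition on $A \setminus S$.

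Let me restate cleanly. Let $L,U$ and $L',U'$ be the stable pairs at $v$ and $v'$ respectively, with outputs $O = \sigma(L,U,v)$ and $O' = \sigma(L',U',v')$. The coalition is some set $K$ of players who change their bids (their true values are $v$); players outside $K$ have $v_i = v'_i$. I want: either some liar in $K$ strictly loses utility under $v'$, or every player in $K$ has identical utility under $v$ and $v'$.

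The utility of player $i$ under outcome $O$ at true value $v_i$ is $v_i \cdot [i\in O] - \xi(i,O)$.

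**Key steps.** First I would show the crucial structural fact: for every player $i$, utility at a stable output is nonnegative, and more precisely, a player in $L$ gets strictly positive utility while a player in $U\setminus L$ gets exactly zero (since $v_i = \m{i} = \xi(i,O)$ by validity), and a player outside $U$ gets zero. This follows directly from Stability conditions 1 and 2 plus validity of the tie-breaking rule. Next, I would use the uniqueness lemma (Lemma~\ref{lem:unique}) — though uniqueness per input is not itself what I need; what I need is to compare the two distinct inputs $v$ and $v'$.

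**The core argument.** Suppose toward contradiction that $K$ is a successful coalition: every $i \in K$ has utility under $v'$ at least its utility under $v$, with at least one strict gain (by someone, possibly outside $K$). The plan is to play the stable pair $L',U'$ at $v'$ against the true values $v$. The idea: consider the output $O'$ at $v'$. Players in $K$ must weakly improve, so for each $i \in K$ either $i$ enters the serviced set at a good price or stays out at zero. I would then invoke Lemma~\ref{lem:uncov2} and Lemma~\ref{lem:cov3} as a \emph{stability test}: I want to show that $L,U$ (the stable pair at the \emph{true} $v$) cannot actually be stable if such a profitable deviation exists, or symmetrically derive that $O'$ itself would have to coincide with $O$ in the relevant coordinates, forcing equal utilities and contradicting strictness. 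Concretely, I would examine the set of coalition members who are serviced under $O'$ and compare $\xi(i,O')$ against $\m{i}$ at the pair $L,U$; a strict utility gain for a serviced player means $\xi(i,O') < v_i$ in a way that, combined with the fixed bids of non-coalition members, lets me build a set satisfying the hypotheses of Lemma~\ref{lem:cov3}, contradicting the stability of $L,U$ at $v$.

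**Main obstacle.** The hard part will be handling the interplay between players who \emph{change their serviced status} (in under one outcome, out under the other) and those who merely change price, while respecting that coalition members at a tie (in $U \setminus L$) are genuinely indifferent and may lie altruistically. This is exactly the subtlety the introduction flags about ties. I expect the cleanest route is to argue that under any successful deviation, the output $O'$ together with the true values $v$ violates the stability of $L,U$ at $v$: either some coalition member is serviced more cheaply than $\m{i}$ permits (contradicting validity/Fence Monotonicity part (c) via Lemma~\ref{lem:uncov2}), or a set of players could be added with non-negative utility and one strict improvement (contradicting Lemma~\ref{lem:cov3}). I would organize this by cases on whether $O' \subseteq U$ and whether the strict gainer is serviced, using Fence Monotonicity part (c) as the engine that always produces the "fencing" set $T \subseteq L$ blocking the deviation. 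The delicate bookkeeping is ensuring the constructed set genuinely witnesses instability rather than merely a price change among indifferent players.
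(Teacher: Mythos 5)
You have assembled exactly the right toolkit---Lemma~\ref{lem:uncov2}, Lemma~\ref{lem:cov3}, the case split on whether $O(b')\subseteq U$, and the observation that a strict gainer $j$ must be served at $b'$ below $\m{j}$---but your core argument aims the final contradiction at the wrong input, and this is where the proof actually lives. A successful deviation does \emph{not} contradict the stability of $L,U$ at the true values; that stability is a hypothesis you must \emph{use}, not refute. In the paper's proof, Lemma~\ref{lem:uncov2} is applied at the truthful pair $L,U$ with $S=O(b')$ (its premises hold because every player served at $b'$ has non-negative true utility: coalition members because liars may not sacrifice, the others by VP since their bids are unchanged), and its \emph{conclusion} is a non-empty fencing set $T\subseteq L\setminus O(b')$ with $b_i>\xi(i,O(b')\cup T)$ for all $i\in T$. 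This is not yet a contradiction of anything.

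The step your plan is missing is the cross-over that transfers $T$ from the truthful input to the deviated one: every player in $T$ lies in $L$, hence is truthfully served with strictly positive utility, yet is excluded from $O(b')$; so if such a player were a liar she would be sacrificing utility, which the definition of a successful coalition forbids. Therefore $b_i=b'_i$ for all $i\in T$, and only now can the inequality be read at the deviated bids and fed into Lemma~\ref{lem:cov3} \emph{at the deviated input}: since $L'\subseteq O(b')\subseteq U'$, the set $T$ with $b'_i>\xi(i,O(b')\cup T)$ witnesses that $L',U'$ is not stable at $b'$---the actual contradiction. Your phrasing (``contradicting the stability of $L,U$ at $v$,'' and ``serviced more cheaply than $\m{i}$ permits, contradicting \ldots\ Lemma~\ref{lem:uncov2}'') conflates the premises of Lemma~\ref{lem:uncov2} with a contradiction: cheap service is precisely what triggers that lemma, not what it rules out. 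Without the unchanged-bids argument for $T$, the ``delicate bookkeeping'' you defer cannot be closed, because Lemma~\ref{lem:cov3} needs the bids of $T$ at $b'$, not at $b$.
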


\begin{proof}
Let $\bv$ and $\bpv$ be two bid vectors, and let $L,U$ and $L',U'$ be their corresponding unique (from Lemma \ref{lem:unique}) stable pairs and $O(\bv)$ and $O(\bpv)$ the corresponding outputs. Assume towards a contradiction that some of the players can form a successful coalition when the true values are $\bv$ reporting $\bpv$.

We will first show that any player $i$ served in the new outcome, $i\in O(\bpv )$, has non-negative utility i.e. $b_i\geq\xi (i,O(\bpv ))$. Take some $i$ that is output in $O(\bpv)$. If $b_i=b'_i$ then it holds trivially since the mechanism satisfies VP at the outcome $O(\bpv)$. If $b_i\neq b'_i$ then $i$ changes his bid to be part of the coalition and consequently his utility after this coalition is non-negative, which gives $b_i-\xi(i,O(\bpv))\geq 0$.

The next step is to apply Lemma \ref{lem:uncov2} to show that there exists some non-empty $T\subseteq L\setminus O(\bpv)$ such that for all $i\in T$ we have $b_i>\xi(i,O(\bpv)\cup T)$. If $O(\bpv)\not\subseteq U$ then the premises of the Lemma hold trivially.

Suppose that $O(\bv)\subseteq U$. For the coalition to be successful the utility of at least one player $j$ increases strictly when the players bid $\bpv$ consequently $j\in O(\bpv)$ and $b_j-\xi(j,O(\bpv))> 0$. We will show that $\m{j}>\xi(j,O(\bpv))$. If $j$ is not served at $\bv$ and since $O(\bpv)\subseteq U$ from stability we get that $j\in U\setminus L$ and $b_j=\m{j}>\xi(j,O(\bpv))$. If $j$ is served at $\bv$ then her payment equals $\m{j}$ by the definition of the mechanism and in order that she profits strictly it must be $\xi(j,O(\bpv))<\m{j}$, so we can again apply Lemma \ref{lem:uncov2}.

Finally we will show that for all $i\in T$ we have $b_i=b'_i$.  After the manipulation the players in $T$ are not serviced, while as $T\subseteq L$ from stability we have that in the truthful scenario the players in $T$ are serviced with positive utility. Consequently the players in $T$ wouldn't have an incentive to be part of the coalition and change their bids.

Putting everything together we get that there exists a $T\subseteq \A \setminus O(\bpv)$ such that for all $i\in T$ we have $b'_i>\xi(i,O(\bpv) \cup T)$, which by Lemma \ref{lem:cov3} contradicts our initial assumption that $L',U'$ is stable at $b'$.
\end{proof}

\subsection{Existence of a stable pair for every input}
\begin{lemma}\label{lem:exists}
For every bid vector $\bv$ there exists a unique stable pair.
\end{lemma}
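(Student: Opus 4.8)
The plan is to build the stable pair constructively and then use the group-strategyproofness result we just established to certify its stability, thereby bootstrapping existence from the GSP property on inputs that already have a stable pair. First I would fix a bid vector $\bv$ and consider the empty set as an initial candidate for $L$. By Lemma~\ref{lem:maxu}, for any candidate set $L$ there is a unique maximal $U \supseteq L$ with $b_i \geq \m{i}$ for all $i \in U\setminus L$, and this pair already satisfies condition~3 of stability. So the entire difficulty is to produce an $L$ for which conditions~1 and~2 also hold, i.e. every player in $L$ bids strictly above her minimum payment and every player in $U\setminus L$ bids exactly her minimum. The natural approach is an iterative refinement: starting from $L=\emptyset$ and the associated maximal $U$, repeatedly move into $L$ any player of $U\setminus L$ whose bid strictly exceeds $\mlu{i}{L}{U}$, recompute the maximal $U$, and continue until no such player remains.

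Next I would argue this process terminates and that its fixed point is a stable pair. Termination should follow because each refinement step either strictly enlarges $L$ or leaves the pair unchanged, and $\A$ is finite; here Lemma~\ref{lem:xicompare} is the key monotonicity tool, since enlarging $L$ can only increase the minimum payments $\mlu{i}{L}{U}$, so once a player qualifies for inclusion she continues to qualify. At the fixed point, condition~3 holds by Lemma~\ref{lem:maxu}, condition~1 holds because we only placed in $L$ players bidding strictly above their (final) minimum, and condition~2 holds because no player in $U\setminus L$ triggered a further move, meaning each such player bids exactly $\m{i}$ rather than strictly more (bidding strictly less is already excluded by the definition of $U$). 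The subtle point to check is that inserting a player into $L$ and recomputing $U$ does not retroactively invalidate condition~1 for players already in $L$; this is exactly where the monotonicity of $\xi^*$ under growing $L$ (Lemma~\ref{lem:xicompare}) is needed to guarantee that previously-strict inequalities $b_i > \mlu{i}{L}{U}$ survive.

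The anticipated alternative—and the route the paper seems to set up via the preceding remark that the mechanism is GSP wherever a stable pair exists—is an indirect one: if some input $\bv$ had no stable pair, the mechanism would refuse to produce an output there, and one would derive a contradiction with group-strategyproofness by comparing $\bv$ to a nearby input that does admit a stable pair, exhibiting a profitable deviation. I would keep this in reserve and lean on Lemmas~\ref{lem:uncov2} and~\ref{lem:cov3} as the certification criterion: together they say precisely that a pair $L,U$ fails stability exactly when one can cover it by a strictly-profitable superset or strictly undercut a served player, so verifying that the fixed point of the refinement admits no such escape is equivalent to verifying stability.

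The main obstacle I expect is proving that the refinement process is well-founded and cannot oscillate—that recomputing the maximal $U$ after each insertion into $L$ does not eject a player from $L$ or force a player we had settled as a strict member back to the boundary. Establishing that the map $L \mapsto U \mapsto L'$ is monotone and has a unique least fixed point is the technical heart; once monotonicity via Lemma~\ref{lem:xicompare} is nailed down, uniqueness of the resulting pair is already guaranteed by Lemma~\ref{lem:unique}, so I would invoke that lemma at the end rather than re-prove uniqueness from scratch.
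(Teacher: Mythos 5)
There is a genuine gap, and it sits exactly at the point you flag as ``the subtle point to check.'' Your iteration promotes into $L$ every player whose bid strictly exceeds her current minimum payment, and you claim that Lemma~\ref{lem:xicompare} guarantees that previously-strict inequalities $b_i > \mlu{i}{L}{U}$ survive the promotion. But Lemma~\ref{lem:xicompare} gives monotonicity in the \emph{opposite} direction: enlarging $L$ (and shrinking $U$) can only \emph{increase} $\mlu{i}{L}{U}$, since the minimum is then taken over fewer sets. So after you promote a player (or several), the minimum payment of a player already in $L$ --- including the player just promoted --- can jump strictly above her bid, destroying condition~1 of Stability; and since your process never demotes anyone, the fixed point it terminates at (conditions~2 and~3 do hold there, as you argue) need not satisfy condition~1 and hence need not be stable. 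This failure mode is not a pathology that monotonicity rules out; it is precisely the phenomenon the paper's proof has to confront: under the hypothesis that $b$ has no stable pair, Claim~\ref{claim:subset}(a) shows that \emph{every} candidate $i\in W$, once forced into the lower set (yielding the pair $L_i,U_i$), ends up with $b_i \leq \mlu{i}{L_i}{U_i}$ --- exactly the collapse of condition~1 that your argument assumes cannot happen.

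The route you ``keep in reserve'' is in fact the paper's actual proof, and it cannot be treated as a backup one-liner. The paper proceeds by induction on $m=|\{i \mid b_i < b_i^*\}|$: it sets $L^*=\{i \mid b_i\geq b_i^*\}$ with its maximal $U^*$ from Lemma~\ref{lem:maxu}, observes that non-stability forces $W=\{i\in U^*\setminus L^* \mid b_i>\mlu{i}{L^*}{U^*}\}\neq\emptyset$, uses the induction hypothesis to obtain the stable pair $L_i,U_i$ of $(b_i^*,b_{-i})$ for each $i\in W$, and then derives a contradiction by taking $i\in W$ with $|L_i|$ minimal and producing $j\in W$ with $L_j\subset L_i$ (Claims~\ref{claim:inclusion*}--\ref{claim:notgsp}). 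Crucially, Claim~\ref{claim:notgsp}(c) invokes the group-strategyproofness of the mechanism \emph{on inputs that already possess stable pairs}, which is why the paper must establish that lemma first and why existence cannot be obtained by purely combinatorial manipulation of $\xi^*$ as your primary argument attempts. Your final appeal to Lemma~\ref{lem:unique} for uniqueness is fine and matches the paper, but the existence half of your proposal does not go through.
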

\begin{proof}
Let $b_i^*$ be some value that is big enough so that player $i$ always gets serviced, you can let $b_i^*>\max_{S\subseteq \A}\xi(i,S)$, since the allocation of the mechanism satisfies Stability. We will show that there exists a stable pair at any input $\bv$ by induction on the number $m$ of coordinates that are less than $b_i^*$, i.e. on the number $m=|\{i\mid b_i<b_i^*\}|$.

\textbf{Base:} For $m=0$, we only have to show that there exists a stable pair for the bid vector $(b_1^*,\ldots,b_n^*)$ and $\A,\A$ is a stable pair.

\textbf{Induction Step:} Suppose that if a bid vector has $m-1$ coordinates that are less than $b_i^*$, then it has a stable pair. We will show that if a bid vector $\bv$ has $m$ coordinates that are less than $b_i^*$ then it also has a stable pair. We will suppose towards a contradiction that there exists no stable pair at $\bv$.

We first need some definitions. Let $L^*:=\{i\mid b_i\geq b_i^*\}$ and $U^*$ the corresponding (from Lemma \ref{lem:maxu}) maximal set.  Notice
that for all $i \in L^*$, $b_i = b^*_i > \mlu{i}{L^*}{U^*}$ by the definition of $b^*_i$, which implies that the pair $L^*,U^*$ satisfies the first condition of stability. Moreover, from Lemma 6 we get that  the third condition  of stability is satisfied for $L^* , U^*$ as well. Thus, as $L^*,U^*$ cannot be stable at $\bv$ (from our assumption), the set $W=\{i\mid i\in U^*\setminus L^* \text{ and } b_i>\mlu{i}{L^*}{U^*}\}$ should be non-empty.

 For each $i\in W$ we define a corresponding pair $L_i,U_i$ as follows: The pair $L_i,U_i$ is the unique (by Lemma \ref{lem:unique}) stable pair of $(b_i^*,\bv_{-i})$, which exists by the induction hypothesis.
\begin{claim}\label{claim:inclusion*}
If $L_i,U_i$ is the stable pair of $(b_i^*,\bv_{-i})$, then

(a) $L^*\cup\{i\}\subseteq L_i$ and (b) $U_i\subseteq U^*$.

 (c) If $j\in L_i\setminus (L^*\cup \{i\})$ then $j\in W$ and $b_j>\xi^*(L^*,U^*)$ .
\end{claim}
\begin{claim}\label{claim:subset}
If there exists no stable pair at $\bv$, then for all $i\in W$ we have

(a) $b_i\leq \mlu{i}{L_i}{U_i}$.

(b)  $L^*\cup \{i\}\subset L_i$.

\end{claim}

Since $W\neq \emptyset$ there exists some $i\in W$ such that $L_i$ has minimum cardinality, i.e. $i=\arg\min_{i\in W}|L_i|$. By Claim \ref{claim:subset} (b) there exists some $j\in L_i\setminus (L^*\cup \{i\})$ and by Claim \ref{claim:inclusion*} (c) $j \in W$. We will show that $L_j\subset L_i$ contradicting the choice of $i$.

\begin{claim}\label{claim:notgsp}
If there exists no stable pair at $b$ and $j\in L_i\setminus (L^*\cup \{i\})$ then

(a) The pair $L_i,U_i$ is stable at the bid vector $(b_i^*,b_j^*,\bv_{-\{i,j\}})$.

(b) $\mlu{j}{L_j}{U_j}>\mlu{j}{L_i}{U_i}$ and $i\notin L_j$.

(c) If there exists some $k\in L_j\setminus L_i$, then the mechanism is not group-stra\-te\-gy\-proof at inputs where a stable pair exists.
\end{claim}

Now since $j\in W$ and from Claim \ref{claim:notgsp} $i \notin L_j$ and for every $k\in L_j$ we have that also $k\in L_i$, we get that $L_j\subset L_i$, which completes the proof.
\end{proof}

\section{A lower bound for the Budget Balance of any GSP mechanism}

In the last section we demonstrate the use of our character\-ization by showing that even in the case of three players 
there is a family of cost functions parameterized with a vari\-able $x$, where every GSP mechanism cannot achieve better
budget-balance than $\frac{1}{x}$.

First, we show some consequences of \lcm for small numbers of players, that will simplify the use of it.

\begin{proposition}\label{prop:neg-neu}
Let $\xi$ be a  cost sharing scheme, that satisfies \lcmd For all $S\subseteq\A$ and two distinct $i,j\in S$,  if $\xi(j,S \setminus \{i\})<\xi(j,S)$ then

(a) For all $k\in S\setminus \{i,j\}$, $\xi(k,S\setminus\{i\})\leq\xi(k,S)$

(b) $\xi(i,S\setminus \{j\})\leq\xi(i,S)$.

(c) $\xi(i,S\setminus \{j\})\geq\xi(i,S)$.

(part (a) is an alternative definition of  semi-cross monotonicity (\cite{imm08}))
\end{proposition}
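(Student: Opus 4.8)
The plan is to prove all three parts by specializing Fence Monotonicity to a restriction $(L,U)$ with $U=S$ and $L$ obtained from $S$ by deleting one or two of the players $i,j$, chosen so that the minimum-payment operator $\xi^*$ collapses to a minimum over just the two sets in play. Concretely, for $U=S$ and $L=S\setminus\{i\}$ the only candidate sets are $S\setminus\{i\}$ and $S$, so $\mlu{m}{S\setminus\{i\}}{S}=\min\{\xi(m,S\setminus\{i\}),\xi(m,S)\}$, and similarly for $L=S\setminus\{i,j\}$ one gets a four-element lattice in which each individual $\xi^*$ still reduces to a two-term minimum. In every part the hypothesis $\xi(j,S\setminus\{i\})<\xi(j,S)$ is used to rule out the ``wrong'' set among those whose existence the three conditions of Fence Monotonicity guarantee. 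Part (a) will use condition (a), part (b) condition (b), and part (c) condition (c) of Fence Monotonicity; I expect part (c) to be the main obstacle, as it is the only one that needs the harm condition and a contradiction argument.

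For part (a), I would invoke condition (a) of Fence Monotonicity at $L=S\setminus\{i\}$, $U=S$, obtaining a set $S'\in\{S\setminus\{i\},S\}$ on which every player attains its minimum payment. The choice $S'=S$ is impossible: it would force $\xi(j,S)=\mlu{j}{S\setminus\{i\}}{S}=\min\{\xi(j,S\setminus\{i\}),\xi(j,S)\}=\xi(j,S\setminus\{i\})$, contradicting the hypothesis. Hence $S'=S\setminus\{i\}$, giving $\xi(m,S\setminus\{i\})=\mlu{m}{S\setminus\{i\}}{S}\leq\xi(m,S)$ for every $m\in S\setminus\{i\}$, and in particular for every $k\in S\setminus\{i,j\}$. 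For part (b), I would apply condition (b) of Fence Monotonicity at $L=S\setminus\{i,j\}$, $U=S$ to the player $i\in U\setminus L$, producing a set $S_i\in\{S\setminus\{j\},S\}$ whose non-$L$ members all attain their minimum payments. Here $\mlu{j}{S\setminus\{i,j\}}{S}=\xi(j,S\setminus\{i\})$ by the hypothesis, so $S_i=S$ is excluded (it would force $\xi(j,S)=\xi(j,S\setminus\{i\})$); thus $S_i=S\setminus\{j\}$, and since $S_i\setminus L=\{i\}$ we conclude $\xi(i,S\setminus\{j\})=\mlu{i}{S\setminus\{i,j\}}{S}\leq\xi(i,S)$.

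For part (c) I would argue by contradiction. Assuming $\xi(i,S\setminus\{j\})<\xi(i,S)$, I apply condition (c) of Fence Monotonicity at $L=S\setminus\{i\}$, $U=S$ with the offending set $C=S\setminus\{j\}$. The key simplification is that $\mlu{i}{S\setminus\{i\}}{S}=\xi(i,S)$ (the only set containing $i$ between $S\setminus\{i\}$ and $S$ is $S$ itself), so the assumption reads exactly $\xi(i,C)<\mlu{i}{S\setminus\{i\}}{S}$, and $L\not\subseteq C$ since $j\in L\setminus C$. Condition (c) then yields a nonempty $T\subseteq L\setminus C=\{j\}$, forcing $T=\{j\}$ and $\xi(j,C\cup T)=\mlu{j}{S\setminus\{i\}}{S}$. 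Recognizing that $C\cup T=(S\setminus\{j\})\cup\{j\}=S$, this says $\xi(j,S)=\min\{\xi(j,S\setminus\{i\}),\xi(j,S)\}=\xi(j,S\setminus\{i\})$, contradicting the hypothesis; hence $\xi(i,S\setminus\{j\})\geq\xi(i,S)$. Together with part (b) this pins down $\xi(i,S\setminus\{j\})=\xi(i,S)$.

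The hard part, as noted, is part (c): one must guess the right restriction and offending set and observe that deleting precisely $j$ collapses $C\cup T$ back to $S$, turning the conclusion of the harm condition into a statement about $\xi(j,S)$ that clashes with the hypothesis. Once the restrictions $(L,U)$ and the set $C$ are chosen so that every occurrence of $\xi^*$ becomes a two-term minimum, all three arguments reduce to bookkeeping, and no induction or appeal to the earlier allocation lemmas is needed—only the bare definition of Fence Monotonicity.
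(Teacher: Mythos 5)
Your proof is correct and takes essentially the same approach as the paper: each part specializes the corresponding condition of Fence Monotonicity to a restriction ($L=S\setminus\{i\}$ or $L=S\setminus\{i,j\}$ with $U=S$) where every $\xi^*$ collapses to a two-term minimum, and the hypothesis $\xi(j,S\setminus\{i\})<\xi(j,S)$ rules out the ``wrong'' set. The only cosmetic difference is in part (c): the paper applies condition (c) directly at $L=S\setminus\{j\}$ with offending set $C=S\setminus\{i\}$ (the proposition's hypothesis itself serving as the trigger), whereas you apply it contrapositively at $L=S\setminus\{i\}$ with $C=S\setminus\{j\}$; both are valid, and your part (a), stated at $L=S\setminus\{i\}$, is in fact the cleaner reading of what the paper's text (which nominally fixes $L=S\setminus\{i,j\}$ throughout) must mean there.
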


Every part of the preceding Property, is implied by the corresponding condition of \lcmd Moreover, part (b) and (c) fully characterize the
case of two players. However, this Property is far from characterizing the case of the three players, as there are many
other constrains, that are not captured by its implications.

\begin{theorem}\label{theo:low}
Let $\A=\{1,2,3\}$. Consider the cost sharing function defined on $\A$ as follows: $C(\{1,2\})=C(\{1,3\})=1$, $C(\{1\})=C(\{2\})=C(\{3\})=x$,
$C(\{2,3\})=x^2+x$ and $C(\{1,2,3\})=x^3+x^2+x$, where $x\geq 1$.
There is no $\frac{1}{x}$-budget balanced cost sharing scheme, that satisfies \lcmd
\end{theorem}

It is important to understand that for the proof of this theorem,
we use every implication of Proposition 1 and thus every
condition of \lcmd

\section{Complexity and open questions}
Does there exist a polynomial-time algorithm for finding the allocation of a cost-sharing scheme that satisfies  \lcm  or maybe we can show that the problem of finding a stable pair is computationally hard? 

A natural question that arises in this context is whether, it is computationally more efficient to find the appropriate outcome than identifying the stable pair.
Suppose that you have an algorithm that computes the outcome of a GSP mechanism.
It is  rather straightforward how to compute the lower set $L$ of the stable pair, simply by checking which players have positive utility. What remains is to find the upper set $U$, which is exactly the maximal set, as defined in Lemma \ref{lem:maxu}. 

\begin{theorem}\label{theo:compl}
Suppose that we are given the outcome of a gr\-oup-str\-at\-egy\-pro\-of mechanism at $b$. Given that we have already computed $\m{i}$ for all $L\subseteq U\subseteq \A$ and all $i \in U$, there is a polynomial time algorithm for identifying its stable pair.
\end{theorem}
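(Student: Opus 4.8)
The plan is to recover the unique stable pair $(L,U)$ (which exists by Lemmas~\ref{lem:exists} and~\ref{lem:unique}) in two stages: first read off the lower set $L$ directly from the given outcome, and then compute the upper set $U$, which by Lemma~\ref{lem:maxu} is the maximal set associated with this $L$. The only part that is not routine is showing that this maximal set, although described as a union over exponentially many sets, can be extracted by a single greedy sweep that exploits the monotonicity of $\xi^*$.

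\emph{Recovering $L$.} Write $S:=O(b)$ for the given outcome. Validity of the tie-breaking rule gives $L\subseteq S\subseteq U$ and $\xi(i,S)=\m{i}$ for all $i\in S$. Condition~1 of Stability then yields $b_i>\m{i}=\xi(i,S)$ for every $i\in L$, while condition~2 yields $b_i=\m{i}=\xi(i,S)$ for every $i\in S\setminus L\subseteq U\setminus L$. Hence $L=\{\,i\in S: b_i>\xi(i,S)\,\}$, the set of served players with strictly positive utility, and this is immediate to compute from $S$, $b$, and the payments $\xi(i,S)=\mlu{i}{S}{S}$, which are already among the precomputed values.

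\emph{Computing $U$ by peeling.} Fix the $L$ just found and let $M$ denote the maximal set of Lemma~\ref{lem:maxu}. I would compute $M$ as follows: initialize a working set $W:=\A$, and while some $i\in W\setminus L$ satisfies $b_i<\mlu{i}{L}{W}$, delete it, $W:=W\setminus\{i\}$; stop when no such player remains. Each round performs at most $n$ lookups of precomputed $\xi^*$-values, and there are at most $n$ deletions, so the sweep runs in time polynomial in $n$. For correctness I maintain the invariant $M\subseteq W$: it holds initially, and if $i$ is deleted because $b_i<\mlu{i}{L}{W}$ then $i\notin L$, so were $i\in M$ the defining property of $M$ together with Lemma~\ref{lem:xicompare} (applied to $M\subseteq W$) would give $b_i\geq\mlu{i}{L}{M}\geq\mlu{i}{L}{W}$, a contradiction; hence $i\notin M$ and the invariant is preserved. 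Upon termination $W$ satisfies $b_i\geq\mlu{i}{L}{W}$ for all $i\in W\setminus L$, so $W\subseteq M$ by maximality, and together with $M\subseteq W$ this gives $W=M$.

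\emph{Identifying $M$ with the stable $U$.} Finally I argue $M=U$. Since $b_i=\m{i}$ for every $i\in U\setminus L$ (condition~2 of Stability), the set $U$ has the defining property of $M$, so $U\subseteq M$. For the reverse inclusion I apply condition~3 of Stability to $R:=M\setminus U\subseteq\A\setminus U$: if $R\neq\emptyset$ there is some $i\in R$ with $b_i<\mlu{i}{L}{U\cup R}=\mlu{i}{L}{M}$ (using $U\subseteq M$, hence $U\cup R=M$), whereas $i\in M\setminus L$ forces $b_i\geq\mlu{i}{L}{M}$ by Lemma~\ref{lem:maxu}; thus $R=\emptyset$ and $M=U$, so the algorithm outputs $(L,M)=(L,U)$. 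The main obstacle is precisely the efficiency of the second stage: Lemma~\ref{lem:maxu} characterizes $U$ only as a union of exponentially many ``good'' sets, and it is the monotonicity of $\xi^*$ (Lemma~\ref{lem:xicompare}) that makes the top-down greedy deletion both correct and polynomial.
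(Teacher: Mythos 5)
Your proposal is correct and follows essentially the same route as the paper: recover $L$ as the set of served players with strictly positive utility, then compute the upper set by iteratively peeling players $i$ with $b_i<\mlu{i}{L}{W}$, show the result is the maximal set of Lemma~\ref{lem:maxu}, and identify it with the stable $U$ via condition~3 of Stability. The only cosmetic difference is that the paper's process removes all violating players in each round while you remove one at a time; your invariant argument via Lemma~\ref{lem:xicompare} matches the paper's correctness proof.
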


We believe that some other interesting directions for future research are the following: How can our characterization be applied for obtaining cost-sharing mechanism with better approximation and budget-balance guarantees or lower bounds for specific problems? And finally can our techniques be extended to obtain a characterization of weak-group-stra\-te\-gy\-proof cost-sharing schemes?

\section*{Aknowledgements}
We would like to thank Elias Koutsoupias for suggesting the problem, as well as for many very helpful insights and discussions. We would also like to thank Janina Brenner, Nicole Immorlica, Evangelos Markakis, Tim Roughgarden, and Florian Schoppmann for helpful discussions and some pointers in the bibliography.
%\appendix

\bibliographystyle{abbrv}
\bibliography{GSPbib}

\appendix

\section{Missing proofs of Section 5}

\subsubsection*{Proof of Lemma \ref{lem:xicompare}}

It is obvious that the minimum payment of player $i$ can only decrease as the set of outcomes, over which the minimum in the definition of $\xi^*$ is taken, becomes larger.\qed

\subsubsection*{Proof of Claim \ref{cla:harm}}

We show this using the induction hypothesis  at every  $L\cup\{i\},U$ for every $i \in U \setminus L$ and more specifically condition (c) of \lcmd

To show anti-symmetry we need to consider  only elements  of $ U\setminus L$, as  trivially it is impossible that some  $i \in L$ harms any other elements. Consider two distinct $i,j\in U \setminus L$ and assume that $i$ \emph{harms} $j$, that is $\m{j}<\mlu{j}{L\cup\{i\}}{U}$.  From definition of $\xi^*$ there is a set  $S_j$,  where $j \in S_j$,  $L\subseteq S_j\subseteq U$ and
$\xi(j,S_j)=\m{j}$. Notice that by our assumption it is impossible that $i \in S_j$, since $\mlu{j}{L\cup\{i\}}{U}>\xi(j,S_j)$. It follows that $S_j\subset U$  and by using condition (c) of \lcm  at $L\cup\{i\},U$, we get that
$\xi(i,S_j\cup\{i\})=\mlu{i}{L\cup\{i\}}{U}=\m{i}$ (the only non-empty subset of
$(L\cup\{i\}) \setminus S_j$ is $\{i\}$). Since $L\cup\{j\}\subseteq S_j\cup\{i\}\subseteq U$ we get that $\mlu{i}{L\cup\{j\}}{U}=\m{i}$.

We show transitivity of the harm relation in a similar manner. Consider three distinct players $i$, $j$ and $k$, where $i,j\in U \setminus L$ and $k \in U$ (may also belong to $L$). Assume now that $i$ \emph{harms} $j$, $j$ \emph{harms} $k$ while $i$ \emph{does not harm}  $k$. We will show that this contradicts our induction hypothesis. Since $\m{k}=\mlu{k}{L\cup\{i\}}{U}$, it follows that there is some  set $S_k$, where  $L\cup\{i\}\subseteq S_k\subseteq U$ such that
$\xi(k,S_k)=\m{k}$.  Using similar arguments like in the previous part we show that $xi(j,S_k\cup\{j\})=\m{j}$ and as $L\cup\{i\}\subseteq S_k\cup\{j\}\subseteq U$ we reach a contradiction from our assumption that $\m{j}<\mlu{j}{L\cup\{i\}}{U}$.
\qed

\subsubsection*{Proof of Lemma \ref{lem:a-milder}}
(a) We first have to prove the following Claim, which is an immediate consequence of the fact that the harm relation is a strict partial order.

\begin{claim}\label{claim:sinkandL}
For every $j \in L$ one of the following holds:  either every $i \in U \setminus L$ \emph{harms} $j$, or
 there a $k \in U \setminus L$ that \emph{does not harm} $j$ and also is a \emph{sink} of $G[U\setminus L]$.
\end{claim}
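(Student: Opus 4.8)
The plan is to fix $j \in L$ and to partition the vertices of the induced subgraph $G[U\setminus L]$ according to whether they harm $j$. Writing $N_j := \{\, i \in U\setminus L \mid i \text{ does not harm } j \,\}$, observe that if $N_j = \emptyset$ then every $i \in U\setminus L$ harms $j$ by definition, which is precisely the first alternative of Claim~\ref{claim:sinkandL}. Hence the entire content of the claim reduces to showing that whenever $N_j \neq \emptyset$, the set $N_j$ contains a sink of $G[U\setminus L]$; such a vertex $k$ is then simultaneously a sink and a player that does not harm $j$, giving the second alternative.

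The key step I would establish first is a closure property of $N_j$ under the harm relation, which is an immediate consequence of the transitivity proved in Claim~\ref{cla:harm}. Concretely: if $i \in N_j$ and $i$ harms some $i' \in U\setminus L$, then $i' \in N_j$ as well. Indeed, were $i'$ to harm $j$, then from $i$ harms $i'$ and $i'$ harms $j$ transitivity would yield that $i$ harms $j$, contradicting $i \in N_j$. (Note this applies the transitivity statement with intermediate vertex $i' \in U\setminus L$ and target $j \in U$, exactly the hypotheses under which Claim~\ref{cla:harm} was stated.) Thus \emph{every} outgoing edge of a vertex of $N_j$ in $G[U\setminus L]$ stays inside $N_j$.

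Given this closure property, I would conclude using the standard fact that a finite non-empty directed acyclic graph has a sink. Since $G[U\setminus L]$ is a DAG by Corollary~\ref{cor:harmdag}, so is its induced subgraph on $N_j$. Starting from any vertex of $N_j$ and repeatedly following an outgoing edge, the closure property guarantees we never leave $N_j$, while acyclicity and finiteness guarantee the walk terminates at a vertex $k^\ast \in N_j$ with no outgoing edge inside $N_j$. By the closure property $k^\ast$ has no outgoing edge in $G[U\setminus L]$ at all, so $k^\ast$ is a genuine sink of $G[U\setminus L]$ that does not harm $j$.

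There is no serious obstacle here: the claim is a purely order-theoretic consequence of $G[U\setminus L]$ being a strict partial order, and the only substantive input is the transitivity from Claim~\ref{cla:harm}. The single point that deserves a moment of care is confirming that the sink located inside $N_j$ is a sink of the \emph{full} graph $G[U\setminus L]$ and not merely of the restriction to $N_j$; this is exactly what the closure property secures, since no vertex of $N_j$ can harm a vertex lying outside $N_j$.
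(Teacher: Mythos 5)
Your proof is correct and takes essentially the same route as the paper: both arguments start from a player that does not harm $j$, follow the harm relation to a sink of the DAG $G[U\setminus L]$ (Corollary~\ref{cor:harmdag}), and use transitivity from Claim~\ref{cla:harm} to rule out that this sink harms $j$. Your packaging via the closure of the set $N_j$ under outgoing harm edges is just an edge-by-edge rephrasing of the paper's argument, which applies transitivity once along the whole path and once more for the contradiction.
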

\begin{proof}[of Claim \ref{claim:sinkandL}]

Suppose that there is some $i \in U \setminus L$ that \emph{does not harm}  $j$.
If $i$ is a sink of $G[U\setminus L]$, setting $k=i$ completes our proof. Otherwise, there must be a path that goes through $i$ but does not stop there. Let $k$ be the sink of this path ($G[U\setminus L]$ is a directed acyclic graph).

Notice that transitivity implies that $i$ \emph{harms} $k$. Thus, it is impossible that $k$ \emph{ harms }$j$, since using transitivity again we would deduce that $i$ \emph{harms} $j$ contradicting our assumption. 
\end{proof}
Consider some $j \in L$. We split the proof in two cases as  Claim \ref{claim:sinkandL} indicates. 

\textbf{Case 1:} Suppose that every $ i \in U \setminus L$ \emph{harms} $j$. This implies that $\m{j}=\xi(j,L)$ and hence we can set $S_j=L$, since the requirements of  Lemma \ref{lem:a-alloc}  are trivially satisfied, since $S_j \setminus L=\emptyset$.

\textbf{Case 2:} Consider some sink $k$ that \emph{does not harm}  $i$. Using induction hypothesis at $L \cup \{k\},U$ and
in particular part (a) we get that there is a set $S$, where $L\cup\{k\}\subseteq S\subseteq U$ such that  for all $i \in S$, $\xi(i, S)=\mlu{i}{L\cup\{k\}}{U}$. Using the fact that $k$ is a sink and also \emph{does not harm}  $j$ we get that for all $i \in S' \setminus L$, $\mlu{i}{L\cup\{k\}}{U}=\m{i}$ and $\mlu{j}{L\cup\{k\}}{U}=\m{j}$. As a result, we can set $S_j=S$. \qed

\subsubsection*{Proof of Lemma \ref{lem:a-alloc}}
By CS and VP the output of the mechanism satisfies $L\subseteq O(b)\subseteq U$. First note that in any case a player $i\in U\setminus L$ has utility zero: either she is not serviced, or by VP, if she is serviced her payment cannot exceed her bid and cannot be less than her minimum payment $\m{i}$ so $\xi(i,O(b))= b_i=\mlu{i}{L}{U}$.  What remains is to show that $j$ is also served at $\mlu{j}{L}{U}$.

Moreover, suppose towards a contradiction that for some player $j\in L$, $\xi(j,O(b))>\mlu{j}{L}{U}$ (If the set $U\setminus L$  is empty $U=L$ it is impossible that some $j \in L$ is charged more than  $\mlu{j}{L}{U}=\xi(j,L)$). Then she could form a coalition with the players in $U\setminus L$, who would enforce the set $S_j$, where $S_j$ is the set guaranteed to exist by Lemma~\ref{lem:a-milder}, to be output i.e. the players $i\in U\setminus L$ could change their bids to $b_i^*$ if $i\in S_j$ and $-1$ if $i\notin S_j$, so that by CS and VP the output is $S_j$.

Since for every $i \in S_j\setminus L$, $\xi(i,S_j)=\mlu{i}{L}{U}$ their utilities remain zero after this manipulation (the same holds trivially for all $i \in U\setminus S_j$), while the utility of player $j$ strictly increases, and thus the coalition is successful indeed. 

Consequently for all $j\in L$, $j\in O(b)$ and $\xi(j,O(b))=\mlu{j}{L}{U}$ and the players in $O(b)\setminus L$ are charged at their minimum payment as well.\qed

\subsubsection*{Proof of Claim \ref{cla:sinkandb}}

Using the induction hypothesis at $L\cup\{k\},U$ we get from part (a) that there is a set $S$, where $L\cup\{k\}\subseteq S\subseteq U$, such that for all $i \in S$, $\xi(i, S)=\mlu{i}{L\cup\{k\}}{U}$. Using now the fact that $k$ is a sink we have that
for all $i \in S \setminus L$, $\m{i}=\mlu{i}{L\cup\{k\}}{U}$. Thus, setting $S_k=S$ we satisfy condition (b) of \lcm  for $k$ at $L,U$. \qed

\subsubsection*{Proof of Claim \ref{claim:harmandU-L}}

If $j$ is not a \emph{sink} if $G[U\setminus L]$, there must be an path starting from $j$, which obviously ends  at a sink $k$ of this graph. Transitivity implies that $j$ \emph{harms} $k$. \qed

\subsubsection*{Proof of Lemma \ref{lem:b-alloc}}

Like in the proof of the  Lemma \ref{lem:a-alloc}, by CS and VP we get that $L'\subseteq O(b^j)\subseteq U'$ and thus from the definition of the bid vector and $\xi^*$, every player in  $U'\setminus (L'\cup\{j\})$ has zero utility. We need to show that $j$ is serviced and charged $\mlu{j}{L'}{U'}$.

Moreover, suppose towards a contradiction that player $j$, is either not serviced or she is charged  an amount greater than $\mlu{j}{L'}{U'}$.
Since we assumed that condition (b) of \lcm  is satisfied for
$j$ at $L',U'$, there must be a set $S_j$ with $L'\subseteq S_j \subseteq U'$, $j \in S_j$ and for all $i \in S_j\setminus L'$, $\xi(i,S_j)=\mlu{i}{L'}{U'}$.
  Obviously if she is serviced at a higher payment then she prefers the set $S_j$ to the current outcome. Also, the same holds if she is not serviced, as we assumed that $b_j >\mlu{j}{L'}{U'}=\xi(j,S)$ and thus her utility would be to a \emph {strictly positive}  if the outcome was $S_j$.  In similar manner like int the proof of Lemma \ref{lem:a-alloc}  she could form a coalition with the players in $U'\setminus (L'\cup\{j\})$ by enforcing $S$ to be output. Again our assumption about the payments of the rest players in $S_j\setminus L'$ implies that their utility remains unchanged thus this coalition is successful.\qed

\subsubsection*{Proof of Claim \ref{claim:bk-alloc}}

(a)  We get that player $k$ is serviced  and charged $\m{k}$ by applying  Lemma \ref{lem:b-alloc} for $k$ with $L'=L$ and $U'=U$.

(b)
Suppose towards a contradiction that $j \in O(b^k)$. The payment of player $k$ would be lower bounded by $\mlu{k}{L\cup\{j\}}{U}$, since $L\cup\{j\}\subseteq O(b^k) \subseteq U$, which contradicts with the fact that $k$ is charged $\m{k}$, which is strictly lower by our assumption that $j$ \emph{harms} $k$. \qed

\subsubsection*{Proof of Claim \ref{claim:bj-alloc}}

(a)  Since $k$ is a sink of $G[U\setminus L]$, we have that for all $i \in U\setminus (L\cup\{j,k\})$, $\mlu{i}{L\cup\{k\}}{U}=\m{i}=b^j_i$. Similarly, we get that $\mlu{j}{L\cup\{k\}}{U}+\epsilon=\m{j}+\epsilon=b^j_j$.

(b) We   apply Lemma \ref{lem:b-alloc} for $j$ with $L'=L\cup\{k\}$ and $U'=U$, since condition (b) is satisfied for $j$ at this pair (induction hypothesis) and  the bid vector satisfies the requirements of this Lemma. ($b^j_i=b^*_i$ for all $i \in L\cup\{k\}$, $b^j_i=-1$ for all $i \notin U$ and using Claim~\ref{claim:bj-alloc} (a).)  As a result, $j\in O(b^j)$ and $\xi(j,O(b^j))=\mlu{j}{L\cup\{k\}}{U}=\m{j}$.

(c) From part (b) we get that  the set $O(b^k)$ satisfies that $L\cup\{j\}\subseteq O(b)\subseteq U$, and thus the payment of player $k$ is lower bounded by $\mlu{k}{L\cup\{j\}}{U}$. Since $j$ \emph{harms} $k$ we get that $\mlu{k}{L\cup\{j\}}{U}>\m{k}$ completing our proof.\qed

\subsubsection*{Proof of Claim \ref{claim:bjk-alloc}}

(a) Assume that player $k$ is serviced at $b^{j,k}$. Notice that by VP and the definition of $\epsilon$, if $k$ is serviced at $b^{j,k}$ then her payment cannot exceed $\m{k}$.  Additionally, notice that the only coordinate $b^{j,k}$ differs from $b^j$ is the bid of player $k$. Thus,  strategyproofness is violated from $b^j$, since from Claim \ref{claim:bj-alloc} the payment of $k$ decreases.

(b)   Suppose that $j$ is not serviced at $b^{j,k}$. Then $\{j,k\}$ can form a successful when true values are  $b^k$ bidding $b^{j,k}$, since from Claim \ref{claim:bk-alloc} the utility of $k$ increases from zero to  $\epsilon$ and the utility of $j$ is kept to zero (she is not serviced at either input).

(c) By VP and CS we get that $L\subseteq O(b) \subseteq U$, thus the payment of every  serviced player $i$ is lower bounded by $\m{i}$. By definition of $b^{i,j}$ and VP of the mechanism we get the equality for every player in $O(b) \setminus (L\cup\{j\})$. Moreover, from the  definition of $\epsilon$, we conclude the same for $j$. \qed

\subsubsection*{Proof of Lemma \ref{lem:char1}}

We will prove our statement with induction on the cardinality of the set $T=\{i \in L\mid b_i \neq b^*_i\}$.

\textbf{Base:} Since $T=\emptyset$, we simply apply Lemma~\ref{lem:a-alloc}.

\textbf{Induction step:} For the induction step we will show that $L\subseteq O(b)\subseteq U$ and for all $i \in L$, $\xi(i,O(b))=\m{i}$. Then it is easy to see that every  $i\in O(b)\setminus L$ must be serviced at $\m{i}$ by VP and the definition of $\xi^*$.

By the construction of the bid vector we have that $O(b)\subseteq U$. Consider now some $j \in T$. Induction hypothesis implies that $j$ is serviced and charged $\m{j}$ at the bid vector $(b^*_j,b_{-j})$. Suppose that $j$ is not serviced at $b$ resulting in zero utility. Since $b_j-\m{j}>0$ she can misreport $b^*_j$ so as to increase her utility to a strictly positive quantity and thus violate the  strategyproofness of the mechanism. As a result, $j$ must be serviced at $b$ and hence by strategyproofness $j$ must be serviced at the same payment.
(since $j$ is an arbitrary element of $T$, the same holds for all $j \in T$).

Now since $j$ is indifferent between the two outcomes group-\-stra\-te\-gy\-pr\-oof\-ness requires the same about the rest players. This is only possible  $i \in L\setminus T$ ($i\in O(b)$ by CS since $b_i=b^*_i$) is charged the same payment in either input i.e. $\xi(i,O(b))=\xi(i,O(b^*_j,b_{-j}))=\m{i}$ (induction hypothesis).
\qed

\subsubsection*{Proof of Lemma~\ref{lem:char2}}
Let $b^0$ be any bid vector satisfying the conditions of Lemma \ref{lem:char1}, which means that for all $i \in L$, $b^0_i>\m{i}$, for all $i \in U\setminus L$, $b^0_i=\m{i}$ and for all $i \notin U$, $b^0_i=-1$. Then we relax the constrains we put on the bids of the players in $U \setminus L$ (the rest players bid always according to $b^0$) and prove that no player becomes serviced at a price strictly less than her minimum payment at $L,U$, by using  induction on $|T|$, where 
$T:=\{i \in U\setminus L\mid b_i\neq b^0_i\}$.

\textbf{Base:} For $T=\emptyset$ we have that $b=b^0$ and the allocation property follows  from Lemma \ref{lem:char1}.

\textbf{Induction step:} Assume that there is some $j \in O(b)$ that is charged less than $\m{j}$. Since by VP $O(b)\subseteq U$, either $j\in L$ which implies that $b^0_j>\m{j}$ or $j \in U\setminus L$ and thus $b^0_j=\m{j}$. In both cases we have that
\begin{equation}\label{eq:char2-1}
j\in O(b)\text{ and }\xi(j,O(b))<\m{j}\leq b^0_j.
\end{equation}

We will prove that there exists \emph{at least one} successful coalition, contradicting the assumed group-strategyproofness of the mechanism.
 The key of our proof is the definition of the following special subset of $T$
\begin{equation}\label{eq:char2-2}
R:=\{ i\in T \mid i \in O(b)\text{ and  } \xi(i,O(b))>b^0_i\}
\end{equation}
(notice that $j\notin R$).

This set is a special subset of $T$ that can be interpreted as follows: If the true valuations are given by $b^0$ and the players in $T$ bid according to $b$, then the set $R$ represents the players that their utility becomes negative, and since every $i \in R$ had zero utility at $b^0$ (Lemma \ref{lem:char1}), this gives us the set of \emph{liars that sacrifised their utilities}. Notice that the set $R$ not being empty  renders the manipulation we considered \emph{unsuccessful}. For proving the induction step we consider two cases regarding $R$.

\textbf{Case 1:} If $R \subset T$, we construct the  bid vector $b'$, where for every $i\in R$, $b'_i=b_i$ and for every $i\notin R$, $b'_i=b^0_i$.  We complete the proof of this case by showing that  $(T\setminus R)\cup\{j\}$  form a coalition when true values are $b'$ bidding $b$. 

First, we  prove that  in the truthful scenario  every player $i\in T\setminus R$ has zero utility and non-negative after the misreporting (profile $b$). By using induction hypothesis at $b'$ (differs with $b^0$ in $|R|<|T|$ coordinates) we get that for every $i \in T\setminus R$ such that $i \in O(b')$ (the rest players  have obviously zero utility), it holds that  $\m{i}\leq\xi(i,O(b'))\leq b'_i$, where the last inequality follows from VP. Since $i \in T\setminus R$, we get that 
$b'_i=b^0_i=\m{i}$ and thus $b'_i=\xi(i,O(b'))$. Now consider the outcome after the misreporting. Either $i \notin O(b)$ and hence she has zero utility after the misreporting or $i \in O(b)$ and her utility becomes  $b'_i-\xi(i,O(b))= b^0_i-\xi(i,O(b))\geq 0$, where the last inequality follows from Equation \ref{eq:char2-2} as $i\in T\setminus R$.

Second, we prove that player $j$ strictly increases her utility. Considering the truthful scenario there are two cases for $j$: She is serviced at $b'$ and charged $\xi(j,O(b'))\geq\m{j}$ from the induction hypothesis. From  Equation \ref{eq:char2-1} we get that  $j$ is serviced after the misreporting and charged $\xi(i,O(b))<\m{j}$, and thus her payment decreases, while she is still serviced.  Now if  $j$ is not serviced at $b'$, then she has zero utility. From Equation \ref{eq:char2-1} we get that $b^0_j>\xi(j,O(b))$ and since $j \notin R$ it follows that $b^0_j=b'_j$. As a result, her utility increases to   $b'_j-\xi(j,O(b))>0$.

\textbf{Case 2:} Otherwise if $R = T$, we construct the bid vector $b''$, where every $i \in R$,  $b''_i=\xi(i,O(b))$ and every $i \notin R$, $b''_i=b_i$. 
Notice that for all $i \in R$, $b_i\geq b''_i$ by VP at $b$ since $R\subseteq O(b)$. Moreover, for every $i \notin R$, $b''_i=b^0_i$, since $R=T$.  

\begin{claim}\label{claim:bdp}
If $T=R$ and there is some $j$ that satisfies Equation \ref{eq:char2-1}, then for the bid vector $b''$, it holds that 
$R\cup\{j\}\subseteq O(b'')$ and for all $i \in R\cup\{j\}$, 
$\xi(i,O(b''))=\xi(i,O(b))$.
\end{claim}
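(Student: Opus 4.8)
The plan is to pin down the outcome $O(b'')$ by comparing $b''$ directly against $b$, exploiting that the two vectors agree off $R$ and that on $R$ the vector $b''$ lowers each bid to exactly the price that player pays in $O(b)$. First I would record the shape of $b''$: since $R=T\subseteq U\setminus L$, the vector $b''$ agrees with $b$ (hence with $b^0$) on every coordinate outside $R$, while for $i\in R$ we have $b''_i=\xi(i,O(b))>b^0_i=\m{i}$ by the definition of $R$ in Equation \ref{eq:char2-2}; in particular every player of $\A\setminus U$ still bids $-1$, so VP and NPT give $O(b'')\subseteq U$. I would also note $b_i\geq b''_i$ for $i\in R$ by VP at $b$ (recall $R\subseteq O(b)$), so that $b$ and $b''$ differ only by lowering the bids of the players in $R$ down to their own payments.

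The heart of the argument is to show that this lowering changes neither the service status nor the payment of any player in $R\cup\{j\}$. I would invoke group-strategyproofness on the coalition $R$ in both directions between $b$ and $b''$, since $R$ indexes the only coordinates that move. At $b$ each $i\in R$ has utility $b_i-\xi(i,O(b))\ge 0$, while reporting $b$ from truthful $b''$ leaves each $i\in R$ at utility exactly $0$ (it is serviced in $O(b)$ at $\xi(i,O(b))=b''_i$); conversely, from truthful $b$ reporting $b''$ each $i\in R$ could only gain by being serviced strictly below $b''_i$. Requiring that neither deviation be successful forces, for every $i\in R$, that $i$ remains serviced in $O(b'')$ at the unchanged price $\xi(i,O(b))$, and --- crucially --- that player $j$, who does not move and satisfies Equation \ref{eq:char2-1}, is neither dropped nor charged more at $b''$: otherwise the players of $R$ could keep their own utilities fixed while strictly helping $j$ (or themselves), yielding a successful coalition against the standing group-strategyproofness hypothesis. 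This gives $R\cup\{j\}\subseteq O(b'')$ and $\xi(i,O(b''))=\xi(i,O(b))$ for all $i\in R\cup\{j\}$.

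The step I expect to be the main obstacle is precisely this two-directional forcing argument, because $b''$ does \emph{not} match the template of Lemma \ref{lem:char1} for any choice of lower set: the players in $(U\setminus L)\setminus R$ still bid $\m{i}=\mlu{i}{L}{U}$, a value that need not equal $\mlu{i}{L\cup R}{U}$, so none of the earlier characterization lemmas can be invoked and the entire outcome must be controlled by hand through the GSP comparisons. The delicate point is the threshold behaviour of the players in $R$, who all sit at utility zero at $b''$; this is exactly the regime where ties matter and where full group-strategyproofness, rather than plain strategyproofness, is indispensable. Finally I would flag the payoff of the claim: together with Lemma \ref{lem:char1} applied at $b^0$ (where each $i\in R$ is serviced at $\m{i}=b^0_i$ with utility $0$ and $j$ at $\m{j}$), the claim lets the coalition $R$ raise its bids from $b^0$ to $b''$ while staying indifferent, whereas $j$ --- serviced at $\xi(j,O(b))<\m{j}$ by Equation \ref{eq:char2-1} --- strictly gains, a successful coalition that contradicts group-strategyproofness and closes Case 2.
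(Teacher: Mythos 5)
Your reduction of the claim to two one-shot applications of group-strategyproofness between $b$ and $b''$ (with coalition $R$) has a genuine gap: GSP only yields a disjunction --- either some liar strictly loses, or no one strictly profits --- and that disjunction is satisfied in exactly the scenarios you need to exclude. Concretely, suppose that at $b''$ some $i_0\in R$ is serviced at a price strictly below $b''_{i_0}$, while some $i_1\in R$ with $b_{i_1}>b''_{i_1}$ is dropped. Then in your first direction (truth $b''$, report $b$) player $i_0$ strictly loses (her utility falls from $b''_{i_0}-\xi(i_0,O(b''))>0$ to $0$), and in your second direction (truth $b$, report $b''$) player $i_1$ strictly loses (from $b_{i_1}-b''_{i_1}>0$ to $0$); both manipulations are unsuccessful, GSP is never violated, and yet $R\subseteq O(b'')$ fails. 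Likewise, the scenario in which \emph{all} of $R$ is dropped at $b''$ (each at utility zero) while $j$ keeps payment $\xi(j,O(b))$ passes both of your tests but violates the claim. Your ``two-directional forcing'' only pins down $j$ and the players of $R$ once you already know that every player of $R$ has zero utility and is serviced at $b''$ --- which is the content of the claim itself.

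This is precisely why the paper does not compare $b$ and $b''$ in one shot. It interpolates: for each $S\subseteq R$ it defines $b^S$ (lower only the bids in $S$) and proves the statement by a second induction on $|S|$, lowering one coordinate at a time. At each step, plain strategyproofness pins the payment of the lowered player if she remains serviced (Equation \ref{eq:char2-4}), a coalition argument pins $j$ (Equations \ref{eq:char2-3} and \ref{eq:char2-5}), and --- crucially --- the dropped-player scenario is killed by invoking the \emph{outer} induction hypothesis of Lemma \ref{lem:char2} at the auxiliary vector $(b^0_i,b^S_{-i})$, which differs from $b^0$ in only $|T|-1$ coordinates, so that $\{i,j\}$ forms a successful coalition there. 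Hence your remark that ``none of the earlier characterization lemmas can be invoked'' is exactly backwards: the availability of the first induction hypothesis at nearby bid vectors is what makes the proof work, and renouncing it is what breaks your argument. A smaller point: in your final paragraph the coalition direction is reversed --- with true values $b^0$ and reports $b''$ the players of $R$ would be charged $b''_i>b^0_i$ and sacrifice utility; the paper's successful coalition takes $b''$ as the truth and $b^0$ as the misreport, together with the players of $L\setminus O(b'')$.
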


\begin{proof}[of Claim~\ref{claim:bdp}]

 For all $S\subseteq R$, we define the bid vector $b^S$, where for all $i \in S$, $b^S_i=b''_i$ and for all $i \notin S$,
$b^S_i=b_i$. Notice that since we assumed that $T=R$, it holds that $b^S_i=b^0_i$   for all $i \notin R$ and all $S\subseteq R$.  We show that for all $S\subseteq R$, $R\cup\{j\}\subseteq O(b^S)$ and for all $i \in R\cup\{j\}$, 
$\xi(i,O(b^S))=\xi(i,O(b))$, with induction on $|S|$.
We will refer to this induction as \emph{second} induction to discriminate it from the \emph{first}. After proving this statement, we can set $S=R$ ($b^R=b''$) and complete the proof.

\textbf{Base (second):} If $S=\emptyset$, i.e. $b^\emptyset =b$, then every condition holds trivially.

\textbf{Induction step (second):}
If for some $i \in S$, it holds that $b''_i=b_i$, then the induction step follows trivially from induction hypothesis as $b^S=(b_i,b^S_{-i})=b^{S\setminus \{i\}}$.

Now assume that for all $i \in S$, $b''_i< b_i$ and consider some $i \in S$.
Player $i$ is serviced at $(b_i,b^S_{-i})$  and charged 
$\xi(i,O(b))$ (induction hypothesis). Strategyproofness implies that by lowering her bid up to her payment, which gives us the bid vector $b^S$, if she is  serviced, then her payment must be the same, i.e.,
\begin{equation}\label{eq:char2-4}
\text{for all }i \in S\cap O(b^S), \text{ } \xi(i,O(b^S))=\xi(i,O(b)).
\end{equation}

This Equation implies that if the true values are given by $b^S$, then every player in $S$ has zero utility. Moreover, by misreporting  $b$ they are charged an amount equal to their true value and thus their utilities is kept at zero. Group-strategyproofness implies that no player in $\A\setminus S$ has incentives for this misreporting. Notice that $b^S_j=b^0_j$ since  $j \notin R=T$.   Moreover, from Equation \ref{eq:char2-1} we have that $b^0_j\geq \m{j}>\xi(j,O(b))$ and thus her utility after the manipulation we described is given by 
$b^S_j-\xi(j,O(b))>0$. In order that this  amount is not bigger than her utility in the truthful scenario ($b^S$), it holds that 
\begin{equation}\label{eq:char2-3}
j \in O(b^S)\text{ and }\xi(j,O(b^S)\leq\xi(j,O(b)),
\end{equation}
 which together with Equation \ref{eq:char2-1} and the fact that $b^S_j=b^0_j$ implies that
\begin{equation}\label{eq:char2-5}
\xi(j,O(b^S))<\m{j}\leq b^S_j.
\end{equation}

Using the latter equation we show that $R\subseteq O(b^S)$. Assume that some $i\in R$ is not serviced at $b^S$.  First, we show that $i$ is not serviced at
$(b^0_i,b^S_{-i})$.  There are two cases for $i$: Either $b^S_i=b''_i$ if $i \in S$ or $b^S_i=b_i\geq b''_i$  if $i \in R\setminus S$. In any case $b^S_i\geq b''_i>b^0_i$, where the second inequality follows from  the definition of $R$ (Equation \ref{eq:char2-2}) since $b''_i=\xi(i,O(b))$. Strategyproofness implies that if the true values are $b^S$ and $i$ reports $b^0_i$ (the bid vector becomes $(b^0_i,b^S_{-i}))$, then she is not serviced as otherwise  her payment cannot exceed $b^0_i$ by VP and thus her utility increases to  $b^S_i-b^0_i>0$. 

Now we show that if  $i$ is not serviced at $(b^0_i,b^S_{-i})$, then $\{i,j\}$ from a successful coalition when true values are $(b^0_i,b^S_{-i})$ bidding $b^S$.  The utility of player $i$ is kept to zero, as she is excluded in both inputs from the outcome. Notice that by the \emph{first} induction hypothesis at the bid vector 
$(b^0_i,b^S_{-i})$ (differs with $b^0$ in $|R|-1=|T|-1$ coordinates) there are two cases for $j$: Either $j$ is serviced and  charged $\xi(j,O((b^0_i,b^S_{-i})))\geq\m{j}$ and thus her payment decreases to  $\xi(j,O(b^S))<\m{j}$ (from Equation  \ref{eq:char2-5}) or
 she is not serviced and her utility becomes $b^S_j-\xi(j,O(b^S))>0$ (from Equation  \ref{eq:char2-5}).

 As a result, it holds that   $S\subseteq R\subseteq O(b^S)$ and together with \ref{eq:char2-3} we get that for every $i \in S$,  $\xi(i,O(b^S))=\xi(i,O(b))$. Therefore, the players in $S$ are indifferent between the two inputs, i.e regardless of which of the two we consider as true values, they can misreport the other without losing utility. Group-strategyproofness implies that the other players are indifferent as well, which is only possible if for all $i \in R\setminus S$, $\xi(i,O(b^S))=\xi(i,O(b))$. Moreover, from Equation \ref{eq:char2-3} we get that $j \in O(b^S)$ and for similar reasons it holds that $\xi(j,O(b^S))=\xi(j,O(b))$. 
\end{proof}

By VP we have that $O(b'')\subseteq U$ (for every $i \in \A\setminus U$, it holds that $i \notin R$ and thus $b''_i=b^0_i=-1$). Furthermore, from Claim \ref{claim:bdp} and Equation \ref{eq:char2-1} we get that  $\xi(j,O(b''))<\m{j}$.  Therefore, the definition of $\xi^*$  implies that  $L\not\subseteq O(b'')$.

Assume that the true values are given by $b''$. 
Every player $k \in (L\setminus O(b''))$ has obviously  zero utility. 
From Claim~\ref{claim:bdp} we deduce the same for every $i \in R$.
 We complete our proof by  showing  that $R\cup (L\setminus O(b''))$ form a successful coalition   bidding $b^0$.

First, we prove that the utility of every $i \in R$ remains non-negative after the misreporting. Consider some $i \in R$ such that $i \in O(b^0)$ (obviously the rest players have zero utility). From Lemma \ref{lem:char1} it holds that $\xi(i,O(b^0))=\m{i}$ and since $\m{i}=b^0_i$ ($i \in U\setminus L$) and $b^0_i<\xi(i,O(b))=b''_i$ (Equation \ref{eq:char2-2}), we conclude that  her utility increases to $b''_i-\xi(i,O(b^0))>0$. 

Second, we show that the utility of every $k \in L\setminus O(b'')$ increases, as it becomes strictly positive.  Since $k \notin R$  we get that  $b''_k=b^0_k$. Moreover, since $k \in L$ we have  $b^0_k>\m{k}$ and 
from Lemma \ref{lem:char1} we get that $k \in O(b^0)$ and  $\xi(k,O(b^0))=\m{k}$. As a result, her   utility becomes $b''_k-\xi(k,O(b^0))>0$.
\qed

\subsubsection*{Proof of Claim \ref{claim:bc-alloc}}

(a) We can apply Lemma \ref{lem:char2} as every condition is satisfied.

(b) VP and CS implies $C\subseteq O(b^C) \subseteq L\cup C$. $C\subset O(b^C)$ follows from (a) and our assumption that $\xi(j,C)<\m{j}$.

(c) From definition of $\epsilon$ we have that for all $i \in O(b^C) \setminus C$,  $\xi(i,O(b^C))\leq\m{i}$. Together with (a) we get the equality.
\qed

\subsubsection*{Proof of Lemma \ref{lem:char3}}

Notice that Lemma \ref{lem:char1} implies the same allocation property in the special case, where every $i \in \A \setminus U$ has bidden $-1$.
In order to show this Lemma we will use the same induction, i.e. we show that $L\subseteq O(b)$ and for  all $i \in O(b)$, $\xi(i,O(b))=\m{i}$ using induction on $m=|\{i\in L\mid b_i\neq b^*_i\}|$.

\textbf{Base:} First, by CS we have that $L\subseteq U$. Next, we show that not player in $\A\setminus U$ is serviced. Suppose that the set $R=O(b)\setminus U\neq\emptyset$.  We will show that VP is violated. By the
definition of the bid vectors we consider it holds that there
is some $i \in R$, such that $b_i<\mlu{i}{L}{U\cup R}\leq \xi(i,O(b))$
(since $L\subseteq O(b)\subseteq U\cup R$) and thus we reach a contradiction.

As a result, it holds  that $L \subseteq O(b) \subseteq U$. Thus, for every
$i \in O(b)$, we have that $\xi(i, O(b)) \geq \m{i})$. By VP
we get the equality for every $i \in O(b) \setminus L$. Now assume
that the previous inequality is strict for some $j \in L$. We
show that the players in $(\A \setminus U )\cup \{j\}$ can form a successful
coalition, where every $i \notin U$ announces $-1$.

Trivially the utilities of these players are kept to zero after
the misreporting and applying Lemma \ref{lem:char1} we get that $j$ is serviced
and charged $\m{j}< \xi(j, O(b))$.

\textbf{Induction Step}: We show that $L \subseteq O(b)$ and for all $i \in
L$, $\xi(i, O(b)) = \m{i}$ exactly like in the proof of Lemma
\ref{lem:char1}. Now since $L \subseteq O(b)$ in similar way we show that $U \subseteq O(b)$.
Thus, this restriction together with the definition of the bid
vector, implies that for every $i \in O(b) \setminus L$, $\xi(i, O(b)) =
\m{i}$.\qed

\section{Missing Proofs of Section 6}

\subsubsection*{Proof of Lemma \ref{lem:maxu}}

Assume towards a contradiction that there exist two distinct sets $U_1,U_2$ none of which is a subset of the other that both satisfy this property. Then we could construct the set $U_1\cup U_2$ which also satisfies the same property and is a proper superset of both $U_1$ and $U_2$ reaching a contradiction. Indeed for all $i\in U_1\setminus L$ we have $b_i\geq\mlu{i}{L}{U_1}\geq \mlu{i}{L}{U_1\cup U_2}$ as the minimum payment of player $i$ can only decrease as the set of outcomes, over which the minimum in the definition of $\xi^*$ is taken, becomes larger. A similar inequality holds for the players in $i\in U_2\setminus L$ completing the proof.

Now we show that if  $U$ is the maximal set with this property, then $L,U$ satisfy property 3. of stablity. Consider some non-empty $R \subseteq  \A \setminus U$. Assume that for all $i \in  R$, $b_i  \geq  \mlu{i}{L}{U\cup R}$. From Lemma \ref{lem:xicompare} we have 
that for all $i \in U \setminus L$, $b_i \geq \mlu{i}{L}{ U\cup R}$ and thus we reach contradiction by the maximality of $U$.
\qed

\subsubsection*{Proof of Lemma \ref{lem:uncov2}}

We will first show that if $S\not\subseteq U$ then there exists some $i\in S$ we have $\mlu{i}{L}{U\cup S}>\xi(i,S)$. Since $L,U$ is stable (property 3.) there exists some $i\in S\setminus U$ such that $b_i<\mlu{i}{L}{U\cup S}$ and since from the initial assumption $\xi(i,S)\leq b_i$ we get that $\xi(i,S)<\mlu{i}{L}{U\cup S}$ 

The rest of the proof is the same for both cases (note that in what follows, if $S\subseteq U$ then $S\cup U=U$). Note that $S\subset U\cup S$, since $L\nsubseteq S$. Applying part (c) of \lcm    we get that there exists a non-empty $T\subseteq L\setminus S$ such that for all $j\in T$ we have $\xi(j,S\cup T)=\mlu{j}{L}{U\cup S}\leq \m{j}$, where the last inequality is by the definition of $\xi^*$, since the minimum cannot decrease as the set of outcomes over which it is taken becomes larger.  From property 1. of stability and since $T\subseteq L$, for all $j\in T$ we have $\m{j}<b_i$. Consequently for all $j\in T$ we have $\xi(j,S\cup T)<b_j$.
\qed

\subsubsection*{Proof of Lemma \ref{lem:cov3}}

Suppose towards a contradiction that $L,U$ is stable at $b$ and that there exists some $T\subseteq \A\setminus S$, such that for all $i\in T$, $b_i\geq \xi(i,S\cup T)$ and that for at least one player the inequality is strict.

Since $L\subseteq S\cup T\subseteq U\cup T$ from the definition of $\xi^*$ we get that for all $i\in  T$  we have  $\mlu{i}{L}{U\cup T}\leq \mlu{i}{L}{U}\leq \xi(i,S\cup T)$.

If $T\subseteq U$, then there exists some $i\in T$ such that $b_i>\xi(i,S\cup T)\geq \m{i}$. Since  $T\subseteq \A\setminus L$ this contradicts with stability (condition 2.) of $\bv$ at $L,U$.

If $T\not\subseteq U$, then $T\setminus U$ is not empty and for all $i  \in T\setminus U$ we have $b_i\geq \xi(i,S\cup T)\geq \mlu{i}{L}{U\cup T}$,  which contradicts with stability (condition 3.) of $b$ at $L,U$.
\qed

\subsubsection*{Proof of Lemma \ref{lem:unique}}

Assume first that $L_1\neq L_2$ and without loss of generality that $L_2\not\subseteq L_1$. Consequently there exists some $j\in L_2\setminus L_1$.

By \lcm  (part (a)) there exists a set $S_2$ such that
$L_2\subseteq S_2\subseteq U_2$ and $\xi(i,S_2)=\mlu{i}{L_2}{U_2}$ for all $i\in S_2$.  Notice that for all $j\in S_2$, $b_j \geq \xi(j, S_2)$, where strict inequality holds only if $j  \in L_2$.
The idea is to apply Lemma~\ref{lem:uncov2} and get that there exists a non-empty $T\subseteq L\setminus S_2$, such that for all $i\in T$ we have $b_i>\xi(i,S_2\cup T)$. We will then apply Lemma~\ref{lem:cov3} to show that $L_2,U_2$ is not stable at $\bv$ which contradicts our intial assumption.

It only remains to show that we can apply Lemma~\ref{lem:uncov2}. If $S_2\not\subseteq U_2$ this is immediate. Suppose that $S_2\subseteq U_1$.  If $j\in L_2\setminus L_1$ then also $j \in U_1 \setminus L_1$ thus  from stability (condition 2.) of $L_1,U_1$, we get that  $b_j=\mlu{j}{L_1}{U_1}$ and from stability (condition 1.) of $L_2,U_2$ and since $j\in L_2$, we get that $b_j>\xi(j, S_2)$. Therefore, we get that $\xi(j,S_2)<\mlu{j}{L_1}{U_1}$ and consequently $S_2$ satisfies the requirements of Lemma~\ref{lem:uncov2}.

We showed that $L_1=L_2=L$. Suppose towards a contradiction that $U_1\neq U_2$. From Lemma \ref{lem:maxu} there exists a unique maximal set $U$ such that $b_i\geq \m{i}$ for all $i\in U$. Consequently $U_1,U_2$ are subsets of $U$ and at least one of them, say $U_1$ a proper subset of $U$. Then the players in $U\setminus U_1$ contradict stability (condition 3.).
\qed

\subsubsection*{Proof of Claim \ref{claim:inclusion*}}

(a) Using the definition of $L^*$ we get that $L^*\cup \{i\}$ contains all the players who have bidden higher than any payment of the mechanism in $(b_i^*,b_{-i})$.
Since $L_i$ is stable at $(b_i^*,b_{-i})$, each one of these players, who have bidden strictly higher than any payment of the mechanism, should be serviced (any value higher than any payment satisfies the definition of CS for Fencing Mechanisms), and if they are serviced they obviously have strictly positive utility, and thus $L^*\cup \{i\}\subseteq L_i$.

(b) 
The idea is to show that for all $j \in (U^*\cup U_i)\setminus L^*$, $b_j\geq \mlu{j}{L^*}{U^*\cup U_i}$ and since we defined $U^*$ be the maximal set with this property, we get that $U_i\subseteq U^*$.

From definition of $U^*$ we have that for all $j \in U^*\setminus L^*$,  
$b_j\geq \mlu{j}{L^*}{U^*} \geq \mlu{j}{L^*}{U^*\cup U_i}$, where the last inequality follows from Lemma \ref{lem:xicompare}.

Now consider some $j \in U_i\setminus U^*$ ($j\neq i$). Since $L_i,U_i$ is stable at $(b^*_i,b_{-i})$ we get that  $b_j\geq \mlu{j}{L_i}{U_i}\geq  \mlu{j}{L^*}{U_i\cup U^*}$ by applying Lemma \ref{lem:xicompare}, since from (a) $L^*\subset L_i$.

(c)  From (a),(b) and Lemma \ref{lem:xicompare} we have that $\mlu{j}{L_i}{U_i}\geq \mlu{j}{L^*}{U^*}$. As we defined $L_i,U_i$ to be the stable pair at $(b_i^*,b_{-i})$ and $j\in L_i$, for $j\neq i$ we have $b_j>\mlu{j}{L_i}{U_i}\geq \mlu{j}{L^*}{U^*}$ and as $j\notin L^*$, we have $j\in W$. \qed

\subsubsection*{Proof of Claim \ref{claim:subset}}

(a) The pair $L_i,U_i$ is stable at $(b_i^*,b_{-i})$ (by definition of $L_i,U_i$) but it is not stable at $b$ (from our assumption that there exists no stable pair at $b$). Since the two bid vectors differ only on the $i$-th coordinate we deduce that stability (condition 1. as $i\in L_i$) is not satisfied by the $i$-th coordinate of $b$, thus $b_i\leq \mlu{i}{L_i}{U_i}$.

(b) From Claim~\ref{claim:inclusion*} (a) we already have that $L_i\supseteq L^*\cup \{i\}$. Suppose towards a contradiction that $L_i=L^*\cup \{i\}$. From \lcm  (condition (b)) we get that there exists some $S_i$, where $L^*\subseteq S_i\subseteq U^*$ and $i\in S_i$ such that for all $j\in S_i\setminus L^*$ we have $\xi(j,S_i)=\mlu{j}{L_i}{U_i}$.

The idea is to show that $L_i\subseteq S_i \subseteq U_i$, which implies that $\mlu{i}{L_i}{U_i}\leq \xi(i,S_i)=\mlu{i}{L^*}{U^*}$. Then considering also that $\mlu{i}{L^*}{U^*}<b_i$, because $i \in W$,  we get $\mlu{i}{L_i}{U_i}<b_i$, contradicting the inequality we showed in (a).

It only remains to show that $L_i\subseteq S_i \subseteq U_i$. Notice first that $L_i\subseteq S_i$ since $L_i=L^* \cup\{ i \}$. Moreover, since for all $j \in S_i \setminus L_i$,  it holds that $b_j \geq\xi(j, S_i)$ and  the bids of these players are the same at $(b^*_i,b_{-i})$, stability of
$L_i,U_i$ (condition 3) implies that $S_i\subseteq U_i$, because otherwise $S_i\setminus U_i\neq \emptyset $ and its elements would violate condition 3. of  stability.
\qed

\subsubsection*{Proof of Claim \ref{claim:notgsp}}

(a) The pair $L_i,U_i$ is stable at bid vector $(b_i^*,b_j^*,\bv_{-\{i,j\}})$, since  $L_i,U_i$ is stable at $(b_i,b_{-i})$ and since raising the bid of player $j$ ($j\in L_i$ from our initial assumption) to $b_j^*$ does not effect its stability.

(b) From stability of $L_i,U_i$ at $(b_i^*,b_{-i})$ we get that $b_j>\mlu{j}{L_i}{U_i}$, while from part (a)  of Claim \ref{claim:subset} we  get that $\mlu{j}{L_j}{U_j}\geq b_j$. Consequently $\mlu{j}{L_j}{U_j}>\mlu{j}{L_i}{U_i}$.

 Supposing towards a contradiction that $i\in L_j$ we also get in a similar way as before that the pair $L_j,U_j$ is stable at $(b_i^*,b_j^*,\bv_{-\{i,j\}})$. By Lemma \ref{lem:unique} these two pairs coincide, which is a contradiction because we just showed that the payment of $j$ is different.

(c) We will show that if there exists some $k\in L_j\setminus L_i$ then the mechanism is not GSP at inputs, where a stable pair exists. Observe that $k\neq i,j$.

Consider first the vector $b^{i,j}:=(b_i^*,b_j^*,\bv_{-\{i,j\}})$, which stable pair is $L_i,U_i$ from part (a). As $k\notin L_i$ either $k$ is not serviced, or if $k$ is serviced, then her utility is zero. As for $j$ she is serviced with payment $\mlu{j}{L_i}{U_i}<\mlu{j}{L_j}{U_j}=\xi(j,O(b_j^*,b_{-j}))$ (from (a)).

Consider then $b^j:=(b_j^*,\bv_{-j})$, where $L_j,U_j$ is stable. As $k\in L_j$ and $k$ is serviced and $b_k>\xi(k,O(b_j^*,\bv_{-j}))$.

Resuming player  $j$ strictly prefers $O(b^{i,j})$ to $O(b^j)$, while for $k$ the situation is exactly the opposite. The idea is to construct a bid vector $b'$ where $i$ has zero utility and either $\{i,j\}$ or $\{i,k\}$ is a successful coalition. Let $b'=(\xi(i,O(b^{i,j})),b_j^*,b_{-\{i,j\}})$ and notice that induction hypothesis implies that there is a stable pair at $b'$.  Moreover, observe that the three bid vectors differ only on the bid of player $i$ and consequently from strategyproofness, at every input $i$ is served, she is charged $\xi(i,O(b^{i,j}))$.
This implies that if $i$ is served at $b'$ she is charged an amount equal to her bid. Moreover, $i$ is serviced at $b^j$ in the degenerate case where $b'=b^j$ as otherwise $VP$ would imply that
$\xi(j, O(b^j))\leq b^j_j< b^{i,j}_j=\xi(j, O(b^{i,j}))$. In every case we have that $i$ has zero utility in
at $b'$ and $b^j$.

Observe first that  $k$ must be serviced at $b'$ and charged $\xi(k,O(b'))\leq\xi(k,O(b^{j}))$   as otherwise $\{i,k\}$ would have been able to form a successful coalition when the true values are $b'$ bidding $\bv^j$. Similarly we can show that $\xi(j,O(b'))\leq\xi(j,O(b^{i,j}))$, excluding the degenerate case $b'=b^j$, because otherwise $\{i,j\}$ would have been able to form a successful coalition when the true values are $b'$ bidding $b^{i,j}$ (the utility of $i$ is kept to zero as in the truthful scenario).

As a result players $j$ and $k$  strictly prefer $O(b')$ to $O(b^j)$ and $O(b^{i,j})$ respectively.
If $i\in O(b')$ then $\{i,k\}$ when true utilities are given by $b^{i,j}$ can form a successful coalition bidding $b'$ strictly increasing the utility of $k$, while keeping the utility of $i$ constant, since she is still served with he same payment. If $i\notin O(b')$ then we deduce that $\{i,j\}$  when the true utilities are given by $b^j$ can form successful coalition bidding $b'$ strictly increasing the utility of $j$, while keeping the utility of $i$ to zero.\qed

\section{Missing Proofs of Section 7}
\subsubsection*{Proof of Proposition \ref{prop:neg-neu}}

Let $L=S\setminus \{i,j\}$ and $U=S$.

(a) From condition (a) of \lcmc we get that at either $S\setminus \{i\}$ or $S$,
every player is charged the minimum payment at $L,U$. Since by our assumption, this is not true for $S$, as $\xi(j,S\setminus\{i\})<\xi(i,S)$, it follows that every other player $k \in S\setminus \{i,k\}$, $\xi(k,S\setminus\{i\})=\m{k}$ $\Rightarrow$ $\xi(k,S\setminus \{i\})\leq\xi(k,S)$.

(b) Notice that $j$ belongs only to
$S\setminus \{i\}$ and $S$. Moreover, our assumption implies that $\m{j}=\xi(j,S\setminus \{i\})<\xi(j,S)$.
From condition (b) of \lcmc there must be a set $S_i$ with $L\subseteq S_i\subseteq U$, such that $i \in S$ and for all $k \in S_i\setminus L$,
$\xi(k,S_i)=\m{k}$. By our assumption, it is impossible that $S_i=S$ and since the only remaining set that contains $i$ is $S\setminus \{j\}$, we get that
$\xi(i, S\setminus \{j\})=\m{i}$ $\Rightarrow$ $\xi(i, S\setminus\{j\})\leq\xi(i,S)$.

(c) Let $L'=S\setminus \{j\}$. Now  $j$ is contained only at $S$, it follows that $\xi(j,S)=\mlu{j}{L'}{U}>\xi(j,S\setminus \{i\})$.
Since $L'\setminus  S\setminus\{i\}=\{i\}$, condition (c) of \lcm  implies that
$\mlu{i}{L\cup\{i\}}{U}=\m{i}=\xi(i,S)$ $\Rightarrow$
$\xi(i,S)\leq\xi(i,S\setminus \{j\})$.\qed

\subsubsection*{Proof of Theorem \ref{theo:low}}

 Assume by contradiction that there is a
$\alpha$-budget balanced  cost sharing scheme that satisfies \lcm  for $C$ where
$1\geq \alpha>\frac{1}{x}$. This implies the following relations

\begin{eqnarray}
 \frac{1}{x} C(S) <\sum_{i\in S}\xi(i,S) &\leq& C(S). \label{eq:bb}
\end{eqnarray}

 Equation \ref{eq:bb} (right part) with $S=\{1,2\}$ and $S=\{1,3\}$ imply that $\xi(2,\{1,2\})\leq 1$ and $\xi(3,\{1,3\})\leq 1$. We apply  Proposition \ref{prop:neg-neu} (c) and we get that either $\xi(2,\{1,2,3\})\leq 1$ or $\xi(3,\{1,2,3\})\leq 1$ (or both). W.l.o.g we assume that $\xi(2,\{1,2,3\})\leq 1$.

Suppose that $\xi(2,\{2,3\})\leq 1$ $\Rightarrow$ $\xi(2,\{2,3\})<\xi(2,\{2\})$ (Equation \ref{eq:bb} (left part) with $S=\{2\}$: $\xi(2,\{2\})>1$). We apply the contra-positive implication of Proposition \ref{prop:neg-neu} (b) and deduce that $\xi(3,\{2,3\})\leq \xi(3,\{3\})$ $\Rightarrow \xi(3,\{2,3\})\leq x$ (Equation \ref{eq:bb} (right part) with $S=\{3\}$: $\xi(3,\{3\})\leq x)$).  This contradicts Equation \ref{eq:bb} (left hand) with $S=\{2,3\}$, since  $\xi(2,\{2,3\})+\xi(3,\{2,3\}) \leq  x+1$.

Suppose that $\xi(2,\{2,3\})>1$ $\Rightarrow$ $\xi(2,\{2,3\})>\xi(2,\{1,2,3\})$ (from our assumption). Notice that the contra-positive  implication of Proposition \ref{prop:neg-neu} (a)  implies that $\xi(3,\{1,2,3\})\leq\xi(3,\{2,3\})$ $\Rightarrow$ $\xi(3,\{1,2,3\})\leq x^2+x-1$ (from Equation \ref{eq:bb} (right part) with $S=\{2,3\}$ it follows that $\xi(3,\{2,3\})\leq x^2+x-\xi(2,\{2,3\})\leq x^2+x-1$). Using now the contra-positive implication of Proposition \ref{prop:neg-neu} (b) we get that $\xi(1,\{1,2,3\})\leq\xi(1,\{1,3\})$
$\Rightarrow$ $\xi(1,\{1,2,3\})\leq 1$ (Equation \ref{eq:bb} (right part) at $S=\{1,3\}$: $\xi(1,\{1,3\})\leq 1$). This contradicts Equation \ref{eq:bb} (left part) with $S=\{1,2,3\}$, as
$\xi(1,\{1,2,3\})+\xi(2,\{1,2,3\})+\xi(3,\{1,2,3\})\leq 1+1+x^2+x-1=x^2+x+1$.
\qed

\section{Missing Proofs of Section 8}
\subsubsection*{Proof of Theorem \ref{theo:compl}}

Consider the following process, which takes as input a bid vector $b$ and
a set $L$.

\begin{algorithmic}
\REPEAT
\STATE $U\leftarrow L\cup \{i\in U\setminus L\mid b_i\geq\m{i}\}$
\UNTIL{For all $i \in U\setminus L$, $b_i\geq\m{i}$}
\end{algorithmic}

We will prove that if we  feed this process with the lower set $L$ of the stable pair at $b$, then the outcome is the  upper set of the stable pair.

Obviously by the definition of the process, for its final set $U$  it holds that for all $i \in U$, $b_i\geq\m{i}$. First, we show that $U$ is the maximal set with this property.

Assume that there is some $U'$ with $U'\not\subseteq U$, that satisfies this property, then we have that for all $i \in (U\cup U') \setminus L$, $b_i\geq\mlu{i}{L}{U\cup U'}$. Thus, it is impossible that the players in $U \cup U'$ are removed at any step of the previous process, contradicting our assumption that $U\neq U'\cup U$ is the outcome of this process.

Now consider now the upper set $U''$ of the stable pair at $b$. Since $U''$ satisfies this property it  follows that $U''\subseteq U$. Notice that if  $U''\subset U\neq \emptyset$, then the elements of $U\setminus U''$ would violate stability of $L,U''$. Thus, $U''=U$ and consequently this process outputs the upper set of the stable pair.

As a result, given an outcome $S$ of a GSP mechanism we can compute $L:=\{i\in S\mid b_i>\xi(i,S)\}$ and use this process to find the upper set $U$. The time-complexity of this algorithm is polynomial in the number of players assuming that we have polynomial-time  access  to every $\m{i}$ for all $L\subseteq U\subseteq \A$ and all $i \in U$. \qed
\end{document}